\newcommand{\parbreak}{ \vspace{0.3 em}}
\newcounter{protocol}
\renewcommand{\theprotocol}{\arabic{protocol}}
\newenvironment{protocol}[1]
  {\par\addvspace{\topsep}
   \noindent
   \tabularx{\linewidth}{@{} X @{}}
    \hline
    \refstepcounter{protocol}\textbf{Protocol \theprotocol} #1 \\
    \hline}
  { \\
    \hline
   \endtabularx
   \par\addvspace{\topsep}}
\begin{document}

\title{Network Oblivious Transfer via Noisy Broadcast Channels}

\author{
Hadi Aghaee\inst{1}
\and Christian Deppe\inst{1}
\and Holger Boche\inst{2,3}
}

\institute{
Institute for Communications Technology,
Technische Universit\"at Braunschweig,
Braunschweig, Germany
\and
Chair of Theoretical Information Technology,
Technische Universit\"at M\"unchen,
Munich, Germany
\and
Munich Quantum Valley (MQV) and the Munich Center for Quantum Science and Technology (MCQST)\\
\email{\email{\{hadi.aghaee, christian.deppe\}@tu-bs.de}, boche@tum.de}
}

\maketitle

\begin{abstract}
Oblivious transfer is a fundamental cryptographic primitive whose information-theoretic realization over noisy channels has been extensively studied in the point-to-point setting. In contrast, little is known about oblivious transfer over broadcast channels, where a sender simultaneously communicates with multiple receivers and security requirements become substantially more involved. In this paper, we investigate information-theoretic oblivious transfer over discrete memoryless broadcast channels with one sender and two receivers. We consider both non-colluding and colluding honest-but-curious user models and derive general converse bounds on the achievable oblivious transfer capacity region for each scenario. Furthermore, we propose two explicit oblivious transfer protocols for binary erasure broadcast channels. The first protocol achieves the oblivious transfer capacity region for independent non-colluding receivers, thereby providing a complete characterization of the achievable rate region in this setting. The second protocol remains secure even in the presence of colluding receivers and establishes new achievable rate regions under stronger security requirements. Our results provide a unified information-theoretic framework for oblivious transfer over broadcast channels and demonstrate how the network topology fundamentally influences achievable oblivious transfer rates and security guarantees.
\end{abstract}

\keywords{Oblivious transfer \and Broadcast channel \and Secrecy capacity}

\section{Introduction}
In modern communication systems, the broadcast channel plays a crucial role by enabling a sender to simultaneously transmit messages to multiple receivers. This shared medium is especially important in wireless networks, satellite communications, and public messaging systems, where efficiency and synchronization across users are essential. However, the openness of the broadcast channel introduces inherent security challenges: since all users can potentially receive the transmitted data, ensuring confidentiality, integrity, and controlled access becomes vital. Secure communication over broadcast channels often requires additional cryptographic or information-theoretic mechanisms to prevent unauthorized parties from extracting sensitive information \cite{Korner}. 

\parbreak
Oblivious Transfer (OT) as a cryptographic primitive was first introduced by Rabin \cite{Rabin1} and developed in subsequent groundbreaking works \cite{Kilian1, EGL}. A well-known definition is that a sender (Alice) sends two secret bits to a receiver (Bob). Bob is allowed to get only one bit and should be unaware of the other bit, while Alice remains unaware of which bit Bob received. A more general OT scheme could be defined when Alice possesses $m$ secret strings of length $k$. It is shown that OT is complete for passive and active adversaries, so that if OT is possible between two parties, then they can securely compute any function between themselves \cite{Kilian1}. It is also shown that noisy channels are valuable resources for cryptographic purposes \cite{Joe}. Within this context, OT over broadcast channels emerges as a fundamental cryptographic primitive that enables secure two-party communication in a shared environment. By exploiting the properties of the broadcast channel, such as noise, erasures, or differential channel conditions, OT protocols allow a sender to transmit multiple pieces of information while ensuring that the receiver can access only one piece without revealing their choice, and the sender remains unaware of which piece was obtained. This fusion of broadcasting and privacy-preserving computation highlights the deep interplay between communication theory and cryptographic security.

Recently, the information-theoretic study of oblivious transfer has also been extended beyond point-to-point channels to more general network communication models. In particular, oblivious transfer over noisy multiple access channels has been investigated under different security assumptions, including impossibility results in the presence of non-signaling correlations as well as achievable constructions for colluding honest-but-curious users \cite{Hadi-ISIT26,Hadi-EW26}. These works demonstrate that the underlying network topology and the available communication resources have a fundamental impact on the feasibility and achievable rates of oblivious transfer. In this paper, we complement these results by investigating oblivious transfer over noisy broadcast channels.

\parbreak
\parbreak
Shannon’s assumption that an adversary receives exactly the same message as the legitimate receiver stems from the idealization of error-free communication channels \cite{Shannon}. In practice, however, most communication channels are inherently noisy. It is primarily for application purposes that these noisy channels are transformed into nearly error-free ones—albeit with reduced information rates—through the use of error-correcting codes. This seemingly minor observation implies that Shannon’s assumption may be overly restrictive when the underlying noisy channels themselves are available for cryptographic use. Building on this idea, Wyner \cite{Wyner} proposed a communication model in which Alice transmits information to Bob over a discrete memoryless channel, while an eavesdropper, Eve, observes Bob’s channel output only through an additional, independent, degraded channel. Wyner demonstrated that in this (admittedly idealized) scenario, Alice can achieve near-perfect secrecy in communication with Bob, even without a pre-shared secret key.

\parbreak
In cryptography, information-theoretic (or unconditional) security is generally considered more advantageous than computational security for two main reasons. Firstly, it does not rely on any assumptions regarding the adversary’s computational capabilities. Secondly, perfect secrecy represents the most robust notion of security, thereby eliminating the need to justify weaker alternatives, such as merely limiting an adversary’s probability of correctly guessing bits of plaintext. Although perfect secrecy is often dismissed as impractical due to Shannon’s pessimistic inequality $H(K)\geq H(M)$ (perfect secrecy is attainable only if the secret key is at least as long as the plaintext message), it is shown that OT with perfect secrecy is not attainable if one of the players is an unbounded cheater \cite{Hadi-limit}. However, we only focus on the case where all players are honest-but-curious, meaning that they do not have active attacks.

In contrast to the point-to-point setting, a broadcast channel simultaneously serves multiple receivers whose observations are statistically dependent. This creates new challenges for oblivious transfer, since security and reliability have to be guaranteed for multiple users while accounting for possible collusion. To the best of our knowledge, a general information-theoretic characterization of OT over broadcast channels has not been available.

\parbreak This article is organized as follows: The semantic intuitions and motivations behind the problem are discussed in Section \ref{Sec: motiv}. Some seminal definitions are presented in Section \ref{Sec: pre}. Section \ref{Sec: rel} is devoted to related work. The system model and main results are presented in Sections \ref{Sec: sys} and \ref{Sec: main}, respectively. Finally, we have a brief discussion in Section \ref{Sec: conc}.

\section{Motivation}\label{Sec: motiv}
OT cannot be obtained from scratch. This is the main impossibility result for interactive cryptographic protocols that are based on \emph{indistinguishability}, such as bit commitment, zero knowledge, secret sharing, non-interactive zero-knowledge, and in a broader sense, secure two-party computation, even when only considering security against computationally efficient adversaries \cite{Dodis}. In other words, such protocols cannot be realized without additional assumptions or restrictions. In \cite{Dodis}, it is proved that OT, as one of the most important cryptographic primitives rooted in two-party secure computation, cannot be realized with imperfect randomness obtained from any imperfect entropy sources. More surprising is that this impossibility result is unconditional, meaning that it does not rely on any further assumption like bounding the cheating capability of a party by honest model assumptions, or any other computational assumptions. This impossibility is even valid in the presence of a slightly imperfect entropy source (\emph{Santha-Vazirani (SV) source} \cite{SV}) if that is the only available source of randomness. This general phenomenon states that virtually all cryptographic tasks requiring any form of privacy or secrecy are impossible to realize when relying on SV entropy sources, irrespective of the computational assumptions employed. For further information about technical details, we refer the reader to \cite[Lemma 3.1]{Dodis}. Information theory offers a valuable entropy source to compensate for a lack of entropy: \emph{Noisy channels}. 

\parbreak Although often regarded as a malicious phenomenon in traditional and modern communication systems, noisy channels can serve as valuable entropy sources for cryptographic applications. Although the overall behavior of noise in a communication channel can be tracked and identified through experimentation, its individual realizations remain unpredictable. It may be possible to identify the type of noise and its effect on a message sequence, but the same noise has a different effect on the same message, such that the probability of observing two identical noisy messages approaches zero as the length of the sequence goes to infinity. This confirms the main result of \cite{Dodis}, which says that if two functions $F$ and $G$ produce computationally indistinguishable outputs when feeding any slightly imperfect randomness entropy source as input $x$, then $F(x) = G(x)$ for almost all inputs $x$. If we consider noise as a function $F$ acting on an encoded sequence (SV sequence a slightly nonuniform sequence $\{0,1\}^n$) $x^n = x_1 x_2 \ldots x_n$ twice, then if output sequences (noisy observations) are identical, then function $F$ should be deterministic for almost all inputs $x^n$. This, however, contradicts the inherently random nature of noise.

\parbreak Assuming the presence of noisy channels opens many fascinating doors to us. This line of research began in the seminal works \cite{Rabin1, Joe}, studied in \cite{Rudolf1, Winter1, Winter3, Crepeau3}, and developed in \cite{Imai, Watanabe2,Watanabe3}. However, the problem of shared noisy channels is less studied in this direction. A shared noisy channel allows multiple users to simultaneously interact with a single communication medium, introducing correlated noise patterns that can be exploited to achieve information-theoretic security \cite{Hadi2,Hadi1}. In contrast, single noisy channels between each pair of parties require independent noise realizations. By leveraging a shared noisy channel, the system can naturally implement joint encoding and decoding strategies, reduce communication overhead, and enhance privacy guarantees, particularly in scenarios where multiple receivers must extract information from a common source without revealing their choices. Therefore, understanding and utilizing the structure of shared noisy channels is fundamental for extending oblivious transfer protocols from simple point-to-point settings to more general network models like broadcast channels and multiple access channels.


\parbreak

\section{Preliminaries}\label{Sec: pre}
We use the well-known notation of information theory.

Consider a DMC with a transition matrix 
$W=\lbrace W(y|x),x \in\mathcal{X}, y \in\mathcal{Y} \rbrace$. There are two assumptions:
\begin{enumerate}
    \item \emph{Free Resources}: Alice and Bob have unlimited computing power, independent local randomness, and access to a noiseless public communication channel for an unlimited number of rounds.
    \item \emph{Honest-but-Curious Model}: Both parties follow the protocol honestly but may use all available information to infer what they are supposed to remain ignorant about.
\end{enumerate} 

The general two-party protocol:
\begin{itemize}
    \item \emph{Initial Views:} Alice and Bob start with initial knowledge or views $U$ and $V$, respectively.
    \item \emph{Random Experiments:} Both parties generate random variables $M$ and $N$ independently of each other and $(U,V)$.
    
    \item \emph{Message Exchange:} Alice sends Bob a message $C_1$ as a function of $U$ and $M$. Bob responds with $C_2$, a function of $V$, $N$ and $C_1$.

    \item \emph{Alternating Messages:} In subsequent rounds, they alternately send messages $C_3, C_4,\dots, C_{2t}$, which are functions of their instantaneous views.
    \item \emph{Final Views:} At the end of the protocol, Alice's view is $(U,M,\mathbf{C})$ and Bob's view is $(V,N,\mathbf{C})$, where $\mathbf{C}=C_1,\dots, C_{2t}$.

\end{itemize}
In the source model, Alice starts with $(M_0,M_1,X^n)$ and Bob with $(Z,Y^n)$ where $(M_0,M_1)$ are binary strings uniformly distributed on $\lbrace0,1\rbrace^k$, and $Z\in \lbrace 0,1\rbrace$ is a binary bit.
In the channel model, Alice starts with $(M_0,M_1)$ and Bob with $Z$. The structure of the protocol is the same as in the source model.

\subsection*{1. Notation for Tuples}
\begin{itemize}
    \item For a sequence $\mathbf{X}$, we define
    \begin{align*}
        \mathbf{X} &= X^n = (X_1, X_2, \dots, X_n),\\
        \mathbf{x} &= x^n = (x_1, x_2, \dots, x_n).
    \end{align*}
    
    \item Suppose $\mathcal{A}\subset \mathbb{N}$. Then $(\mathcal{A})$ represents the tuple formed by arranging the elements of $\mathcal{A}$ in increasing order:
    \[
    (\mathcal{A}) = (a_i |\, a_i \in A, \, i = 1, 2, \dots, |\mathcal{A}|), \quad \text{with } a_i < a_{i+1} \text{ for } i \geq 1.
    \]
    \textit{Example}: If $\mathcal{A} = \{1, 3, 2, 9, 4\}$, then $(\mathcal{A}) = (1, 2, 3, 4, 9)$.
    
    \item For a set $\mathcal{A} \subset \{1, 2, \dots, k\}$, and a sequence $\mathbf{X}$, we have:
    \[
    \mathbf{X}|_{\mathcal{A}} = \mathbf{X}|_{(\mathcal{A})} = \big\{ x_i|\,i\in \mathcal{A}\big\}.
    \]
    \textit{Example}: If $\mathbf{X} = (a, b, c, d, e, f, g)$ and $\mathcal{A} = \{6, 3, 1\}$, then $\mathbf{X}|_{\mathcal{A}} = (a, c, f)$.
    \item When an element $i$ is removed from a set $\mathcal{F}$, we write: 
    \[
    \mathcal{F}\setminus\{i\}.
    \]
    \textit{Example:} Given $\mathcal{F}=\{a, b, c, d,e\}$, we have: $ \{a, b, d,e\}=\mathcal{F}\setminus\{c\}$.
\end{itemize}

\subsection*{2. Markov Chains}
Random variables $X, Y, Z$ form a Markov chain $X - Y - Z$ when $X$ and $Z$ are conditionally independent given $Y$. That is, if $X \in \mathcal{X}$, $Y \in \mathcal{Y}$, and $Z \in \mathcal{Z}$, then $X - Y - Z$ implies:
\[
\forall x \in \mathcal{X}, \, \forall y \in \mathcal{Y}, \, \forall z \in \mathcal{Z}: \quad P_{X,Z|Y}(x,z|y) = P_{X|Y}(x|y) \cdot P_{Z|Y}(z|y).
\]
\subsection*{3. Erasure Count Function}
Given a sequence $y \in \{0, 1, e\}^n$, where $e$ indicates an erasure. We denote the erasure count function by:
\begin{align*}
    \Delta(y^n) &= |\{i \in \{1, 2, \dots, n\} : y_i = e \}|, \\
    \overline{\Delta}(y^n) &= |\{i \in \{1, 2, \dots, n\} : y_i \neq e \}|,
\end{align*}
where $y_i$ is a realization of $Y$.
\subsection*{4. Information Theoretic Definitions}

The min-entropy of a discrete random variable $X $ is 
\[
H_\infty(X) = \min_x \log \left( \frac{1}{P_X(x)} \right).
\]

Its conditional version is 
\[
H_\infty(X|Y) = \min_y H_\infty(X|Y=y).
\]

The zero-entropy and its conditional version are defined as 
\[
H_0(X) = \log |\{ x \in \mathcal{X} : P_X(x) > 0 \}|,
\]
and 
\[
H_0(X|Y) = \max_y H_0(X|Y=y).
\]

The statistical distance over two probability distributions $P_X $ and $P_Y $, defined over the same domain $\mathcal{X} $, is 
\[
\| P_X - P_Y \| = \frac{1}{2} \sum_{x \in \mathcal{X}} |P_X(x) - P_Y(x)|.
\]

For $\varepsilon \geq 0 $, the $\varepsilon $-smooth min entropy is 
\[
H_\infty^\varepsilon(X) = \max_{X' : \| P_{X'} - P_X \| \leq \varepsilon} H_\infty(X').
\]

Similarly,
\[
H_\infty^\varepsilon(X|Y) = \max_{X'Y' : \| P_{X'Y'} - P_{XY} \| \leq \varepsilon} H_\infty(X'|Y').
\]

Let $P_{UVW}$ be a probability distribution over $\mathcal{U}\times\mathcal{V}\times\mathcal{W}$ For any $\varepsilon > 0 $ and $\varepsilon' > 0$ it holds that \cite{Holenstein}: 

\begin{equation}\label{eq: holenstein-1}
    H_\infty^{\varepsilon+\varepsilon'}(U|V,W) \geq H_\infty(U|W) + H_\infty^\varepsilon(V|U,W) - H_0(V|W) - \log(\frac{1}{\varepsilon'}).
\end{equation}
\parbreak Also, $H^{\varepsilon}_{\infty}(U,V|W)$ can be bounded from below and above as \cite{Holenstein}: 

\begin{align}\label{eq: holenstein-2}
    H_\infty^{\varepsilon+\varepsilon'}(U|V,W) + H_{0}(V|W) + \log(\frac{1}{\varepsilon'})&\leq H^{\varepsilon}_{\infty}(U,V|W)\notag\\
    & \leq H_{\infty}(U|W) + H^{\varepsilon}_{\infty}(V|U,W).
\end{align}

Combining \eqref{eq: holenstein-1} and \eqref{eq: holenstein-2}, yields: 
\begin{equation}\label{eq: holenstein-1 + holenstein-1}
    H^{\varepsilon}_{\infty}(U,V|W)\geq H_\infty(U|W) + H_\infty^\varepsilon(V|U,W). 
\end{equation}
\begin{lemma}\cite{Tomamichel2}\label{lemm: min-entropy vs smooth}
For finite classical random variables $U,V,W$ and small smoothing parameters
$\varepsilon,\varepsilon',\varepsilon''\ge 0, \varepsilon>\varepsilon'+2\varepsilon''$,
\begin{equation}\label{eqeq}
H_{\infty}^{\varepsilon+\varepsilon'+2\varepsilon''}(UV\mid W)
\ge
H_{\infty}^{\varepsilon'}(U\mid W)
+
H_{\infty}^{\varepsilon''}(V\mid UW)
-f,
\end{equation}
where $f$ is at most of order $O(\log \frac{1}{\gamma})$ when $\gamma = \varepsilon-\varepsilon'-2\varepsilon''$ is small. 
\end{lemma}

\begin{proof}
This is the classical specialization of the standard smooth min-entropy
chain rule in the quantum setting; see, e.g., Tomamichel's monograph on
quantum information processing with finite resources (smooth entropy
calculus) \cite{Tomamichel1,Tomamichel2}. Since a classical distribution is a diagonal density operator,
the quantum statement applies verbatim.
\end{proof}
We use the notation $\overline{A} \triangleq A \oplus 1$ to denote the binary complement of a bit $A \in \{0,1\}$, where $\oplus$ denotes the bitwise exclusive OR (XOR) operation. This convention is used throughout the paper for all binary random variables. For non-binary random variables, we mean the complement component of their sets. Example: $i \in \{1,2\}$, if $i = 2$ then $\overline{i}=1$.

\begin{definition}\label{Renyi entropy}
    Given a random variable $X$ with alphabet $\mathcal{X}$ and probability distribution $p_X$, the \textit{Rényi entropy of order two} of a random variable $X$ is given by:
    \[
    H_2 (X) = \log_2\left(\frac{1}{P_c(X)}\right).
    \]
    where the \textit{collision probability} $P_c(X)$ is the probability that two independent trials of $X$ produce the same outcome. It is defined as:
    \[
    P_c(X) = \sum_{x \in \mathcal{X}} p_X(x)^2.
    \]
    For a given event $\mathcal{E}$, the conditional distribution $p_{X|\mathcal{E}}$ is employed to define the \textit{conditional collision probability} $P_c(X|\mathcal{E})$ and the \textit{conditional Rényi entropy of order 2}, $H_2(X|\mathcal{E})$.
\end{definition}
\begin{lemma}\cite[Corollary 4]{Bennet}\label{entropy hash}
Let $p_{X,Y}$ be any probability distribution, where $X \in \mathcal{X}$, $Y \in \mathcal{Y}$, and $y$ is a specific element of $Y$. Assume that $H_2(X|Y = y) \geq c$ for some constant $c$. Let $\mathcal{K}$ be a universal class of functions mapping $X$ to $\{0,1\}^l$, and let $\kappa$ be sampled uniformly from $\mathcal{K}$. Then:
\begin{align*}
    H(\kappa(X)|\kappa, Y = y) & \geq l - \log(1 + 2^{l-c})\\
    & \geq l - \frac{2^{l-c}}{\ln 2}.
\end{align*}
\end{lemma}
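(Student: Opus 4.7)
The plan is to prove the lemma via the standard privacy amplification argument based on universal hashing, bounding the conditional collision probability and then passing from Rényi entropy to Shannon entropy by Jensen's inequality.

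First I would compute the expected conditional collision probability of $\kappa(X)$ under $\kappa$ chosen uniformly from $\mathcal{K}$, conditioned on $Y=y$. Writing $P_c(\kappa(X)\mid Y=y)=\sum_{z}\Pr[\kappa(X_1)=\kappa(X_2)=z\mid Y=y]$ for two i.i.d.\ copies $X_1,X_2$ of $X$ given $Y=y$, I would split into the diagonal contribution $X_1=X_2$ and the off-diagonal contribution $X_1\neq X_2$. The diagonal part gives exactly $P_c(X\mid Y=y)\le 2^{-c}$ by the hypothesis $H_2(X\mid Y=y)\ge c$. The off-diagonal part, after taking expectation over $\kappa$, uses the universality of $\mathcal{K}$: for any fixed $x_1\neq x_2$ we have $\Pr_\kappa[\kappa(x_1)=\kappa(x_2)]\le 2^{-l}$, which bounds that contribution by $2^{-l}$. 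Altogether,
\[
\mathbb{E}_\kappa\bigl[P_c(\kappa(X)\mid Y=y)\bigr]\;\le\;2^{-c}+2^{-l}\;=\;2^{-l}\bigl(1+2^{\,l-c}\bigr).
\]

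Next I would convert this collision bound into a Shannon entropy bound. Using $H\ge H_2$ pointwise in $\kappa$ and then Jensen's inequality (the function $-\log$ is convex), I would write
\[
H(\kappa(X)\mid\kappa,Y=y)\;\ge\;\mathbb{E}_\kappa\bigl[H_2(\kappa(X)\mid Y=y)\bigr]\;\ge\;-\log\mathbb{E}_\kappa\bigl[P_c(\kappa(X)\mid Y=y)\bigr],
\]
which combined with the bound above yields the first inequality $H(\kappa(X)\mid\kappa,Y=y)\ge l-\log(1+2^{l-c})$. The second inequality is immediate from the elementary estimate $\ln(1+t)\le t$ applied with $t=2^{l-c}$, giving $\log(1+2^{l-c})\le 2^{l-c}/\ln 2$.

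The main obstacle I anticipate is the correct handling of the Jensen step, because the two-step chain ``Shannon $\ge$ Rényi-2, then exchange expectation with $-\log$'' needs the convexity of $-\log$ to go in the right direction and the monotonicity of $H\ge H_2$ conditionally on the specific event $Y=y$ to commute properly with averaging over $\kappa$. Once this is set up cleanly, the universal hashing bound on off-diagonal collisions is routine and the algebra $2^{-c}+2^{-l}=2^{-l}(1+2^{l-c})$ closes the argument.
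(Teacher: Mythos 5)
Your proposal is correct and is essentially the standard argument behind the cited result: the paper does not prove this lemma itself but quotes it as \cite[Corollary 4]{Bennet}, and the proof there follows exactly your route --- bound $\mathbb{E}_\kappa[P_c(\kappa(X)\mid Y=y)]\le 2^{-c}+2^{-l}$ via the diagonal/off-diagonal split and universality, pass to Shannon entropy through $H\ge H_2$ pointwise in $\kappa$ and Jensen's inequality for the convex function $-\log$, and finish with $\log(1+t)\le t/\ln 2$. Your Jensen step is set up in the right direction, so no gap remains.
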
 
\begin{lemma}\cite[
Corollary 2.12]{Holenstein}\label{lemm: Holenstein}
    Let $P_{X^nY^n}$ be independent and identically distributed (i.i.d.) according to $P_{XY}$ over the alphabet $\mathcal{X}^n \times \mathcal{Y}^n$. For any $\varepsilon > 0$, we have
\[
H_\infty^\varepsilon(X^n | Y^n) \geq nH(X | Y) - 4\sqrt{n \log(1/\varepsilon)}\log|\mathcal{X}|.
\]
\end{lemma}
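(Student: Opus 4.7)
The plan is to prove this asymptotic-equipartition-style bound by exhibiting an explicit $\epsilon$-close distribution on $\mathcal{X}^n\times\mathcal{Y}^n$ whose min-entropy is controlled, via a typical-set construction combined with a concentration bound for the self-information sum.

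First I would define the \emph{typical event}
\[
\mathcal{T}=\left\{(x^n,y^n):-\tfrac{1}{n}\log P_{X^n\mid Y^n}(x^n\mid y^n)\ge H(X\mid Y)-\delta\right\},
\]
where $\delta=4\log|\mathcal{X}|\sqrt{\log(1/\epsilon)/n}$ is calibrated to match the target bound. On $\mathcal{T}$, by the very definition, $P_{X^n\mid Y^n}(x^n\mid y^n)\le 2^{-n(H(X\mid Y)-\delta)}$, which is precisely the per-row min-entropy estimate we want after restricting the distribution to $\mathcal{T}$. The next step is to verify $\Pr[(X^n,Y^n)\in\mathcal{T}]\ge 1-\epsilon$, and then to smooth by zeroing out (or trimming to subnormalization and renormalizing) the complement of $\mathcal{T}$; a standard calculation shows the resulting distribution $P_{X'^nY'^n}$ is within statistical distance $\epsilon$ of $P_{X^nY^n}$, so it is an admissible competitor in the $H_\infty^\epsilon$ maximization and attains $H_\infty(X'^n\mid Y'^n)\ge nH(X\mid Y)-4\sqrt{n\log(1/\epsilon)}\log|\mathcal{X}|$.

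The concentration step is the technical heart. Writing
\[
-\tfrac{1}{n}\log P_{X^n\mid Y^n}(X^n\mid Y^n)=\tfrac{1}{n}\sum_{i=1}^n\bigl(-\log P_{X\mid Y}(X_i\mid Y_i)\bigr),
\]
we face a sum of i.i.d.\ random variables with mean $H(X\mid Y)$. The obstacle is that the summands $-\log P_{X\mid Y}(X_i\mid Y_i)$ are \emph{unbounded} in general, so Hoeffding's inequality does not apply directly. I would resolve this by conditioning on $Y^n$ and working with a McDiarmid/Azuma bounded-differences argument: for each fixed $y$, the function $x\mapsto -\log P_{X\mid Y}(x\mid y)$, viewed as a random variable under $P_{X\mid Y=y}$, has expectation $H(X\mid Y{=}y)\le\log|\mathcal{X}|$, and the contribution of very improbable atoms can be truncated at a level $O(\log|\mathcal{X}|\sqrt{n\log(1/\epsilon)})$ with total probability mass $\ll\epsilon$. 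After this truncation, a Hoeffding bound with effective range $\log|\mathcal{X}|$ gives
\[
\Pr\!\left[\sum_{i=1}^n -\log P_{X\mid Y}(X_i\mid Y_i)<nH(X\mid Y)-t\,\Big|\,Y^n\right]\le\exp\!\left(-\tfrac{t^2}{2n\log^2|\mathcal{X}|}\right),
\]
and setting $t=4\sqrt{n\log(1/\epsilon)}\log|\mathcal{X}|$ makes the right-hand side at most $\epsilon$, after averaging over $Y^n$.

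Combining the two steps completes the argument: the typical set $\mathcal{T}$ has $P_{X^nY^n}$-mass at least $1-\epsilon$, and its restriction $P_{X'^nY'^n}$ is within statistical distance $\epsilon$ of $P_{X^nY^n}$ while guaranteeing $H_\infty(X'^n\mid Y'^n)\ge nH(X\mid Y)-4\sqrt{n\log(1/\epsilon)}\log|\mathcal{X}|$; the definition of smooth min-entropy then yields the claim. I expect the main obstacle to be the truncation bookkeeping that controls the unboundedness of $-\log P_{X\mid Y}$ while preserving the clean coefficient $\log|\mathcal{X}|$ in the bound — all other steps (the typical-set definition, the smoothing construction, and the final union bound) are essentially mechanical once the concentration estimate is in place.
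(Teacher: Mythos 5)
First, a point of comparison: the paper does not prove this lemma at all — it is imported verbatim from Holenstein's thesis (Corollary 2.12) — so there is no in-paper argument to measure you against; your proposal has to stand on its own, and its overall architecture (typical event, mass $1-\epsilon$, smoothing onto it) is indeed the standard route that Holenstein and Renner follow.

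However, the step you yourself call the technical heart does not go through as written. After truncating the summands $-\log P_{X\mid Y}(X_i\mid Y_i)$ at a level $T = O\bigl(\log|\mathcal{X}|\sqrt{n\log(1/\epsilon)}\bigr)$, the truncated variables range over $[0,T]$, so Hoeffding gives an exponent of order $t^2/(nT^2)$, not $t^2/(n\log^2|\mathcal{X}|)$; with your $T$ and $t=4\sqrt{n\log(1/\epsilon)}\log|\mathcal{X}|$ the exponent is $O(\log(1/\epsilon)/n)$ — vacuous. Nor can you truncate at $\log|\mathcal{X}|$: values such as $-\log P_{X\mid Y}(X_i\mid Y_i)\approx\log n$ occur with probability $\Theta(1/n)$ per coordinate and hence typically appear in the sample; conditional self-information is unbounded, only its \emph{expectation} is at most $\log|\mathcal{X}|$. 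The genuine proof needs a concentration inequality tailored to self-information: either a lower-tail second-moment (Bernstein/Maurer-type) bound combined with a proof that $\mathbb{E}\bigl[(\log(1/P_{X\mid Y}(X\mid Y)))^2\bigr]$ is bounded by a constant times $\log^2|\mathcal{X}|$ (exactly where constants of the form $\log(|\mathcal{X}|+3)$ arise in Holenstein–Renner), or a direct moment-generating-function estimate of $\mathbb{E}\bigl[P_{X\mid Y}(X\mid Y)^{s}\bigr]=\sum_{x,y}P_Y(y)P_{X\mid Y}(x\mid y)^{1+s}$. As it stands you assert the needed inequality rather than prove it, and the clean coefficient $4\log|\mathcal{X}|$ is precisely what truncation-plus-Hoeffding cannot deliver.

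A second, smaller gap: under the paper's definition $H_\infty(X'\mid Y')=\min_{y}H_\infty(X'\mid Y'=y)$, "zeroing out the complement of $\mathcal{T}$ and renormalizing" does not immediately yield the per-row bound. For a row $y^n$ in which most of the conditional mass is bad, renormalization inflates the surviving conditional probabilities by an uncontrolled factor. You need an extra step — e.g., discard entire rows whose conditional bad mass exceeds a threshold (via Markov's inequality), or reassign the removed mass so that no row's maximal conditional probability increases — and this bookkeeping consumes part of the $\epsilon$-budget. It is fixable, but it is not "essentially mechanical" and should be spelled out.
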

\begin{definition}\label{Hash}
A function $h: \mathcal{R} \times \mathcal{X} \to \{0,1\}^n$ is a \textit{two-universal hash function} \cite{Carter} if, for any $x_0 \neq x_1 \in \mathcal{X}$ and for $R$ uniformly distributed over $\mathcal{R}$, it holds that
\begin{equation}\label{eq: hashs}
    \Pr(h(R, x_0) = h(R, x_1)) \leq 2^{-n}.
\end{equation}

Similarly, given two independent hash functions $h_1: \mathcal{R} \times \mathcal{X} \to \{0,1\}^n$ and $h_2: \mathcal{T} \times \mathcal{Y} \to \{0,1\}^m$, for any $x_0 \neq x_1 \in \mathcal{X}$, $y_0 \neq y_1 \in \mathcal{Y}$, and for $R, T$ uniformly distributed over $\mathcal{R}$ and $\mathcal{T}$, respectively, it holds that
\begin{equation}\label{eq: hashs-double}
    \Pr(h_1(R, x_0) = h_1(R, x_1)\cap h_2(T, y_0) = h_2(T, y_1)) \leq 2^{-(n+m)}.
\end{equation}

 An example of a two-universal class is the set of all linear mappings from $\{0,1\}^n$ to $\{0,1\}^r$. 
 \end{definition}
\begin{lemma} \cite{Winter1, Jurg} (Distributed leftover hash lemma)\label{DLHL}
    Let $\varepsilon > 0$, $\varepsilon' \geq 0$, and let $g_i: \mathcal{T}_i \times \mathcal{X}_i \to \{0, 1\}^{n_i}$ for $1 \leq i \leq m$ be two-universal hash functions. Let $X_i$ be random variables taking values in $\mathcal{X}_i$, $1 \leq i \leq m$, where for any subset $\mathcal{S} \subseteq \{1, 2, \dots, m\}$, and $\mathbf{X}|_\mathcal{S} = X_{\mathcal{S}(1)}, X_{\mathcal{S}(2)}, \dots, X_{\mathcal{S}(|\mathcal{S}|)}$, we have
\[
H_\infty^{\varepsilon'}(\mathbf{X}|_\mathcal{S} | Z) \geq \sum_{i \in \mathcal{S}} n_i + 2 \log(1/\varepsilon),
\]
where $T_1, \dots, T_m$ are uniformly distributed over $\mathcal{T}_1 \times \dots \times \mathcal{T}_m$, and are independent of $X_1, \dots, X_m$, and $Z$. Here, \(Z\) denotes arbitrary side information jointly distributed with \(X_1,\dots,X_m\), and independent of the hash seeds \(T_1,\dots,T_m\). Then, it holds that the tuple \[
(g_1(T_1, X_1), \dots, g_m(T_m, X_m))\] is $(2^m \varepsilon / 2 + 2^m \varepsilon')$-close to uniform with respect to $T_1, \dots, T_m, Z$.
\end{lemma}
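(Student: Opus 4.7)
The plan is to establish the Distributed Leftover Hash Lemma via the classical $L_2$-norm / collision-probability route used in the single-hash case, adapted to the multi-source setting. The central quantity to control is the averaged collision probability of the hashed tuple,
\[
\mathbb{E}_{T_1,\ldots,T_m}\!\left[\sum_{u_1,\ldots,u_m} \Pr\!\left(g_i(T_i,X_i)=u_i \text{ for all } i \,\big|\, T_1,\ldots,T_m, Z\right)^{2}\right],
\]
which equals the probability that two independent copies of $(X_1,\ldots,X_m)$ coincide after component-wise hashing. Using the two-universality property of each $g_i$ and indexing by the set $\mathcal{S}$ of coordinates on which the two copies of $\mathbf{X}$ agree, this quantity decomposes into a sum over all subsets $\mathcal{S}\subseteq\{1,\ldots,m\}$ of terms of the form $P_c(\mathbf{X}|_{\mathcal{S}}\mid Z)\cdot\prod_{i\notin\mathcal{S}} 2^{-n_i}$, where the collision-probability factor accounts for the $\mathcal{S}$-coordinates and the hash product accounts for the non-colliding coordinates.

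Next, I would upper-bound each $P_c(\mathbf{X}|_{\mathcal{S}}\mid Z)$ by invoking the smooth min-entropy hypothesis for exactly that subset: the assumption guarantees a nearby variable $\tilde{\mathbf{X}}_{\mathcal{S}}$ at statistical distance at most $\epsilon'$ satisfying $H_\infty(\tilde{\mathbf{X}}_{\mathcal{S}}\mid Z)\geq \sum_{i\in\mathcal{S}} n_i + 2\log(1/\epsilon)$, whose collision probability is at most $\epsilon^{2}\cdot 2^{-\sum_{i\in\mathcal{S}} n_i}$. Summing the resulting bounds over the $2^m$ subsets produces the extra factor of $2^m$, and Cauchy--Schwarz converts the $L_2$ bound on the hashed distribution (conditioned on $T_1,\ldots,T_m,Z$) into a statistical-distance bound of the form $2^m\epsilon/2$, on average over the hash keys. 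The smoothing error accumulates additively over subsets, yielding the companion term $2^m\epsilon'$.

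The principal obstacle is managing the smoothing operation coherently across all $2^m$ subsets, because the clean inequality $P_c\leq 2^{-H_\infty}$ holds for ordinary min-entropy whereas the hypothesis is stated in smoothed form. One must verify that replacing $\mathbf{X}|_{\mathcal{S}}$ by its smoothed surrogate perturbs the joint distribution of $\bigl(g_1(T_1,X_1),\ldots,g_m(T_m,X_m),T_1,\ldots,T_m,Z\bigr)$ by no more than $\epsilon'$ in statistical distance for each $\mathcal{S}$, so that the total smoothing penalty is the additive $2^m\epsilon'$. An alternative, cleaner route is induction on $m$: the base case $m=1$ is the standard smooth Leftover Hash Lemma, and the inductive step applies that lemma to $g_m(T_m,X_m)$ together with the chain rule for smooth min-entropy \eqref{eq: holenstein-1} to control $H_\infty^{\epsilon''}\!\bigl(\mathbf{X}|_{\{1,\ldots,m-1\}}\mid Z, T_m, g_m(T_m,X_m)\bigr)$ before invoking the induction hypothesis on the remaining $m-1$ hashes. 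This inductive approach avoids the explicit enumeration of subsets but still exhibits the doubling behavior at each step that produces the final $2^m$ factor.
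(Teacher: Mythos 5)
The paper itself offers no proof of Lemma~\ref{DLHL}: it is imported as a known result from the cited sources \cite{Winter1,Jurg}, so there is no in-paper argument to compare yours against; I can only assess your sketch on its own terms, and it is essentially the standard argument from that literature. The skeleton is right: decompose the averaged collision probability of the hashed tuple according to the set $\mathcal{S}$ of coordinates on which two independent copies of $\mathbf{X}$ agree, use two-universality to charge $2^{-n_i}$ for each $i\notin\mathcal{S}$, bound the $\mathcal{S}$-term by the (smooth) min-entropy hypothesis for that subset, and convert the resulting $L_2$ bound into a statistical-distance bound; in the non-smoothed case this even yields the stronger $2^{m/2}\epsilon/2$, so the stated $2^m\epsilon/2$ is safely covered.

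The one genuinely delicate point is the one you flag, and as literally written your repair does not yet close it. The collision-probability decomposition is an inequality for \emph{one fixed} joint distribution of $(X_1,\dots,X_m,Z)$, whereas the hypothesis supplies, for each of the $2^m$ subsets, a \emph{different} surrogate distribution of $(\mathbf{X}|_{\mathcal{S}},Z)$ that need not be the marginal of any common joint; you therefore cannot substitute a different surrogate into each term of the same decomposition, and "replacing $\mathbf{X}|_{\mathcal{S}}$ by its smoothed surrogate perturbs the hashed joint by at most $\epsilon'$" only controls one replacement at a time, not all simultaneously. The standard ways to make this rigorous are: (i) convert each smooth-min-entropy bound into a high-probability event (or sub-normalized distribution) on which the corresponding non-smooth min-entropy bound holds, intersect the $2^m$ events---this is precisely where the additive $2^m\epsilon'$ arises---and then run your non-smooth collision argument on the conditioned distribution; or (ii) your alternative induction on $m$, using the single-hash smooth leftover hash lemma together with the chain rule \eqref{eq: holenstein-1}, where one must check that conditioning on $(T_m,g_m(T_m,X_m))$ costs at most $H_0\le n_m$ bits (which is exactly what the hypothesis for subsets containing $m$ absorbs) and that the smoothing parameters compound to the $2^m$ factor. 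With either of these repairs spelled out, your outline is a correct proof of the lemma along the same lines as the cited references.
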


\begin{lemma}\cite[Lemma 3]{Rudolf1}\label{lemma: Rudolf}
    Let $X$, $Y$, and $Z$ be random variables defined on the finite sets $\mathcal{X}$, $\mathcal{Y}$, and $\mathcal{Z}$, respectively. For any $z_1, z_2 \in \mathcal{Z}$ with $p \triangleq \mathbb{P}[Z = z_1] > 0$ and $q \triangleq \mathbb{P}[Z = z_2] > 0$, the following inequality holds:
\[
\left\lvert H(X | Y, Z = z_1) - H(X | Y, Z = z_2) \right\rvert \leq 1 + 3 \log |\mathcal{X}| \sqrt{\frac{(p + q) \ln 2}{2pq} I(X,Y; Z)}.
\]
\end{lemma}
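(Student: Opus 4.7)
The plan is to reduce the lemma to two well-known ingredients: Pinsker's inequality linking total variation to Kullback--Leibler divergence, and a Fannes-type continuity bound for conditional Shannon entropy. Set $P_i \triangleq P_{XY \mid Z = z_i}$ for $i \in \{1,2\}$ and $P \triangleq P_{XY}$, and write $D_i \triangleq D(P_i \| P)$. The first observation is that $I(X,Y;Z)$ decomposes as the expectation over $Z$ of $D(P_{XY \mid Z} \| P)$, so in particular $p D_1 + q D_2 \leq I(X,Y;Z)$.

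From this I would control the total variation distance $\delta \triangleq \|P_1 - P_2\|$ in the paper's $(1/2)\sum |\cdot|$ normalization. The triangle inequality combined with Pinsker's inequality in base-$2$ form gives $\|P_i - P\| \leq \sqrt{(\ln 2/2)\, D_i}$, and then weighted Cauchy--Schwarz splits $\sqrt{D_1} + \sqrt{D_2}$ as $\tfrac{1}{\sqrt{p}}\sqrt{p D_1} + \tfrac{1}{\sqrt{q}}\sqrt{q D_2} \leq \sqrt{\tfrac{p+q}{pq}}\, \sqrt{p D_1 + q D_2}$, so that
\[
\delta \;\leq\; \sqrt{\tfrac{(p+q)\ln 2}{2pq}\, I(X,Y;Z)},
\]
which is precisely the square-root factor in the claim.

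The final step is a continuity-of-conditional-entropy bound of Alicki--Fannes--Winter type: if $P$ and $Q$ are joint distributions on $\mathcal{X} \times \mathcal{Y}$ with $\|P - Q\| \leq \delta$, then $|H_P(X \mid Y) - H_Q(X \mid Y)| \leq 3\delta \log|\mathcal{X}| + h(\delta)$, and absorbing $h(\delta) \leq 1$ into the additive constant gives the claimed $1 + 3 \log|\mathcal{X}| \cdot \delta$. This last step is the main obstacle: a naive application of standard Fannes to $H(X,Y)$ and $H(Y)$ separately would introduce an unwanted $\log|\mathcal{Y}|$ term, so one must use the conditional variant whose $Y$-side cancellation isolates $\log|\mathcal{X}|$. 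Tracking the precise constants (the factor $3$ in front of $\delta \log|\mathcal{X}|$ and the additive $1$) is delicate but routine once the correct continuity lemma is invoked; chaining it with the Pinsker--Cauchy--Schwarz estimate above then yields the lemma.
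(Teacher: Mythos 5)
The paper never proves this lemma---it is imported verbatim from Ahlswede--Csisz\'ar \cite[Lemma 3]{Rudolf1}---and your reconstruction is correct and follows essentially that standard route: decomposing $I(X,Y;Z)$ as the $P_Z$-average of $D(P_{XY|Z=z}\|P_{XY})$, applying Pinsker plus the weighted Cauchy--Schwarz split to bound the total variation between $P_{XY|Z=z_1}$ and $P_{XY|Z=z_2}$ by exactly $\sqrt{\tfrac{(p+q)\ln 2}{2pq}I(X,Y;Z)}$, and then invoking a conditional-entropy continuity bound whose constant depends only on $\log\lvert\mathcal{X}\rvert$. The one place to be precise is the final step: the classical Alicki--Fannes--Winter bound reads $\lvert H_P(X|Y)-H_Q(X|Y)\rvert \le \delta\log\lvert\mathcal{X}\rvert + (1+\delta)\,h\!\left(\tfrac{\delta}{1+\delta}\right)$ for half-$L_1$ distance $\delta$, which does imply the required $1+3\delta\log\lvert\mathcal{X}\rvert$ (with slack, and trivially whenever the bound exceeds $\log\lvert\mathcal{X}\rvert$), so you should cite it in that form rather than as ``$3\delta\log\lvert\mathcal{X}\rvert+h(\delta)$'' and keep the same normalization of total variation in both Pinsker and the continuity lemma.
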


\parbreak 
\section{Related Work}\label{Sec: rel}
Nascimento and Winter reduced a general noisy correlation (point-to-point channel) to a slightly unfair noisy symmetric basic correlation (SU-SBC). They also showed that any imperfect noisy point-to-point channel/correlation could be used to obtain a certain slightly unfair noisy channel/correlation (SUNC/SUCO), and any SUNC/SUCO could be used to implement a certain SU-SBC. Finally, they proved that in such a reduced model, one can implement $\binom{m}{1}-\text{OT}^k$ (sending $m$ secret sequences of length $k$ by a sender and choosing only one message by a receiver) at a positive rate if Alice (sender) is honest-but-curious \cite{Winter1,Winter3}. These papers are important for the following reasons: 1- The concept of oblivious transfer capacity of a Discrete Memoryless Channel (DMC) is defined for the first time as the supremum of all achievable rates $R$ such that $\frac{k}{n}\geq R - \gamma, \gamma> 0$, where $n$ is the number of channel uses. 2- They also considered the problem in a malicious model where a malicious player could arbitrarily deviate from the prescribed channel statistics in only $\delta n$ positions; otherwise, he/she will be caught by the other player with a certain probability. Ahlswede and Csisz\'ar studied the same problem in the honest-but-curious model without reducing the general model \cite{Rudolf1}. They calculated an upper bound on the OT capacity of a point-to-point DMC that matches the OT-capacity bound of Nascimento and Winter. For a general noisy point-to-point DMC reduced to a special erasure channel, they also proved a lower bound on OT capacity. In short, the OT capacity of a noisy DMC is unknown in general. In fact, for general channels, we have to first convert the channel to a Generalized Erasure Channel (GEC) via alphabet extension and erasure emulation and then apply the general construction for GEC.

More recently, oblivious transfer has been investigated in network communication settings beyond point-to-point channels. Information-theoretic impossibility results for oblivious transfer over noisy multiple access channels in the presence of non-signaling correlations were established in \cite{Hadi-ISIT26}. Furthermore, achievable oblivious transfer protocols for noisy multiple access channels with colluding honest-but-curious users were proposed in \cite{Hadi-EW26}. These contributions highlight the influence of network topology and adversarial assumptions on the achievable OT capacity and motivate the present investigation of broadcast channels.

\parbreak In the following, we present the bounds achieved by Ahlswede and Csiszár:

\begin{itemize}
    \item Upper bound: The OT capacity of Discrete Memoryless Multiple Source (DMMS) with generic random variables $X, Y$ or of a DMC $\lbrace W \rbrace$ is bounded above by 

    \begin{equation}\label{Upper}
        C_{\text{OT}}\leq\,\min \Big[\max_{P_X}I(X;Y), \max_{P_X}H(X|Y)\Big].
    \end{equation}
    
    \item Lower bound: Consider the following form for a DMC $\lbrace W \rbrace$:
    \[
    W(y|x)= \begin{cases}
             (1-p)\,W_0(y|x)  & x\in\mathcal{X}, y\in \mathcal{Y}_0\\
             p\,W_1(y|x)  & x\in\mathcal{X}, y\in \mathcal{Y}_1
       \end{cases},
    \]
   
    where outputs $y\in\mathcal{Y}_1$ carry no information about the input and they are interpreted as erasures.
    For such a case and any distribution $P$ on $\mathcal{X}$
    \begin{equation}\label{Lower}
        C_{\text{OT}}\geq \left[I(U;Y^{(0)})-I(U;Y^{(1)})\right].\min(p, 1-p),
    \end{equation}
    where $U$ is any random variable and $X, Y^{(0)}, Y^{(1)}$ are random variables with $P_{XY^{(j)}}(x,y) = P(X)W_{j}(y|x)$, $j=0, 1$, such that $U\rightarrow X \rightarrow (Y^{(0)}, Y^{(1)})$ is a Markov chain. This means that the OT capacity is bounded below by $\min (p, 1-p)$ times the secrecy capacity of the wiretap channel with component channels $W_0, W_1$. 
\end{itemize}
\parbreak
The structure of Even, Goldreich, and Lempel (EGL) oblivious transfer protocol \cite{EGL} is widely used in information theory \cite{Winter1, Winter3, Rudolf1, Imai}: Consider the $1$-out-of-$2$ version of OT. Alice possesses two random variables $x_0$ and $x_1$ corresponding to wired labels (messages) $m_0$ and $m_1$, respectively. First, she shares an RSA key with Bob over a public channel (Key$_{public}=K_{A_1}(e,n)$). Then she sends her messages over a noisy channel to Bob. Bob also enters a bit $b\in \lbrace 0,1 \rbrace$ into the channel. So his choice will be $x_b$. Then he encrypts his chosen message using the public RSA key and chooses a random value $(r)$ as $\nu=x_b+r^r \mod n$ and sends it to Alice. Then Alice tries to decrypt Bob's message using her private key (Key$_{private}=K_{A_2}(d,n)$). If Bob's choice is $x_1$, then Alice finds $k_1=(\nu-x_1)^d \mod n = r$ and obtains no useful information about $k_0=(\nu-x_1-x_0)^d \mod n$. This simply shows that the protocol is secure for Bob. Then, Alice sends two new wired labels $m'_0=m_0+k_0$ and  $m'_1=m_1+k_1$ to Bob. He also cannot learn more than $x_b=x_1$. This shows that the protocol is secure for Alice. 

\section{System model}\label{Sec: sys}
\subsection{OT over a Discrete Memoryless Broadcast Channel (DM-BC)}
A DM-BC $(\mathcal{Y}_1\times \mathcal{Y}_2, p_{Y_1,Y_2|X}, \mathcal{X})$, where $X\in\mathcal{X}\triangleq \{0,1\}$, and $Y_i\in\mathcal{Y}_i\triangleq\{0,1\}, i\in\{1,2\}$ has one sender and two or more receivers. Here, we consider the simplest form of DM-BC with one sender (Alice) and two receivers (Bob-1, Bob-2). Alice has four strings/messages $(m_{i0},m_{i1}), i\in \{1,2\}$ where $M_{i0},M_{i1}$ are independently and uniformly distributed over $\{0,1\}^{k_i}$. Alice wants to implement an $\binom{2}{1}-\text{OT}^{k_i}$ protocol with Bob-$i$, $i\in\{1,2\}$. Alice sends $(m_{i0},m_{i1})$ over the noisy broadcast channel to Bob-$i$, and Bob-$i$ has to choose one of them by inputting to the channel $Z_i\in\{0,1\}$. Bob-$i$ should be unaware of the unselected message $m_{i\overline{Z}_i}$ while Alice is unaware of the chosen message by Bob-$i$ $(m_{iZ_i})$. At the end of the protocol, Bob-1 and Bob-2 get $m_{1Z_1}$ and $m_{2Z_2}$, respectively, while Alice gets nothing.

\parbreak 
This structure was first studied in \cite{mishra} wherein the DM-BC is made up of two independent BECs: BEC($p_i$) from Alice to Bob-$i$, $i\in\{1,2\}$ where $p_i$ is the erasure probability. As is apparent, this channel model is restricted and considers only a traditional model of noisy channels based on BECs. Here, we aim to investigate the OT problem over a general DM-BC with honest-but-curious parties. 

\parbreak Consider Fig. \ref{fig: both}-$(a)$. First, general multi-party protocols based on the DM-BC should be described. 

\subsubsection{Noiseless Protocol}

\parbreak Assume Alice, Bob-1, and Bob-2 possess initial knowledge or views $U', V'_1$, and $V'_2$, respectively, where $U', V'_1$, and $V'_2$ are not necessarily independent random variables. At the beginning of the protocol, the players independently perform random experiments to generate random variables $R_{A}, R_{B_1}$, and $R_{B_2}$, respectively, such that $(R_{A}, R_{B_1}, R_{B_2})$ and $(U', V'_1, V'_2)$ are mutually independent. Alice then sends a message $C_{i,1},i\in\{1,2\}$, which is a function of $U'$ and $R_{A}$, to Bob-$i$ over the noiseless public channel. Bob-$i$ responds with message $C_{i,2}$ each to Alice as a function of $V'_i, R_{B_i}$ and $C_{i,1}$. The random variables $R_{A}$ and $R_{B_i}$ model the randomization in Alice's and Bob-$i$'s choices of $C_{i,1}$ and $C_{i,2}$ as well as in their subsequent actions. 

\parbreak In subsequent rounds, Alice and Bob-$i$ alternately send messages \[ C_{i,3}, C_{i,4}, \cdots, C_{i,2t}\] where each $C_{i,K}$ depends on their current view. Specifically $C_{i,K}$ is a function of $U', R_{A}$ and $\{C_{i,j}, j\leq K\}$ if $j$ is odd and of $V'_i, R_{B_i}$ and $\{C_{i,j}, j\leq K\}$ if $j$ is even. At the end of the protocol, Alice's view is $(U', R_{A}, \mathbf{C}_i)$ where $\mathbf{C}_i$ denotes the concatenation of all messages exchanged $C_{i,1}, \cdots, C_{i,2t}$ over the noiseless public channel.

\parbreak For this discussion, Alice and Bob-$i$ are assumed to have the following resources available for free: (i) Unlimited computational power. (ii) Local randomness from independent random experiments. (iii) A noiseless public channel for unlimited communication in any number of (finite) rounds. These resources alone, however, are insufficient for OT. Two models incorporating additional resources can be defined:

\begin{enumerate}
    \item \emph{Source Model}: Defined by a Discrete Memoryless Multiple Sources (DMMS) consisting of i.i.d. repetitions $(X_{l}, Y_{1,l}, Y_{2,l}),\\ l = \{1, 2, \cdots, n\}$ of generic random variables $(X, Y_1, Y_2)$ taking values in finite sets $\mathcal{X}$, $\mathcal{Y}_1$, and $\mathcal{Y}_2$. Alice observes $X_{l}$ and Bob-$i$ observes $Y_{i,l}$ at the $l$-th access.
    \item \emph{Channel Model}: Defined by a DM-BC with finite input and output alphabets $\mathcal{X}$, $\mathcal{Y}_1$, and $\mathcal{Y}_2$ where the conditional probability of Bob-1 and Bob-2 receiving $y_1, y_2$, respectively, when Alice sends $x$, is $W(y_1,y_2|x)$. Alice selects an input $X_{l}$, and Bob-$i$ observes the corresponding output $Y_{i,l}$ at the $l$-th access.
\end{enumerate}

\parbreak The cost of an OT protocol in either model is the number of accesses to the DMMS and DMC. The total capacity $C_{\text{OT}}$ of a DMMS or DMC is defined as the limit of $\frac{1}{n}$ times the largest $k_i$ for which OT is possible with cost $n$, as $n\rightarrow \infty$. 

\subsubsection{Noisy Protocol}\label{noisy protocol:BC}
    \parbreak A noisy protocol utilizing $n$ accesses to DM-BC ($\mathcal{W}: \mathcal{X}\rightarrow \mathcal{Y}_1\mathcal{Y}_2$) operates as follows: Alice and Bob-$i$ start with their initial views represented by random variables $U'$ and $V_i$, respectively, and generate random variables $R_{A}$ and $R_{B_i}$ as previously described. Alice then chooses the input $X$ for the DM-BC as a function of $U'$ and $R_{A}$, and Bob-$i$ observes the corresponding output $Y_i$. 

\parbreak Subsequently, they engage in a session of public communication following a noiseless protocol, where their initial views are updated to $(U', R_{A})$ and \[(V'_i, R_{B_i}, Y_i),\] respectively. Note that $X_{1}$ does not need to be part of Alice's view since it is already determined by $U', R_{A}$. In the course of this and any subsequent public communication sessions, Alice and Bob-$i$ use the original random variables $R_{A}$ and $R_{B_i}$ for all necessary randomization. DM-BC accesses alternate with public communication sessions. Let $C_i^l$ represent the cumulative public communication up to the $l$-th session. Before the $l$-th DM-BC access, Alice's view is $(U', R_{A}, C_i^{l-1})$. Alice selects the DM-BC input $X_{l}$ based on their views, and Bob-$i$ observes the output $Y_{i,l}$. Conditioned on $X_{l} = x$, the random variable $Y_{i,l}$ is independent of $U', V'_i, R_{A}, R_{B_i}, Y_i^{l-1}, C_i^{l-1}$, and follows the distribution $W(.\,,.|x)$. During the $l$-th public communication session, Alice and Bob-$i$ execute a noiseless protocol where their views are $(U', R_{A}, C_i^{l-1})$ and $V'_i, R_{B_i}, Y_i^l, C_1^{l-1},C_2^{l-1}$, respectively. The protocol concludes after $n$ public sessions, with Alice's final view as $(U, R_{A},\mathbf{C}_i)$ and Bob's as $(V_i, R_{B_i}, Y_i^n, \mathbf{C}_1 \mathbf{C}_2)$, where $\mathbf{C}_i = C_i^n$ represents the total public discussion between Alice and Bob-$i$. Alice's knowledge of $X^n = (X_{1}, \cdots, X_{n})$ is implicit as it is a function of $(U', R_{A}, \mathbf{C}_i)$. 

\parbreak Admissible protocols for the cost-$n$ oblivious transfer of length-$k_i$ messages, or briefly $(n,k_1,k_2)$ OT protocols, are defined as follows. Here, \[X^n = (X_{1}, \cdots, X_{n})\] and $Y_i^n = (Y_{i,1}, \cdots, Y_{i,n})$ represent, in the source models, the source output sequences observed by Alice and Bob-$i$, and in the channel model, the sequences of DM-BC inputs selected by Alice and outputs observed by Bob-$i$. In the source model, Alice and Bob-$i$ can use any noiseless protocol where their initial views are $U' = (M_{i0}, M_{i1}, X^n)$ and $V'_i = (Z_1, Z_2, Y_i^n)$. Here $M_{i0}$ and $M_{i1}$, representing the two binary strings given to Alice, are uniformly distributed over $\{0,1\}^{k_i}$, and $Z_i$ representing the bits given to Bob, is uniformly distributed over $\{0,1\}$. $M_{i0}, M_{i1}, M_{iZ_i}, (X^n, Y_i^n)$ are mutually independent. In the channel model, any noisy protocol with $n$ accesses to the DM-BC is permissible, where the initial views are $U' = (M_{i0}, M_{i1})$ and $V'_i = Z_i$ with $M_{i0}, M_{i1}$ uniformly distributed over $\{0,1\}^{k_i}$, and $Z_i$. In both models, Bob-$i$ produces an estimate $\hat{M}_{iZ_i}$ of $M_{iZ_i}$ as a function of his view $(Z_i, R_{B_i}, Y_i^n, \mathbf{C}_i)$ upon completing the protocol.

\subsubsection{Collusion scenarios}
\parbreak
We consider two scenarios: i) colluding parties and ii) non-colluding parties.
If none of the players collude with others, then the general secrecy criteria for OT links should be satisfied. If the receivers are colluding parties, then Bob-1 can collude with Bob-2 and Alice, or vice versa. So, Bob-$\overline{i}$ is a wiretapper from the perspective of the OT link between Alice and Bob-$i$.  

\parbreak
\begin{definition}\label{def: main}
    Let $n, k_1, k_2 \in \mathbb{N}$. An $(n, k_1, k_2)$ protocol is an interaction involving Alice, Bob-1, and Bob-2 through the setup shown in Fig. \ref{fig: both}-$(a)$. During each time step $l = 1, 2, \ldots, n$, Alice transmits a bit $X_l$ via the broadcast channel. Additionally, users alternately send messages over a noiseless public channel over several rounds before each transmission and after the last one $(l = n)$. The number of rounds may vary but is finite. Any transmission by a user is a function of their input, private randomness, and all previous messages, channel inputs, or channel outputs they have observed. The positive rate pair $(R_1,R_2)$ is an achievable OT rate pair for the DM-BC if there exists a sequence of protocols such that $\frac{k_i}{n}\rightarrow R_i$ as $n\to \infty$  such that: i) for non-colluding parties, the asymptotic conditions $\eqref{goals: BC-nColl-1}-\eqref{goals: BC-nColl-3}$ hold and ii) for colluding parties the asymptotic conditions $\eqref{goals: BC-Coll-1}-\eqref{goals: BC-Coll-5}$ hold. The set of all achievable rate pairs is the OT capacity region of the DM-BC. 
\end{definition}
\begin{figure}[t]
    \centering
    \begin{minipage}{0.45\textwidth}
        \centering
        \includegraphics[scale=0.71,trim={2cm 20.5cm 11cm 1.5cm},clip]{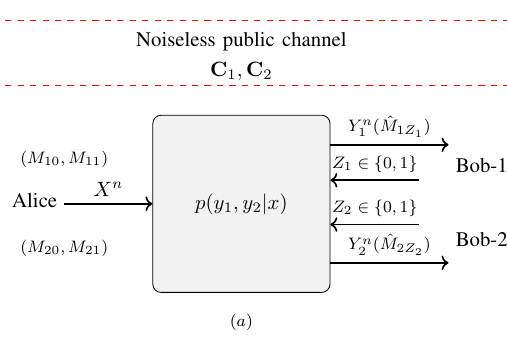}
    \end{minipage}
    \hfill
    \begin{minipage}{0.45\textwidth}
        \centering
        \includegraphics[scale=0.71,trim={2cm 20.5cm 10cm 1.5cm},clip]{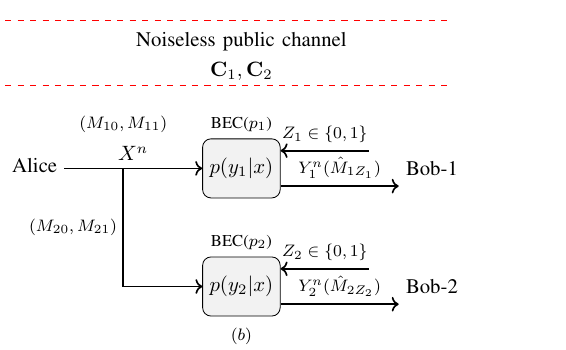}
        
    \end{minipage}
    \caption{$(a)$ OT over a general BC. $(b)$ OT over a DM-BC made up of two independent BECs. Note that $Z_1$ and $Z_2$ are not the inputs to the channel, but to the OT protocol and influence the public transmission $\mathbf{C}_1$, and $\mathbf{C}_2$.}
    \label{fig: both}
\end{figure}
Let $\mathbf{C}$ denote the transcript of the public channel after an $(n, k_1, k_2)$ protocol. The final view of a user includes all random variables available to them at the end of the \((n, k_1, k_2)\)-protocol. We represent the final views of Alice, Bob-1, and Bob-2 by $U$, $V_1$, and $V_2$, respectively. At the end of the $(n, k_1, k_2)$ protocol, Bob-$i$ derives an estimate $\hat{M}_{iZ_i}$ of $M_{iZ_i}$ based on his final view $V_i$.

\parbreak For \emph{non-colluding} parties and in the asymptotic limit, these goals are formalized by the following conditions:
\begin{align}\label{goals: BC-nColl-1}
    \lim_{n\rightarrow\infty}\text{Pr}\,\left[(\hat{M}_{iZ_i})\neq (M_{iZ_i})\right]_{i\in\{1,2\}}= 0,\\
    \label{goals: BC-nColl-2}
    \lim_{n\rightarrow\infty}I (  M_{i\overline{Z}_i};V_i)_{i\in\{1,2\}} = 0,\\
    \label{goals: BC-nColl-3}
    \lim_{n\rightarrow\infty}I (Z_1, Z_2 ; U) =  0,
\end{align}
where $V_i = (Z_i, R_{B_i}, Y_i^n, \mathbf{C}_i), U = (M_{i0}, M_{i1}, R_{A}, X^n , \mathbf{C}), i\in \{1,2\}$
and $\mathbf{C}\triangleq (\mathbf{C}_1, \mathbf{C}_2)$. Condition \eqref{goals: BC-nColl-1} says that Bob-$i$ correctly learns $M_{iZ_i}$ with negligible probability of error, so that the players do not abort the protocol (\emph{correctness}). Condition \eqref{goals: BC-nColl-2} says that Bob-$i$ gains negligible information about $M_{i\overline{Z}_i}$ (\emph{security for Alice}), and condition \eqref{goals: BC-nColl-3} says that Alice gains negligible information about $Z_1, Z_2$ (\emph{security for Bob-$i$}). 

\parbreak For \emph{Colluding} parties and in the asymptotic limit, the goals are formalized by the following conditions:
\begin{align}\label{goals: BC-Coll-1}
    \lim_{n\rightarrow\infty}\text{Pr}\,\left[(\hat{M}_{iZ_i})\neq (M_{iZ_i})\right]_{i\in\{1,2\}}&= 0,\\
    \label{goals: BC-Coll-2}
    \lim_{n\rightarrow\infty}I (  M_{1\overline{Z}_1},M_{2\overline{Z}_2};V_1,V_2)&= 0,\\
    \label{goals: BC-Coll-3}
    \lim_{n\rightarrow\infty}I (Z_i; U, V_{\overline{i}})_{i\in\{1,2\}} &=  0,\\\label{goals: BC-Coll-4}
    \lim_{n\rightarrow\infty}I (Z_1, Z_2 ; U) &=  0,\\\label{goals: BC-Coll-5}
    \lim_{n\rightarrow\infty}I (M_{i0},M_{i1},M_{\overline{i}\overline{Z}_{\overline{i}}}, Z_i; V_{\overline{i}})_{i\in\{1,2\}} &=  0,
\end{align}
where $V_i = (Z_i, R_{B_i}, Y_i^n, \mathbf{C}_i), U = (M_{i0}, M_{i1}, R_{A}, X^n , \mathbf{C}), i\in \{1,2\}$
and $\mathbf{C}\triangleq (\mathbf{C}_1, \mathbf{C}_2)$. Condition \eqref{goals: BC-Coll-1} says that Bob-$i$ correctly learns $M_{iZ_i}$ with negligible probability of error, so that the players do not abort the protocol (\emph{correctness}). Condition \eqref{goals: BC-Coll-2} says that Bob-$i$ gains negligible information about $M_{i\overline{Z}_i}$ (\textit{security for Alice}), and conditions \eqref{goals: BC-Coll-4} says that Alice gains negligible information about $Z_1, Z_2$ (\textit{security for both Bobs}). The other two criteria are related to the security of parties against collusion.

\begin{remark}\label{rem}
   As shown in \cite{Hadi-limit}, OT with perfect secrecy is impossible when one of the parties is an unbounded adversary. This observation motivates our focus on asymptotic limits. Nevertheless, Ahlswede and Csiszár demonstrated that for $\binom{2}{1}\text{-OT}^k$ over a point-to-point noisy DMC $(\mathcal{W}:\mathcal{X}\to\mathcal{Y})$ reducible to a binary erasure channel with erasure probability $p$, Alice’s information about Bob’s input $Z$ is not only asymptotically but exactly zero. Moreover, when $X$ is uniformly distributed over $\{0,1\}$, Bob’s information about $M_{\overline{Z}}$ is also exactly zero \cite[Remark~5]{Rudolf1}. This result suggests that perfect secrecy is achievable under the assumption that the parties behave honestly, which aligns with the impossibility results in \cite{Hadi-limit}. However, care must be taken when defining the adversarial model. Cryptographers typically classify the possibility of collusion as malicious behavior: a malicious party is always allowed to collude with others, and a colluding party is usually malicious, but not necessarily so. In this work, we adopt a fully honest model in which players may collude but are not permitted to deviate from the prescribed OT protocol. Under this definition of adversary capability, perfect OT can indeed be expected for honest parties, even if collusion is allowed.
\end{remark}
\begin{corollary}\label{cor}
    If an OT protocol $\Pi$ provides perfect secrecy for both parties, then $\Pi$ continues to provide perfect secrecy even in the presence of a passive wiretapper who may collude with other parties.
    
\end{corollary}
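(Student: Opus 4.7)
The plan is to exploit the fact that perfect secrecy, in contrast to asymptotic secrecy, is equivalent to exact statistical independence. Since the wiretapper is passive, it performs no deviation from $\Pi$, so the joint distribution of all random variables generated by $\Pi$ is the same with or without the wiretapper. The corollary will then follow by manipulating mutual informations that are known to equal zero exactly, rather than only vanishing in the limit.

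First I would unpack the hypothesis: ``$\Pi$ provides perfect secrecy for both parties'' means that the security identities underlying \eqref{goals: BC-nColl-2} and \eqref{goals: BC-nColl-3} hold as exact equalities, which translates into the statistical independences $M_{i\overline{Z}_i} \perp V_i$ and $(Z_1, Z_2) \perp U$. Next, because the passive wiretapper does not feed any input to the channel or modify any message, its view is a deterministic function of the public transcript $\mathbf{C}$ together with the observations it passively collects; in particular it introduces no new source of randomness, and the joint distribution $P_{U, V_1, V_2, M_{ij}, Z_i}$ is preserved.

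Finally, I would verify that each collusion goal in \eqref{goals: BC-Coll-2}--\eqref{goals: BC-Coll-5} is implied by the pairwise independences derived above, together with the protocol-level independence of the two OT instances' inputs $(M_{10}, M_{11}, Z_1) \perp (M_{20}, M_{21}, Z_2)$ and of the local randomness $R_{B_1} \perp R_{B_2}$ stipulated in Section~\ref{noisy protocol:BC}. For instance, $I(M_{1\overline{Z}_1}, M_{2\overline{Z}_2}; V_1, V_2) = 0$ would follow by a chain-rule decomposition combined with the cross-instance factorization, and $I(Z_i; U, V_{\overline{i}}) = 0$ by a symmetric argument; \eqref{goals: BC-Coll-5} is handled analogously by isolating the components of $V_{\overline{i}}$ that are a priori independent of $(M_{i0}, M_{i1}, M_{\overline{i}\overline{Z}_{\overline{i}}}, Z_i)$.

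The main obstacle is that pairwise statistical independence does not in general imply joint independence, and the shared transcript $\mathbf{C}$ couples the two instances. I expect to handle this by conditioning on $\mathbf{C}$ and applying the chain rule for mutual information term-by-term: once each conditional mutual information is shown to vanish --- using either the pairwise perfect-secrecy hypothesis or the cross-instance independence of inputs and local randomness --- the joint mutual information collapses to zero, yielding the claimed collusion-robust perfect secrecy.
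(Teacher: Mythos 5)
There is a genuine gap, and it originates in the adversary model rather than in any single information-theoretic step. In the corollary the wiretapper is \emph{passive and has no private channel observation}: as the paper's proof sets it up, its view is $V_W = g(U)$ (or a function of $V_B$ and $\mathbf{C}$), i.e., computable from the view of the party it colludes with plus the public transcript. With that restriction the whole proof is two lines: the data-processing inequality gives $I(Z;V_W)\le I(Z;U)=0$, and the chain rule applied to $I(M_{\overline{Z}};V_B,V_W)=0$ gives $I(M_{\overline{Z}};V_W)=0$. Your proposal instead treats the colluding wiretapper as a full second receiver with its own channel output and tries to derive the complete colluding-security conditions \eqref{goals: BC-Coll-2}--\eqref{goals: BC-Coll-5} from exact versions of the non-colluding conditions \eqref{goals: BC-nColl-2}--\eqref{goals: BC-nColl-3} together with independence of the two instances' inputs and local randomness. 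That implication is false in general, and the paper itself supplies the counterexample: Lemma \ref{lem: cri. nColl-BC} shows Protocol 1 meets the non-colluding conditions (with Bob-side secrecy holding exactly, cf. \eqref{eq: perfect}) yet violates \eqref{goals: BC-Coll-3} and \eqref{goals: BC-Coll-5}, precisely because $V_{\overline{i}}$ contains $Y_{\overline{i}}^n$, which is correlated with the broadcast input $X^n$ and hence with the hashed masks $\kappa_{ij}(\mathbf{X}|_{S_{ij}})$ protecting Bob-$i$'s messages, while $S_{i0},S_{i1}\subset\mathbf{C}_i$ are public.

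Your own ``main obstacle'' paragraph is where the plan breaks: conditioning on $\mathbf{C}$ and chain-ruling term by term cannot rescue it, because $\mathbf{C}$ itself carries the masked messages $m_{ij}\oplus\kappa_{ij}(\mathbf{X}|_{S_{ij}})$ and the index sets, so the residual conditional mutual informations such as $I(M_{i0},M_{i1};Y_{\overline{i}}^n\mid\mathbf{C})$ are not forced to vanish by pairwise secrecy or by input-level independence $(M_{10},M_{11},Z_1)\perp(M_{20},M_{21},Z_2)$; the coupling is through the common codeword $X^n$, not through the inputs. To repair the argument you must restrict the colluder's side information to something computable from the honest parties' views (as the paper does), at which point the data-processing argument closes the proof immediately; alternatively, proving collusion resistance for a genuine second receiver requires a protocol redesign (this is exactly what Protocol 2 and Lemma \ref{lem: cri. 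Coll-BC} are for), not a corollary of non-colluding perfect secrecy.
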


\begin{proof}
    The proof is presented in \ref{app: cor}.
\end{proof}
\section{Main results}\label{Sec: main}
In this section, we present the principal theoretical contributions of our work. We derive general outer bounds on the achievable OT rate region for both non-colluding and colluding honest-but-curious parties over a DM-BC. Furthermore, we propose two explicit OT protocols tailored for noisy broadcast environments: the first ensures correctness and privacy for non-colluding users, while the second maintains secrecy even in the presence of collusion. Together, these results provide a unified framework for understanding and quantifying the OT capacity in broadcast networks under information-theoretic security constraints.

\begin{theorem}\label{thm:BC-ncoll}
    In a general $\binom{2}{1}\emph{-OT}^{k_i}$ setup over a DM-BC $\mathcal{W}:\mathcal{X}\rightarrow \mathcal{Y}_1\mathcal{Y}_2:P_{Y_1Y_2|X}(y_1,y_2|x)$ with honest-but-curious non-colluding  parties (Fig. \ref{fig: both}-$(a)$), if an OT rate pair is achievable, then it belongs to the set of all rate pairs $(R_1,R_2)\in\mathbb{R}^2_{+}$ that satisfy:
    \begin{align*}
        R_1 &\leq \min\Big\{\max_{P_{X}}I(X;Y_1),\max_{P_{X}}H(X|Y_1)\Big\},\\
        R_2 &\leq \min\Big\{\max_{P_{X}}I(X;Y_2),\max_{P_{X}}H(X|Y_2)\Big\},\\
        R_1+R_2 &\leq \max_{P_X}I(X;Y_1,Y_2),
    \end{align*}
\end{theorem}
\begin{proof}[Proof sketch] The proof follows an extended version of the proof in \cite[Theorem 1]{Rudolf1}. In the DM-BC scenario described above, there is an OT link between Alice and Bob-1 and another OT link between Alice and Bob-2. So, applying the method in \cite[Theorem 1]{Rudolf1} for each OT link completes the proof. 
\end{proof}
\begin{proposition}
    The rate region $\mathcal{R}$ of independent pairs of $\binom{2}{1}\text{-OT}^{k_i}$ protocols over the DM-BC  (Fig. \ref{fig: both}-(a)): $\mathcal{W}:\mathcal{X}\rightarrow \mathcal{Y}_1\mathcal{Y}_2:P_{Y_1|X}(y_1|x)P_{Y_2|X}(y_2|x)$ with non-colluding parties is such that $\mathcal{R}\subseteq\mathcal{R}_{\text{outer}}$:
    \begin{align*}
    \mathcal{R}_{\text{outer}} = \left\{\ 
    \begin{aligned}
    (R_1,R_2)\in\mathbb{R}^2_{+}:R_1 &\leq \min\{p_1, 1 - p_1\}, \\
    R_2 &\leq \min\{p_2, 1 - p_2\}, \\
    R_1 + R_2 &\leq  1-p_1p_2
    \end{aligned}
    \right\}
    \end{align*}
\end{proposition}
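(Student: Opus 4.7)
The plan is to derive this proposition as a direct specialization of Theorem \ref{thm:BC-ncoll} to the concrete channel of Fig. \ref{fig: both}-(b), namely the DM-BC whose marginals are two independent binary erasure channels $\mathrm{BEC}(p_1)$ and $\mathrm{BEC}(p_2)$. All that is needed is to evaluate the six information-theoretic functionals appearing on the right-hand side of Theorem \ref{thm:BC-ncoll} for this specific channel.

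First I would identify the channel structure: for $i\in\{1,2\}$, the marginal $P_{Y_i|X}$ satisfies $Y_i=X$ with probability $1-p_i$ and $Y_i=e$ with probability $p_i$, with the two erasure events independent across $i$ and independent of $X$. For any input distribution $P_X$ on $\{0,1\}$, standard BEC calculations give $I(X;Y_i)=(1-p_i)H(X)$ and $H(X|Y_i)=p_iH(X)$. Both are monotone in $H(X)$, hence maximized by the uniform distribution, yielding $\max_{P_X}I(X;Y_i)=1-p_i$ and $\max_{P_X}H(X|Y_i)=p_i$. Substituting into the first two constraints of Theorem \ref{thm:BC-ncoll} immediately gives $R_i\leq\min\{1-p_i,p_i\}=\min\{p_i,1-p_i\}$.

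Next, for the sum-rate bound I would compute the functionals for the joint output $(Y_1,Y_2)$. Because the two erasure events are independent, the joint channel from $X$ to $(Y_1,Y_2)$ has four branches with probabilities $(1-p_1)(1-p_2)$, $(1-p_1)p_2$, $p_1(1-p_2)$, $p_1p_2$, the last of which is the only one that conveys no information about $X$. Hence $(Y_1,Y_2)$ acts as an effective erasure channel with total erasure probability $p_1p_2$, so that $I(X;Y_1,Y_2)=(1-p_1p_2)H(X)$ and $H(X|Y_1,Y_2)=p_1p_2H(X)$, both maximized at the uniform distribution. Plugging these into the sum constraint of Theorem \ref{thm:BC-ncoll} yields $R_1+R_2\leq\min\{1-p_1p_2,\,p_1p_2\}$, which is exactly the sum bound in the definition of $\mathcal{R}_{\text{outer}}$. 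Combining all three inequalities proves $\mathcal{R}\subseteq\mathcal{R}_{\text{outer}}$.

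There is no real obstacle here; the argument is essentially a channel-specific evaluation. The one subtlety worth flagging in the write-up is that each of the three bounds in Theorem \ref{thm:BC-ncoll} is a separate maximum over $P_X$, so the distributions used for the individual and sum bounds need not coincide a priori. For the BEC setting, however, the uniform input on $\{0,1\}$ simultaneously maximizes all six functionals, so the specialization is seamless and no coupling argument between different maximizers is required.
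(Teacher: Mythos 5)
Your proposal is correct and follows exactly the route the paper intends: the paper states this proposition without a separate proof, as a direct specialization of Theorem \ref{thm:BC-ncoll} to the broadcast channel built from two independent BECs, and your evaluations $\max_{P_X}I(X;Y_i)=1-p_i$, $\max_{P_X}H(X|Y_i)=p_i$, $\max_{P_X}I(X;Y_1,Y_2)=1-p_1p_2$, $\max_{P_X}H(X|Y_1,Y_2)=p_1p_2$ (all attained by the uniform input) are exactly what is needed. Your closing remark about the maximizers is harmless but unnecessary, since each constraint in Theorem \ref{thm:BC-ncoll} carries its own maximization and no common optimizing distribution is required.
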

\begin{protocol}{OT over noisy DM-BEBC with non-colluding parties in the setup of Fig. \ref{fig: both}-$(b)$.}\label{protocol: BC-nColl}
         \textit{Parameters:} \begin{itemize}
    \item $\lambda\in (0,1)$ such that $r_i<\min \{p_i,1-p_i\}-\lambda, i \in \{1,2\}$, where $p_i$ is the erasure probability.
    \item $0<\lambda'<r_i, \lambda'\in \mathbb{Q}$
    \item $n(r_i-\lambda')\in\mathbb{N}$
    \item The rate of the protocol is $r_i-\lambda' = \frac{k_i}{n}$, and $k_i$ is the length of each string.
\end{itemize}
\textit{Goal.} Alice sends two strings $m_{i0},m_{i1}, i \in \{1,2\}$ to Bob-$i$. At the end of the protocol, Bob-$i$ gets $m_{iZ_i}, Z_i\in \{0,1\}$ while Alice gets nothing ($e$).

\textit{The protocol:}
  \begin{enumerate}
    \item Alice transmits an $n$-tuple $\mathbf{X}= X^n$ of i.i.d. Bernoulli($\frac12$) bits over both point-to-point BECs($p_i$). 

    \item Bob-$i$ receives the $n$-tuple $\mathbf{Y_i}= Y_i^n$ from BEC($p_i$). Bob-$i$ forms the sets 
    \[
    \overline{E}_i:=\left\{l\in\{1,2,\cdots,n\}:Y_{i,l}\neq e\right\}
    \]
    \[
    E_i:=\left\{l\in\{1,2,\cdots,n\}:Y_{i,l}=e\right\}
    \]
    If $\lvert \overline{E}_i\rvert<r_i n$ or $\lvert E_i \rvert<r_i n$, Bob-$i$ aborts the protocol.
    \item Bob-$i$ creates the following sets:
    \[
    S_{iZ_i}\sim \text{Unif}\left\{A\subset\overline{E}_i: \lvert A \rvert=r_i n\right\}
    \]
    \[
    S_{i\overline{Z}_i}\sim \text{Unif}\left\{A\subset E_i: \lvert A \rvert=r_i n\right\}
    \]
    
    \parbreak Bob-$i$ reveals $S_{i0}$ and $S_{i1}$  to Alice over the noiseless public channel.
    
    \item Alice randomly and independently chooses functions $\kappa_{i0}$ and $\kappa_{i1}$ from a family $\mathcal{K}$ of two-universal hash functions:
    \[
    \kappa_{i0}, \kappa_{i1}: \{0,1\}^{r_i n}\rightarrow \{0,1\}^{n(r_i-\lambda')}
    \]
    \[
    h_{i0},h_{i1} : \mathcal{R}_{ij} \times \mathcal{X}^n \to \{0,1\}^{s_i n}
    \]
    Alice finally sends the following information to Bob-$i$ over the noiseless public channel: 
    \[
    h_{i0}, h_{i1},\kappa_{i0}, \kappa_{i1}, m_{i0}\oplus\kappa_{i0}(\mathbf{X}|_{S_{i0}}), m_{i1}\oplus\kappa_{i1}(\mathbf{X}|_{S_{i1}})
    \]
    \item Bob-$i$ knows $\kappa_{iZ_i}$, $\mathbf{X}|_{S_{iZ_i}}$ one can decode $m_{iZ_i}$ so that: 
    \begin{itemize}
        \item $\hat{\mathbf{X}}|_{S_{iZ_i}}$ and $(\mathbf{Y}_1|_{S_{1Z_1}},\mathbf{Y}_2|_{S_{2Z_2}})$ are $\varepsilon$-conditional typical according to $W$, that is, $(\mathbf{Y}_1|_{S_{1Z_1}},\mathbf{Y}_2|_{S_{2Z_2}})\in \mathcal{T}^n_{W, \varepsilon}(\hat{\mathbf{X}}|_{S_{1Z_1},S_{2Z_2}})$;
        \item $h_{i}(R_{iZ_i}, \mathbf{X}|_{S_{iZ_i}}) = h_{i}(R_{iZ_i}, \hat{\mathbf{\mathbf{X}}}|_{S_{iZ_i}}), i\in\{1,2\}$;
        
        If there is more than one such sequence, or if no such sequence exists, Bob outputs an error.
        \end{itemize}
  \end{enumerate}
\end{protocol}

\begin{protocol}{OT over noisy DM-BEBC with colluding parties in the setup of Fig. \ref{fig: both}-$(b)$.\label{protocol: BC-Coll}}
       \textit{Parameters:} \begin{itemize}
    \item $\lambda\in (0,1)$ such that $r_i<p_{\overline{i}}\min \{p_i,1-p_i\}-\lambda, i \in \{1,2\}$, where $p_i$ is the erasure probability.
    \item $0<\lambda'<r_i, \lambda'\in \mathbb{Q}$
    \item $n(r_i-\lambda')\in\mathbb{N}$
    \item The rate of the protocol is $r_i-\lambda' = \frac{k_i}{n}$, and $k_i$ is the length of strings.
\end{itemize}
\textit{Goal.} Alice sends two strings $m_{i0},m_{i1}, i \in \{1,2\}$ to Bob-$i$. At the end of the protocol, Bob-$i$ gets $m_{iZ_i}, Z_i\in \{0,1\}$ while Alice gets nothing $e$.

\textit{The protocol:}
  \begin{enumerate}
    \item Alice transmits an $n$-tuple $\mathbf{X}= X^n$ of i.i.d. Bernoulli($\frac12$) bits over one of the noisy point-to-point BEC($p_i$) based on her choice. 

    \item Bob-$i$ receives the $n$-tuple $\mathbf{Y_i}= Y_i^n$ from BEC($p_i$). Bob-$i$ forms the sets 
    \[
    \overline{E}_i:=\left\{l\in\{1,2,\cdots,n\}:Y_{i,l}\neq e\right\}
    \]
    \[
    E_i:=\left\{l\in\{1,2,\cdots,n\}:Y_{i,l}=e\right\}
    \]
    If $\lvert \overline{E}_i\rvert<r_i n$ or $\lvert E_i \rvert<r_i n$, Bob-$i$ aborts the protocol.
    \item Bob-$i$ creates the following sets:
    \[
    S_{Z_i}\sim \text{Unif}\left\{A\subset\overline{E}: \lvert A \rvert=\frac{r_i}{p_{\overline{i}}-\lambda'} n\right\}
    \]
    \[
    S_{\overline{Z}_i}\sim \text{Unif}\left\{A\subset E_i: \lvert A \rvert=\frac{r_i}{p_{\overline{i}}-\lambda'} n\right\}
    \]
    If $p_i>\frac12$,
    \[
    S'\sim \text{Unif}\left\{A\subset E_i\setminus [S_{\overline{Z}_i}]: \lvert A \rvert=(p_i-\lambda-\frac{r_i}{p_{\overline{i}}-\lambda'}) n\right\}
    \]
    Else,
    
    \qquad$S'=\emptyset$
    
    \parbreak Bob-$i$ reveals $S_0, S_1$ and $S'$ to Alice over the noiseless public channel.
    
    \item Alice randomly and independently chooses functions $\kappa_{i0}$ and $\kappa_{i1}$ from a family $\mathcal{K}$ of two-universal hash functions:
    \[
    \kappa_{i0}, \kappa_{i1}: \{0,1\}^{r_i n}\rightarrow \{0,1\}^{n(r_i-\lambda')}
    \]
    Alice finally sends the following information to Bob-$i$ over the noiseless public channel: 
    \[
    \kappa_{i0}, \kappa_{i1}, m_{i0}\oplus\kappa_{i0}(\mathbf{X}|_{S_0}), m_{i1}\oplus\kappa_{i1}(\mathbf{X}|_{S_1})
    \]
    \end{enumerate}
\end{protocol}

\clearpage
\addtocounter{protocol}{-1}

\begin{protocol}{OT over noisy DM-BEBC with colluding parties in the setup
of Fig.~\ref{fig: both}-(b) (continued).}

\begin{enumerate}
\setcounter{enumi}{4}
    \item Bob-$i$ knows $\kappa_{iZ_i}$, $\mathbf{X}|_{S_{Z_i}}$ one can decode $m_{iZ_i}$ so that:
     \begin{itemize}
        \item $\hat{\mathbf{X}}|_{S_{Z_i}}$ and $\mathbf{Y}_i|_{S_{Z_i}}$ are $\varepsilon$-conditional typical according to $W$, that is, $\mathbf{Y}_i|_{S_{Z_i}}\in \mathcal{T}^n_{W, \varepsilon}(\hat{\mathbf{X}}|_{S_{Z_i}})$;
        \item $h_{i}(R_{iZ_i}, \mathbf{X}|_{S_{Z_i}}) = h_{i}(R_{iZ_i}, \hat{\mathbf{\mathbf{X}}}|_{S_{Z_i}}), i\in\{1,2\}$;
        
        If there is more than one such $\hat{\mathbf{X}}|_{S_{Z_i}}$ or none, Bob outputs an error.
        \end{itemize}
    \item If $S'\neq \emptyset$, then Alice transmits $\mathbf{X}|_{S'}$ over the noisy point-to-point BEC($p_{\overline{i}}$) to Bob-$\overline{i}$.
    \item Bob-$\overline{i}$ receives $\mathbf{Y_{\overline{i}}}$ from BEC($p_{\overline{i}}$). Bob-${\overline{i}}$ forms the sets 
    \[
    \overline{E_{\overline{i}}}:=\left\{l\in\{1,2,\cdots,n\}:Y_{{\overline{i}},l}\neq e\right\}
    \]
    \[
    E_{\overline{i}}:=\left\{l\in\{1,2,\cdots,n\}:Y_{{\overline{i}},l}=e\right\}
    \]
    If $\lvert \overline{E}_{\overline{i}}\rvert<r_{\overline{i}} n$ or $\lvert E_{\overline{i}} \rvert<r_{\overline{i}} n$, Bob-${\overline{i}}$ aborts the protocol.
    \item Bob-${\overline{i}}$ creates the following sets:
    \[
    S'_{Z_{\overline{i}}}\sim \text{Unif}\left\{A\subset\overline{E}_{\overline{i}}: \lvert A \rvert=r_{\overline{i}} n\right\}
    \]
    \[
    S'_{\overline{Z}_{\overline{i}}}\sim \text{Unif}\left\{A\subset E_{\overline{i}}: \lvert A \rvert=r_{\overline{i}} n\right\}
    \]
    \item Alice uses her previously chosen two-universal hash functions:
    \[
    \kappa_{\overline{i}0}, \kappa_{\overline{i}1}: \{0,1\}^{r_{\overline{i}} n}\rightarrow \{0,1\}^{n(r_{\overline{i}}-\lambda')}
    \]
    Alice sends the following information to Bob-$\overline{i}$ over the noiseless public channel:
    \[
    \kappa_{\overline{i}0}, \kappa_{\overline{i}1}, m_{\overline{i}0}\oplus\kappa_{\overline{i}0}(\mathbf{X}|_{S'}|_{S'_0}), m_{\overline{i}1}\oplus\kappa_{\overline{i}1}(\mathbf{X}|_{S'}|_{S'_1})
    \]
    \item Bob-$\overline{i}$ knows $\kappa_{\overline{i}Z_{\overline{i}}}$, $(\mathbf{X}|_{S'})|_{S'_{Z_{\overline{i}}}}$ one can decode $m_{\overline{i}Z_{\overline{i}}}$ so that:
     \begin{itemize}
        \item $(\hat{\mathbf{X}}|_{S'})|_{S'_{Z_{\overline{i}}}}$ and $(\mathbf{Y}_{\overline{i}}|_{S'})|_{S'_{Z_{\overline{i}}}}$ are $\varepsilon$-conditional typical according to $W$, that is, $(\mathbf{Y}_{\overline{i}}|_{S'})|_{S'_{Z_{\overline{i}}}}\in \mathcal{T}^n_{W, \varepsilon}((\hat{\mathbf{X}}|_{S'})|_{S'_{Z_{\overline{i}}}})$;
        \item $h_{\overline{i}}(R_{\overline{i}Z_{\overline{i}}}, (\mathbf{X}|_{S'})|_{S'_{Z_{\overline{i}}}}) = h_{\overline{i}}(R_{\overline{i}Z_{\overline{i}}}, (\hat{\mathbf{\mathbf{X}}}|_{S'})|_{S'_{Z_{\overline{i}}}}), i\in\{1,2\}$;
        
        If there is more than one such $\hat{(\mathbf{X}}|_{S'})|_{S'_{Z_{\overline{i}}}}$ or none, Bob-${\overline{i}}$ outputs an error.
        \end{itemize}
  \end{enumerate}
  \vspace{0.5pt}
\end{protocol}

\begin{theorem}\label{thm:BC-coll}
    In a general $\binom{2}{1}\emph{-OT}^{k_i}$ setup over a DM-BC $\mathcal{W}:\mathcal{X}\rightarrow \mathcal{Y}_1\mathcal{Y}_2:P_{Y_1Y_2|X}(y_1,y_2|x)$ (Fig. \ref{fig: both}-(a)) with honest-but-curious colluding  parties, if an OT rate pair is achievable, then it belongs to the set of all rate pairs $(R_1,R_2)\in\mathbb{R}^2_{+}$ that satisfy:
    
    \begin{align*}
        R_1 &\leq \min\Big\{\max_{P_{X}}I(X;Y_1), \max_{P_{X}}I(X;Y_1|Y_2),\max_{P_{X}}H(X|Y_1,Y_2)\Big\},\\
        R_2 &\leq \min\Big\{\max_{P_{X}}I(X;Y_2), \max_{P_{X}}I(X;Y_2|Y_1),\max_{P_{X}}H(X|Y_1,Y_2)\Big\},\\
        R_1+R_2 &\leq \min\Big\{\max_{P_{X}}I(X;Y_1,Y_2), \max_{P_{X}}H(X|Y_1,Y_2)\Big\},
    \end{align*}
\end{theorem}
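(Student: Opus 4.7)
The plan is to adapt the Ahlswede--Csisz\'ar argument used in Theorem~\ref{thm:BC-ncoll} and \cite[Theorem 1]{Rudolf1} to the broadcast setting in which the two receivers may collude. The template for each of the inequalities will be the same: start from $k_i=H(M_{iZ_i})$ (or from $k_1+k_2$ for the sum-rate bounds), invoke correctness \eqref{goals: BC-Coll-1} via Fano's inequality to replace this entropy by a mutual information with the relevant receiver view, then invoke the appropriate security condition from \eqref{goals: BC-Coll-2}--\eqref{goals: BC-Coll-5} to strip away ancillary information, and finally single-letterize over the $n$ i.i.d.\ uses of the DM-BC.

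The bounds $R_i\le\max_{P_X}I(X;Y_i)$ and $R_1+R_2\le\max_{P_X}I(X;Y_1,Y_2)$ will be obtained exactly as in the non-colluding case: correctness gives $H(M_{iZ_i})\le I(M_{iZ_i};V_i)+n\epsilon_n$, and data processing through the Markov structure of the noisy protocol yields $I(M_{iZ_i};V_i)\le I(X^n;Y_i^n)\le n\max_{P_X}I(X;Y_i)$, with the sum-rate bound handled analogously by replacing $V_i$ with $(V_1,V_2)$ and $Y_i^n$ with $(Y_1^n,Y_2^n)$.

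The genuinely new pieces are the conditional mutual information bounds $R_i\le\max_{P_X}I(X;Y_i|Y_{\overline{i}})$ and the entropy bounds involving $\max_{P_X}H(X|Y_1,Y_2)$. For the first, the plan is to condition on Bob-$\overline{i}$'s view: by the cross-collusion condition \eqref{goals: BC-Coll-5}, $V_{\overline{i}}$ is asymptotically independent of $(M_{i0},M_{i1},Z_i)$ and hence of $M_{iZ_i}$, so
\[
k_i\le I(M_{iZ_i};V_i)+n\epsilon_n\le I(M_{iZ_i};V_i,V_{\overline{i}})+n\epsilon_n\le I(M_{iZ_i};V_i\mid V_{\overline{i}})+n\epsilon_n'.
\]
Because $V_{\overline{i}}$ is a function of $Y_{\overline{i}}^n$ and the public transcript, the right-hand side is upper bounded by $I(X^n;Y_i^n\mid Y_{\overline{i}}^n)$, which single-letterizes to $n\max_{P_X}I(X;Y_i|Y_{\overline{i}})$. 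For the entropy-based inequalities, the joint security-for-Alice condition \eqref{goals: BC-Coll-2} implies that $(M_{1\overline{Z}_1},M_{2\overline{Z}_2})$ is asymptotically independent of the colluding view $(V_1,V_2)$; the unselected messages therefore retain essentially their full entropy given $(V_1,V_2)$, which must be supported by the residual uncertainty of $X^n$ given $(Y_1^n,Y_2^n)$. Single-letterization then yields $k_i\le n\max_{P_X}H(X|Y_1,Y_2)+o(n)$ and similarly for $k_1+k_2$.

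The main technical obstacle will be the careful handling of the public transcript $\mathbf{C}$ inside each view: since $\mathbf{C}$ mixes information originating from $X^n$, $Y_1^n$ and $Y_2^n$ through the alternating rounds, it does not factor through a single Markov chain. The standard remedy, imported from the point-to-point case, is to proceed by induction on the rounds and to repeatedly use that each public message is a function of its sender's current local view; this allows $\mathbf{C}$ to be absorbed into the conditioning of the relevant mutual informations without inflating the bounds. Once this bookkeeping is completed, the single-letterization step over the memoryless broadcast channel is routine and maximization over $P_X$ yields the claimed outer region.
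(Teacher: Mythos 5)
Your overall architecture (Fano for correctness, conditioning on $V_{\overline{i}}$ for the $I(X;Y_i|Y_{\overline{i}})$ bound, Alice's security for the $H(X|Y_1,Y_2)$ bound, then round-by-round single-letterization) matches the paper's route, but there is a genuine gap at the central step. The inequalities you invoke as ``data processing,'' namely $I(M_{iZ_i};V_i)\le I(X^n;Y_i^n)$ and $I(M_{iZ_i};V_i\mid V_{\overline{i}})\le I(X^n;Y_i^n\mid Y_{\overline{i}}^n)$, are false without further input: $V_i$ contains the transcript $\mathbf{C}_i$, and a protocol in which Alice simply announces both strings over the noiseless public channel is perfectly correct, makes $I(M_{iZ_i};V_i)=k_i$ arbitrarily larger than $n\max_{P_X}I(X;Y_i)$, and is ruled out only by the secrecy constraints. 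What rescues the bound is precisely the step you never perform: showing that the transcript together with Bob-$\overline{i}$'s output carries negligible information about the \emph{selected} message, $I(M_{iz_i};\mathbf{C}_i,Y_{\overline{i}}^n)=o(n)$ (the paper's \eqref{final: BC-coll-3}). This does not follow from \eqref{goals: BC-Coll-2} alone (which concerns the unselected message) nor from \eqref{goals: BC-Coll-5} (which concerns $V_{\overline{i}}$, not $\mathbf{C}_i$); it requires combining Alice's security with the choice-bit privacy conditions \eqref{goals: BC-Coll-3}--\eqref{goals: BC-Coll-4} through the continuity-type Lemma~\ref{lemma: Rudolf}, i.e.\ the Ahlswede--Csisz\'ar symmetrization over $Z_i=z_i$ versus $Z_i=\overline{z}_i$. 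Your sketch never invokes \eqref{goals: BC-Coll-3} or \eqref{goals: BC-Coll-4} in any concrete step, so the ``strip away ancillary information'' template is not actually justified. The transcript issue you do flag at the end (interactive rounds cannot increase $I(X^n;Y_i^n\mid\mathbf{C}_i,Y_{\overline{i}}^n)$) is a separate, additional ingredient; absorbing $\mathbf{C}$ into the conditioning bounds the channel term but says nothing about the direct leakage term $I(M_{iZ_i};\mathbf{C}_i,\cdot)$, and only after the latter is shown to be $o(n)$ does the decomposition $H(M_{iz_i})=I(M_{iz_i};\mathbf{C}_i,Y_{\overline{i}})+H(M_{iz_i}\mid\mathbf{C}_i,Y_{\overline{i}})$ deliver the claimed bounds.

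A similar omission affects your $\max_{P_X}H(X|Y_1,Y_2)$ bounds. Asserting that the unselected messages ``retain essentially their full entropy given $(V_1,V_2)$'' and that this ``must be supported by the residual uncertainty of $X^n$ given $(Y_1^n,Y_2^n)$'' skips the mechanism that makes the second half true: one must show that $M_{i\overline{Z}_i}$ is asymptotically a \emph{function} of $(X^n,\mathbf{C}_i,V_{\overline{i}})$, i.e.\ $H(M_{i\overline{z}_i}\mid X^n,\mathbf{C}_i,V_{\overline{i}},Z_i=z_i)=o(n)$ (the paper's Lemma~\ref{lemma: smooth}), which again rests on the $Z_i$-symmetrization via Lemma~\ref{lemma: Rudolf} together with correctness and a round-by-round independence lemma (Lemma~\ref{lemma: Rudolf-app}). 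Only then does $k_i=H(M_{i\overline{z}_i}\mid Z_i=z_i)\le H(X^n\mid Y_1^n,Y_2^n,\ldots)+o(n)\le\sum_{l}H(X_l\mid Y_{1,l},Y_{2,l})+o(n)$ go through. In short, the proposal has the right skeleton but is missing the two lemmas that constitute the actual converse argument: the selected message is nearly independent of $(\mathbf{C}_i,Y_{\overline{i}}^n)$, and the unselected message is nearly determined by $(X^n,\mathbf{C}_i,V_{\overline{i}})$; both require the choice-bit secrecy conditions you never use.
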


\begin{proof}
    The proof is presented in \ref{app1:proof of thm:BC-coll}.
\end{proof}
\begin{proposition}\label{prop1}
    The rate region $\mathcal{R}$ of independent pairs of $\binom{2}{1}\emph{-OT}^{k_i}$ protocols over the DM-BC  (Fig. \ref{fig: both}-(b)): $\mathcal{W}:\mathcal{X}\rightarrow \mathcal{Y}_1\mathcal{Y}_2:P_{Y_1|X}(y_1|x)P_{Y_2|X}(y_2|x)$ with colluding parties is such that $\mathcal{R}\subseteq\mathcal{R}_{\text{outer}}$:
    \begin{align*}
    \mathcal{R}_{\text{outer}} = \left\{\ 
    \begin{aligned}
    (R_1,R_2)\in\mathbb{R}^2_{+}:R_1 &\leq p_2 \min\{p_1, 1 - p_1\}, \\
    R_2 &\leq p_1 \min\{p_2, 1 - p_2\}, \\
    R_1 + R_2 &\leq \min \{p_1p_2, 1-p_1p_2\}
    \end{aligned}
    \right\}
    \end{align*}
    Note that, in the above rate region, the sum rate is redundant except if one of the sub-BECs is a completely erasure ($p_1 = 1$ or $p_2 = 1$). Also, the term $1-p_i, i\in\{1,2\}$ is trivial but only in the setup of Fig. \ref{fig: both}-(b):
    \begin{align*}%
    R_1 &\leq  p_2.\min\{p_1,1-p_1\},\\
    R_2 & \leq p_1.\min\{p_2,1-p_2\},\\
    R_1+R_2 & \leq \min\{p_1p_2, 1-p_1p_2\}.
    \end{align*}
\end{proposition}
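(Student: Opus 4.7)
The plan is to deduce Proposition \ref{prop1} directly from Theorem \ref{thm:BC-coll} by specializing the three bounds to the channel of Fig. \ref{fig: both}-(b), where $P_{Y_1Y_2|X}(y_1,y_2|x)=P_{Y_1|X}(y_1|x)P_{Y_2|X}(y_2|x)$ and each marginal channel is a BEC with erasure probability $p_i$. So the proof reduces to evaluating the six information-theoretic functionals appearing in Theorem \ref{thm:BC-coll} on this product channel and identifying the minimum in each of the three resulting expressions.

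First I would compute the conditional entropies, using that a BEC reveals $X$ exactly on the non-erased positions: $H(X|Y_i)=p_i H(X)$, and, by conditional independence of $Y_1,Y_2$ given $X$, $H(X|Y_1,Y_2)=p_1 p_2 H(X)$. From these I read off
\begin{align*}
I(X;Y_1) &= (1-p_1)H(X),\quad I(X;Y_2) = (1-p_2)H(X),\\
I(X;Y_1|Y_2) &= p_2(1-p_1)H(X),\quad I(X;Y_2|Y_1) = p_1(1-p_2)H(X),\\
I(X;Y_1,Y_2) &= (1-p_1 p_2)H(X).
\end{align*}
Since $\mathcal{X}=\{0,1\}$, every one of these quantities, as well as $H(X|Y_1,Y_2)=p_1 p_2 H(X)$, is a strictly increasing function of $H(X)$ and is therefore simultaneously maximized by the uniform input $P_X(0)=P_X(1)=1/2$, which gives $H(X)=1$.

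Plugging these maximal values into the three bounds of Theorem \ref{thm:BC-coll} yields
\begin{align*}
R_1 &\leq \min\{1-p_1,\; p_2(1-p_1),\; p_1 p_2\},\\
R_2 &\leq \min\{1-p_2,\; p_1(1-p_2),\; p_1 p_2\},\\
R_1+R_2 &\leq \min\{1-p_1 p_2,\; p_1 p_2\}.
\end{align*}
Because $p_2\le 1$ forces $p_2(1-p_1)\le 1-p_1$, the first expression collapses to $\min\{p_2(1-p_1),p_1 p_2\}=p_2\min\{p_1,1-p_1\}$, and by the symmetric argument the second collapses to $p_1\min\{p_2,1-p_2\}$; the third is already in the stated form. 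This reproduces exactly the outer region claimed in Proposition \ref{prop1}.

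The only place that requires any care is justifying that the uniform distribution is the simultaneous maximizer of all six functionals, so that the three outer bounds can be evaluated at a common $P_X$ as the statement of Theorem \ref{thm:BC-coll} requires; this follows from the monotonicity observation above. Everything else is mechanical simplification of the $\min$ expressions, so I do not anticipate any serious obstacle.
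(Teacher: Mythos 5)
Your proposal is correct and is essentially the derivation the paper intends: Proposition \ref{prop1} is the specialization of Theorem \ref{thm:BC-coll} to the independent-BEC broadcast channel, and your evaluations $H(X|Y_i)=p_iH(X)$, $H(X|Y_1,Y_2)=p_1p_2H(X)$, $I(X;Y_i|Y_{\overline{i}})=p_{\overline{i}}(1-p_i)H(X)$, $I(X;Y_1,Y_2)=(1-p_1p_2)H(X)$, all maximized by the uniform input, collapse the three bounds to exactly the stated region. The simplification $\min\{1-p_1,\,p_2(1-p_1),\,p_1p_2\}=p_2\min\{p_1,1-p_1\}$ is also right, so no gap remains.
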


Now, we introduce two OT protocols. The first protocol supports pairwise OT between Alice and each of the two non-colluding receivers, allowing her to engage in independent OT sessions with each party. The second protocol remains secure even if the receivers collude. It adopts a sequential strategy: Alice first conducts an OT protocol with one receiver, and upon completion, initiates a separate OT session with the other.

\parbreak Let $s_i$ and $r_i$, for $i \in \{1,2\}$, be four parameters, and let 
\[
h_{ij} : \mathcal{R}_{ij} \times \mathcal{X}^n \to \{0,1\}^{s_i n}, 
\qquad 
\kappa_{ij} : \mathcal{T}_{ij} \times \mathcal{X}^n \to \{0,1\}^{r_i n}, 
\]
for $i \in \{1,2\}, j \in \{0,1\}$, be two-universal hash functions, and $\mathcal{R}_{ij}\subset R = (R^{(1)},R^{(2)}), \mathcal{T}_{ij}\subset T= (T^{(1)}=({\mathcal{T}_{10},\mathcal{T}_{11}}),T^{(2)}=({\mathcal{T}_{20},\mathcal{T}_{21}}))$ are seeds chosen uniformly at random from the randomness sets $R$ and $T$, respectively.  
The $\binom{2}{1}$-OT$^{k_1,k_2}$ rate pair is defined as 
\[
(r_1, r_2) = \left(\lim_{n\to\infty}\frac{k_1}{n}, \lim_{n\to\infty}\frac{k_2}{n}\right).
\]  
In this setting, the strings are encrypted using the hash functions 
$\kappa_{ij}$, while the functions $h_{ij}$ are employed for privacy amplification.

\begin{lemma}\label{lem: cri. nColl-BC}
    Protocol 1 is correct (reliable) and secure against non-colluding honest-but-curious players even without privacy amplification, but is not secure against colluding players. 
\end{lemma}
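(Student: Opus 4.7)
The plan is to verify the three non-colluding security conditions by exploiting the clean structure of independent BECs, and then to exhibit an explicit collusion attack. I would first address \emph{correctness}: because $S_{iZ_i}\subseteq \overline{E}_i$ consists of positions at which Bob-$i$'s BEC output is non-erased, Bob-$i$ already knows $\mathbf{X}|_{S_{iZ_i}}$ exactly and recovers $m_{iZ_i}$ by XOR with the public ciphertext $m_{iZ_i}\oplus \kappa_{iZ_i}(\mathbf{X}|_{S_{iZ_i}})$. The abort event $|\overline{E}_i|<r_i n$ occurs with probability vanishing in $n$ since $r_i<1-p_i$ by the choice of parameters, and a standard typicality argument (together with the hash check $h_{iZ_i}$) discards the measure-zero set of spurious decodings.

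For \emph{security of Bob-$i$'s choice against Alice}, I would observe that Alice's view contains $(\mathbf{X},S_{i0},S_{i1})$ and her own randomness but no BEC outputs. Since $Z_i$ is used only to label the subset drawn from $\overline{E}_i$ versus the subset drawn from $E_i$, and these labels are invisible to Alice, the joint distribution of $(S_{i0},S_{i1})$ given Alice's view is identical for $Z_i=0$ and $Z_i=1$. Hence $I(Z_1,Z_2;U)=0$ exactly, realising the perfect-secrecy remark (Remark~\ref{rem}).

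For \emph{security of Alice's hidden message against Bob-$i$}, the central observation is that $S_{i\overline{Z}_i}\subseteq E_i$ consists of positions Bob-$i$ sees only as erasures; conditional on his full view (channel outputs, set labels, hash keys, and ciphertexts), the block $\mathbf{X}|_{S_{i\overline{Z}_i}}$ therefore retains the full min-entropy $r_i n$, since the coordinates of $\mathbf{X}$ are i.i.d.\ Bernoulli$(\tfrac12)$ and independent of the erasure pattern. Invoking Lemma~\ref{DLHL} on the two-universal hash $\kappa_{i\overline{Z}_i}$ with output length $n(r_i-\lambda')<r_i n$, the image is exponentially close to uniform, so the pad $m_{i\overline{Z}_i}\oplus\kappa_{i\overline{Z}_i}(\mathbf{X}|_{S_{i\overline{Z}_i}})$ carries negligible information about $m_{i\overline{Z}_i}$. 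The auxiliary hashes $h_{i0},h_{i1}$ play no role here; they only assist decoding verification, which is the precise sense in which security holds ``without privacy amplification''.

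To establish \emph{insecurity under collusion}, I would let Bob-$\overline{i}$ forward $\mathbf{Y}_{\overline{i}}^n$ to Bob-$i$ after Step~3. Because the two BECs are independent, each position $\ell\in S_{i\overline{Z}_i}\subseteq E_i$ is received unerased by Bob-$\overline{i}$ with probability $1-p_{\overline{i}}$, independently of the first channel; by Chernoff concentration, Bob-$i$ thereby learns roughly $(1-p_{\overline{i}})r_i n$ coordinates of $\mathbf{X}|_{S_{i\overline{Z}_i}}$, driving its conditional smooth min-entropy to about $p_{\overline{i}}r_i n$ (with the smooth corrections supplied by Lemma~\ref{lemm: Holenstein}). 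Since $r_i$ is allowed in Protocol~1 to approach $\min\{p_i,1-p_i\}$, which strictly exceeds the colluding upper bound $p_{\overline{i}}\min\{p_i,1-p_i\}$ of Proposition~\ref{prop1}, the residual entropy falls below the hash output length $n(r_i-\lambda')$, the distributed leftover hash lemma fails in the converse direction, and $m_{i\overline{Z}_i}$ leaks. The main technical obstacle I foresee is this converse-direction argument: turning the Chernoff side-information estimate into a quantitative lower bound on the mutual information $I(M_{i\overline{Z}_i};V_1,V_2)$ that contradicts \eqref{goals: BC-Coll-2}, rather than merely observing that the standard security proof breaks down; all other steps reduce to direct invocation of the tools collected in Section~\ref{Sec: pre}.
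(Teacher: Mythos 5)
Your treatment of correctness and of the two non-colluding secrecy conditions is essentially the paper's own route: Appendix~\ref{app: proof of lem: cri. nColl-BC} simply defers these three conditions to the argument of Appendix~\ref{app: proof of lem: cri. Coll-BC}, namely a Chernoff bound for the abort event and exact recovery of $m_{iZ_i}$ from the non-erased positions, an exact symmetry argument giving $I(Z_1,Z_2;U)=0$ because the erasure patterns are independent of $X^n$ and $|S_{i0}|=|S_{i1}|$, and a privacy-amplification bound for the unselected string. The only real difference there is the tool: you apply the distributed leftover hash lemma (Lemma~\ref{DLHL}) to the full min-entropy $r_i n$ of $\mathbf{X}|_{S_{i\overline{Z}_i}}$, whereas the paper uses the R\'enyi-entropy version (Lemma~\ref{entropy hash}) to get leakage at most $2^{-n\lambda'}/\ln 2$; both are adequate and the conclusions coincide.

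The genuine gap is in the collusion part, and you flagged it yourself: showing that the residual (smooth) min-entropy of $\mathbf{X}|_{S_{i\overline{Z}_i}}$ drops to roughly $p_{\overline{i}}r_i n < n(r_i-\lambda')$ only shows that the hypothesis of the leftover hash lemma fails, and failure of a sufficient condition for security is not a proof of insecurity, so the assertion that $m_{i\overline{Z}_i}$ ``leaks'' is unsupported as written. The missing step is cheap, and cheaper than any converse-LHL statement: since $M_{i\overline{Z}_i}$ is determined by the public pad $m_{i\overline{Z}_i}\oplus\kappa_{i\overline{Z}_i}(\mathbf{X}|_{S_{i\overline{Z}_i}})$ together with the key value, one has $H(M_{i\overline{Z}_i}\mid V_1,V_2)\le H\big(\kappa_{i\overline{Z}_i}(\mathbf{X}|_{S_{i\overline{Z}_i}})\mid \kappa_{i\overline{Z}_i},\ \text{coalition's known coordinates}\big)\le p_{\overline{i}}r_i n(1+o(1))$ by your Chernoff estimate, whence $I(M_{i\overline{Z}_i};V_1,V_2)\ge n(r_i-\lambda')-p_{\overline{i}}r_i n-o(n)=\Omega(n)$ whenever $\lambda'<r_i(1-p_{\overline{i}})$ (note this holds for essentially all parameter choices of Protocol~1, not only when $r_i$ approaches $\min\{p_i,1-p_i\}$; the comparison with Proposition~\ref{prop1} is not the operative condition), contradicting \eqref{goals: BC-Coll-2}. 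With that line added your attack is complete, and it is in fact a different—and more quantitative—route than the paper's: the paper pins the failure on conditions \eqref{goals: BC-Coll-3} and \eqref{goals: BC-Coll-5}, decomposing $I(Z_i;U,V_{\overline{i}})$ as in \eqref{eq: dep} and asserting that the residual conditional terms do not vanish because the same $X^n$ is broadcast and the index sets $S_{i0},S_{i1}$ are available to both receivers, whereas you exhibit leakage of the unselected message itself through the pooled channel outputs.
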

\begin{proof}
    The proof is presented in \ref{app: proof of lem: cri. nColl-BC}.
\end{proof}
\parbreak

A rate pair $(R_1,R_2)$ is achievable if there exists
a sequence of $\binom{2}{1}$-OT$^{k_1,k_2}$ protocols satisfying \eqref{goals: BC-nColl-1}-\eqref{goals: BC-nColl-3}
such that\linebreak $\lim_{n\to\infty}(\frac{k_1}{n},\frac{k_2}{n})= (R_1, R_2)$.
\begin{theorem}\label{thm: OT capacity-non-col}
    In a general setup over a DM-BC consisting of two independent BECs depicted in Fig.~\ref{fig: both}-(b), the OT capacity region of the broadcast channel $\mathcal{W}:\mathcal{X}\rightarrow \mathcal{Y}_1\mathcal{Y}_2:P_{Y_1Y_2|X}(y_1,y_2|x) = P_{Y_1|X}(y_1|x)P_{Y_2|X}(y_2|x)$ with non-colluding parties is:
    \begin{align}%
    \mathcal{C}_{\text{OT}}& =
    \left\{ 
    \begin{array}{rl}
    (R_1,R_2)\in\mathbb{R}^2_{+} \,:\;
	R_1 &\leq  \max_{P_{X}} I(X;Y_1)\\
    R_2 & \leq \max_{P_{X}} I(X;Y_2)\\
	R_1+R_2 & \leq \max_{P_{X}} I(X;Y_1,Y_2)
    \end{array}
    \right\},
    \end{align}
\end{theorem}
\begin{proof}
    The proof is presented in \ref{app: thm: OT capacity-non-col}.
\end{proof}
Note that, for the DM-BC in Fig. \ref{fig: both}-$
(b)$ the capacity region is equal to:
    \begin{align}%
    \mathcal{C}_{\text{OT}}& =
    \left\{ 
    \begin{array}{rl}
    (R_1,R_2)\in\mathbb{R}^2_{+} \,:\;
	R_1 &\leq  1-p_1\\
    R_2 & \leq 1-p_2\\
	R_1+R_2 & \leq 1-p_1p_2
    \end{array}
    \right\},
    \end{align} 

\parbreak We present a slightly modified protocol based on \cite{Mishra2} that is secure against colluding parties.

\begin{lemma}\label{lem: cri. Coll-BC}
    Protocol 2 is correct and secure against colluding honest-but-curious players.
\end{lemma}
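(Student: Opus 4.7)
My plan is to verify, in order, the five asymptotic conditions of Definition~\ref{def: main} for colluding parties, adapting the techniques of \cite[Theorem~4]{Rudolf1} to the sequential two-round structure of Protocol~\ref{protocol: BC-Coll}. The key structural fact I will exploit repeatedly is that the three sets $S_{Z_i}$, $S_{\overline{Z}_i}$, and $S'$ revealed by Bob-$i$ in Step~3 are mutually disjoint and that $S_{\overline{Z}_i}, S' \subseteq E_i$; hence the coordinates of $\mathbf{X}$ used to mask different messages live on disjoint blocks of positions on which the colluding adversaries have essentially no joint information. Correctness (condition~\eqref{goals: BC-Coll-1}) then follows from the weak law of large numbers (the abort event in Step~2 has vanishing probability given $r_i < p_{\overline{i}}\min\{p_i,1-p_i\}-\lambda$), combined with standard joint-typicality plus two-universal hash verification arguments for Bob-$i$'s decoding in Step~5 and Bob-$\overline{i}$'s decoding in Step~10.

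For security of Alice against colluding receivers (condition~\eqref{goals: BC-Coll-2}), the mask applied to $M_{i\overline{Z}_i}$ is $\kappa_{i\overline{Z}_i}(\mathbf{X}|_{S_{\overline{Z}_i}})$, and by construction the coordinates $\mathbf{X}|_{S_{\overline{Z}_i}}$ are erased in Bob-$i$'s view and lie at positions disjoint from $S'$, so they are also independent of Bob-$\overline{i}$'s observation (since $X^n$ is i.i.d.\ Bernoulli$(\tfrac12)$). I would apply Lemma~\ref{lemm: Holenstein} to lower-bound the smooth min-entropy $H_\infty^{\epsilon'}(\mathbf{X}|_{S_{1\overline{Z}_1}}, \mathbf{X}|_{S_{2\overline{Z}_2}} \mid \mathbf{Y}_1, \mathbf{Y}_2, \mathbf{C})$ and then invoke the distributed leftover hash lemma (Lemma~\ref{DLHL}) jointly on $(\kappa_{1\overline{Z}_1},\kappa_{2\overline{Z}_2})$ to conclude that both masks are asymptotically $o(1)$-close to uniform in the combined view, yielding~\eqref{goals: BC-Coll-2}.

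For the receivers' choice privacy (conditions~\eqref{goals: BC-Coll-3}-\eqref{goals: BC-Coll-4}) and the inter-receiver leakage bound~\eqref{goals: BC-Coll-5}, I would use a symmetry argument combined with a further application of Lemma~\ref{DLHL}. Since Alice never observes the erasure pattern of the BECs and $\mathbf{Y}_{\overline{i}}$ is independent of $E_i$, the measure-preserving relabeling that swaps $S_0 \leftrightarrow S_1$ (together with $Z_i \mapsto \overline{Z}_i$) leaves the joint distribution of $(U,V_{\overline{i}})$ invariant; this yields $I(Z_i; U, V_{\overline{i}}) = 0$ and $I(Z_1,Z_2; U) = 0$ exactly, not merely asymptotically. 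For~\eqref{goals: BC-Coll-5}, the messages $M_{i0}, M_{i1}$ are masked by hash outputs on coordinates disjoint from those revealed to Bob-$\overline{i}$ through $S'$, while $M_{\overline{i}\overline{Z}_{\overline{i}}}$ is masked on coordinates erased in $\mathbf{Y}_{\overline{i}}$; a second invocation of Lemma~\ref{DLHL} closes this case.

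The main technical obstacle I expect is the joint application of Lemma~\ref{DLHL} for~\eqref{goals: BC-Coll-2}: one must carefully track how the min-entropy of $(\mathbf{X}|_{S_{1\overline{Z}_1}}, \mathbf{X}|_{S_{2\overline{Z}_2}})$ degrades when conditioned on the entire public transcript and both receivers' channel outputs (including the second-round transmission of $\mathbf{X}|_{S'}$ over BEC($p_{\overline{i}}$)), and then verify that the parameters $(r_i,\lambda,\lambda')$ chosen to reconcile the two stages still leave a positive entropy gap above the target output length $(r_i-\lambda')n$ demanded by Lemma~\ref{DLHL}. Once this accounting is in place, the achieved asymptotic rates match the outer bound of Proposition~\ref{prop1}, establishing both correctness and security against colluding honest-but-curious receivers.
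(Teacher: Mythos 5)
Your overall skeleton---correctness via Chernoff plus typicality/hash decoding, exact (non-asymptotic) symmetry for the choice-bit conditions \eqref{goals: BC-Coll-3}--\eqref{goals: BC-Coll-4}, and privacy amplification for the message-secrecy conditions---has the same shape as the paper's argument. The machinery differs: you propose smooth min-entropy bounds (Lemma~\ref{lemm: Holenstein}) plus the distributed leftover hash lemma (Lemma~\ref{DLHL}), whereas the paper bounds \eqref{goals: BC-Coll-2} and \eqref{goals: BC-Coll-5} term by term through chain-rule/Markov decompositions and the R\'enyi-2 privacy-amplification bound of Lemma~\ref{entropy hash}, obtaining the explicit leakage $2^{-n\lambda'}/\ln 2$. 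That substitution by itself would be acceptable.

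There is, however, a genuine gap in the step you use to justify security for Alice. You claim that the mask coordinates $\mathbf{X}|_{S_{\overline{Z}_i}}$ are independent of Bob-$\overline{i}$'s observation because $S_{\overline{Z}_i}$ is disjoint from $S'$ and $X^n$ is i.i.d. In the broadcast model this lemma is about, that is false: the phase-one transmission of $X^n$ also reaches Bob-$\overline{i}$ through $\mathrm{BEC}(p_{\overline{i}})$, so the coalition learns roughly a $(1-p_{\overline{i}})$ fraction of the coordinates in $S_{\overline{Z}_i}$ (and likewise of $S_{Z_i}$, which is what matters for \eqref{goals: BC-Coll-5}). This is exactly why Protocol~\ref{protocol: BC-Coll} inflates the index sets to size $\frac{r_i}{p_{\overline{i}}-\lambda'}n$ and requires $r_i < p_{\overline{i}}\min\{p_i,1-p_i\}-\lambda$: only the $p_{\overline{i}}$-fraction of mask positions that are also erased at Bob-$\overline{i}$ carries residual entropy, leaving about $r_i n \geq n(r_i-\lambda')+n\lambda'$ bits so that the extraction (and hence the $2^{-n\lambda'}$ bound) goes through; this is the conditioning on $Y_{\overline{i}}^n$ performed in \eqref{final2: BC-nColl} and in the bound for \eqref{goals: BC-Coll-5}. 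If your independence claim were true, the factor $p_{\overline{i}}$ would be superfluous and you would be deriving a rate strictly larger than the inner region of Proposition~\ref{prop2} permits. Your closing paragraph gestures at the correct accounting (conditioning on both receivers' outputs), but the argument as stated for \eqref{goals: BC-Coll-2} and \eqref{goals: BC-Coll-5} rests on the wrong claim, and that $p_{\overline{i}}$-discounted entropy accounting is the substantive content of the proof, not a routine parameter check. A minor further point: the achieved rates do not meet the outer bound of Proposition~\ref{prop1} in general (the paper notes the bounds do not coincide in the colluding case), so that closing assertion should be dropped; it is in any case not needed for the lemma, which asserts only correctness and security.
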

\begin{proof}
    The proof is presented in \ref{app: proof of lem: cri. Coll-BC}.
\end{proof}
\begin{proposition}\cite{mishra}\label{prop2}
    In a general setup over a DM-BC consisting of two independent BECs in the setup of Fig. \ref{fig: both}-$(b)$, the rate region $\mathcal{R}$ of Protocol 2, for colluding honest-but-curious users is such that $\mathcal{R}_{\text{inner}}\subseteq\mathcal{R}\subseteq\mathcal{R}_{\text{outer}}$, where ($(R_1, R_2) \in \mathbb{R}_+^2$):
    
\begin{align*}
\mathcal{R}_{\text{inner}} = \left\{\ 
\begin{aligned}
R_1 &\leq p_2 \min\{p_1, 1 - p_1\}, \\
R_2 &\leq p_1 \min\{p_2, 1 - p_2\}, \\
R_1 + R_2 &\leq\ p_2 \min\{p_1, 1 - p_1\} \\
& \,\,\, + p_1 \min\{p_2, 1 - p_2\} \\
& \,\,\,- \min\{p_1, 1 - p_1\} \cdot \min\{p_2, 1 - p_2\}
\end{aligned}
\right\},
\end{align*}
and $\mathcal{R}_{\text{outer}}$ is presented in Proposition \ref{prop1}.
\end{proposition}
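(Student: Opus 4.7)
The plan is to establish the two containments $\mathcal{R}_{\text{inner}}\subseteq\mathcal{R}\subseteq\mathcal{R}_{\text{outer}}$ separately. The outer inclusion $\mathcal{R}\subseteq\mathcal{R}_{\text{outer}}$ is immediate from Proposition~\ref{prop1}, since the setup (the DM-BEBC of Fig.~\ref{fig: both}-$(b)$) and the adversary model (honest-but-curious colluding parties) coincide with those of Proposition~\ref{prop1}.

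For the inner inclusion $\mathcal{R}_{\text{inner}}\subseteq\mathcal{R}$, I would argue that Protocol~\ref{protocol: BC-Coll}, whose correctness and collusion-resistance are established in Lemma~\ref{lem: cri. Coll-BC}, serves as the achievability scheme. For the individual-rate bounds, note that in step~3 Bob-$i$ must realize $\lvert S_{Z_i}\rvert=\lvert S_{\overline{Z}_i}\rvert = \tfrac{r_i}{p_{\overline{i}}-\lambda'}n$ as subsets of $\overline{E}_i$ and $E_i$, respectively; by typicality of the erasure pattern at Bob-$i$, this is feasible with high probability only when $\tfrac{r_i}{p_{\overline{i}}-\lambda'}\le\min\{p_i,1-p_i\}$. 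Letting $\lambda,\lambda'\to 0$ then yields $R_i\le p_{\overline{i}}\min\{p_i,1-p_i\}$. Privacy of $m_{iZ_i}$ against the colluder Bob-$\overline{i}$ would follow from the distributed leftover hash lemma (Lemma~\ref{DLHL}) applied to $\kappa_{iZ_i}(\mathbf{X}|_{S_{Z_i}})$, invoking Lemma~\ref{lemm: Holenstein} to lower-bound the smooth min-entropy of $\mathbf{X}|_{S_{Z_i}}$ given Bob-$\overline{i}$'s independent BEC($p_{\overline{i}}$) observations by approximately $p_{\overline{i}}\lvert S_{Z_i}\rvert$.

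For the sum-rate bound, a careful two-phase channel-use accounting is needed. Steps 1--5 consume $n$ channel uses for the OT with Bob-$i$, while steps 6--10 consume an additional $\lvert S'\rvert$ channel uses for the subsequent OT with Bob-$\overline{i}$, for a total cost of $n+\lvert S'\rvert$. The subtraction term $-\min\{p_1,1-p_1\}\cdot\min\{p_2,1-p_2\}$ should then emerge from an inclusion--exclusion argument: positions Alice repurposes in step~6 to seed Bob-$\overline{i}$'s set $S'_{Z_{\overline{i}}}$ partially overlap with the erasure positions already consumed in forming $S_{\overline{Z}_i}$, and this overlap must be deducted when the individual message lengths $(k_1,k_2)$ are normalized by the total number of channel accesses.

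The main obstacle will be the rigorous derivation of the sum-rate expression. Obtaining the exact coefficient in the subtraction term requires tracking the joint distribution of the erasure sets $(E_1,\overline{E}_1,E_2,\overline{E}_2)$ at both receivers and verifying that the DLHL can be invoked jointly on the tuple $(\kappa_{10}(\mathbf{X}|_{S_{10}}),\kappa_{11}(\mathbf{X}|_{S_{11}}),\kappa_{20}(\mathbf{X}|_{S_{20}}),\kappa_{21}(\mathbf{X}|_{S_{21}}))$ without spurious leakage through the shared public transcript $\mathbf{C}$. Verifying that all five colluding-security criteria \eqref{goals: BC-Coll-1}--\eqref{goals: BC-Coll-5} continue to hold uniformly under this combined accounting, while matching the precise form of the inner bound, is the most delicate step.
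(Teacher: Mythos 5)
Your outer containment and the individual-rate bounds are in line with the paper: $\mathcal{R}\subseteq\mathcal{R}_{\text{outer}}$ is indeed just Proposition~\ref{prop1}, and the paper likewise gets $R_i\le p_{\overline{i}}\min\{p_i,1-p_i\}$ from Protocol~\ref{protocol: BC-Coll} (via the smooth min-entropy bounds from \cite{Winter1} together with the Ahlswede--Csisz\'ar bound \eqref{Lower}, with Lemma~\ref{lem: cri. Coll-BC} supplying security under collusion). The genuine gap is the sum-rate constraint, i.e.\ the term $-\min\{p_1,1-p_1\}\cdot\min\{p_2,1-p_2\}$, which is the whole content of $\mathcal{R}_{\text{inner}}$ beyond the single-user bounds: you explicitly defer its derivation, and the mechanism you sketch would not produce it. In particular, the ``inclusion--exclusion over overlapping positions'' picture cannot be right, because in step~3 of Protocol~\ref{protocol: BC-Coll} the set $S'$ is drawn from $E_i\setminus[S_{\overline{Z}_i}]$, so the positions repurposed for Bob-$\overline{i}$ are \emph{disjoint} by construction from those consumed in forming $S_{\overline{Z}_i}$; there is no overlap to deduct. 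Likewise, renormalizing $(k_1,k_2)$ by a total cost $n+\lvert S'\rvert$ is not how the paper accounts for the two phases and would change the individual-rate bounds as well, not only the sum rate.

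What the paper actually does is an erasure-budget argument over the two possible phase orders of the successive protocol, combined by time-sharing. Writing $m_i=\min\{p_i,1-p_i\}$: running the OT with Bob-1 first consumes a fraction $1-p_1$ of the positions erased at Bob-1, so the second phase with Bob-2 can only draw on the leftover erased fraction $p_1-(1-p_1)=2p_1-1$, giving the corner point $\bigl(p_2m_1,\,(2p_1-1)m_2\bigr)$; the reverse order gives $\bigl((2p_2-1)m_1,\,p_1m_2\bigr)$. Since the second phase is only nontrivial for $p_i>\tfrac12$, one has $m_i=1-p_i$ and hence $(2p_1-1)m_2=p_1m_2-m_1m_2$ (and symmetrically), so both corner points lie on the line $R_1+R_2=p_2m_1+p_1m_2-m_1m_2$, and time-sharing between the two orders fills out exactly the claimed $\mathcal{R}_{\text{inner}}$. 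Without this order-dependent leftover accounting (or an equivalent argument), your proposal establishes only the rectangle $R_1\le p_2m_1$, $R_2\le p_1m_2$, which is strictly larger than $\mathcal{R}_{\text{inner}}$ and is not what Protocol~\ref{protocol: BC-Coll} achieves.
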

\begin{proof}\label{thm: lower: coll}
    The proof is presented in \ref{app: thm: lower: coll}.
\end{proof}
\section{Conclusion}\label{Sec: conc}
We study the problem of OT over a noisy DM-BC with one sender and two receivers, considering both colluding and non-colluding honest-but-curious parties. General upper bounds on the OT capacity are established for both cases. The channel model is then reduced to a noisy DM-BC consisting of two independent sub-BECs with different noise levels. For the case of non-colluding parties, the lower and upper bounds coincide, whereas for colluding parties, they do not generally coincide. However, the case in which the players are also allowed to cheat (i.e., malicious behavior or active attacks) remains an open problem.

\section*{Acknowledgments}
\begin{sloppy}
   
The authors acknowledge financial support from the German Federal Ministry of Research, Technology and Space (BMFTR) under the ``Souverän. Digital. Vernetzt.'' programme through the project 6G-life (grants 16KIS2414 and 16KIS2415), as well as through the Quantum Programme projects QuaPhySI (grants 16KISQ1598K and 16KIS2234), QR.N (grants 16KIS2195 and 16KIS2196), Q-STARS (grants 16KIS2601 and 16KIS2602), QUIET (grants 16KISQ093 and 16KISQ0170), QD-CamNetz (grants 16KISQ077 and 16KISQ169), and Q-TREX (grants 16KISR027K and 16KISR038).
 
\end{sloppy}
H.~Boche acknowledges funding from the German Research Foundation (DFG) under Germany's Excellence Strategy—EXC 2050/2, Project ID 390696704, Cluster of Excellence ``Centre for Tactile Internet with Human-in-the-Loop'' (CeTI), Technische Universität Dresden.

Parts of the results presented in this paper were previously reported in the conference paper~\cite{Hadi-ICC26}.

\clearpage
\section*{Appendix}
\appendix
\renewcommand{\thesection}{Appendix~\Alph{section}}
\renewcommand{\theHsection}{appendix.\Alph{section}}

\section{Proof of Corollary~\ref{cor}}
\label{app: cor}

This is intuitively reasonable. Consider the point-to-point OT in the presence of a passive wiretapper who can collude with Alice or Bob. If an OT protocol provides perfect secrecy, then it should also provide perfect secrecy in the presence of a further wiretapper since the wiretapper has access only to the noiseless public channel and does not interact directly with either party. In other words, the wiretapper gains nothing more than the other parties already have. Consider the following setup:

\parbreak Let $Z$ denote the receiver's choice bit (or choice variable) and let $M_{\overline{Z}}$ denote the sender's secret that should remain hidden.   
Let $U$ and $V_B$ denote Alice's and Bob's final views, respectively.  
The random variable $U$ includes the total public transmission $\mathbf{C}$, and Alice's randomness $R_A$, and the random variable $V_B$ includes the total public transmission $\mathbf{C}$, and Bob's randomness $R_B$.   
Formally, $U = (\mathbf{C},R_A)$, $V_B = (\mathbf{C},R_B)$. 

\parbreak Let $V_W$ denote the wiretapper's knowledge when colluding with Alice (or Bob).  
By the collusion model, $V_W$ is some (possibly randomized or deterministic) function of $U$ (or $V_B$) and possibly $\mathbf{C}$.  
Formally we write in the case where the wiretapper and Alice collude
\[
  V_W = g(U),
\]
for some function $g$.
When a perfect OT is possible, the protocol guarantees
\begin{equation}
I(Z;U)=0
\quad \text{and} \quad
I(M_{\overline{Z}};V_B,V_W)=0,
\end{equation}
i.e.\ the honest party's full view reveals zero mutual information about the protected secrets.

Under the above collusion model, the wiretapper learns no information about the secrets:
\[
I(Z;V_W)=0 \quad\text{and}\quad I(M_{\overline{Z}};V_W)=0.
\]

Since $V_W$ is obtained as a (possibly randomized) function of $U$, we may apply the data-processing inequality:
\[
I(Z;V_W) \le I(Z;U).
\]
By the perfect secrecy assumption, $I(Z;U)=0$. 
Thus, $I(Z;V_W)=0$.

On the other hand:
\[
I(M_{\overline{Z}};V_B,V_W) = I(M_{\overline{Z}};V_W) + I(M_{\overline{Z}};V_B|V_W) = 0.
\]
Thus, $I(M_{\overline{Z} };V_W) = 0$.

Therefore, the colluding wiretapper, whose knowledge is nothing more than data computable from the other party's views, cannot increase its information about the secrets beyond zero. 

\section{Proof of Theorem \ref{thm:BC-coll}\label{app1:proof of thm:BC-coll}}

\parbreak Consider Eq. \eqref{goals: BC-Coll-2}:
\begin{align}
    \lim_{n\rightarrow\infty}I (  M_{1\overline{Z}_1},M_{2\overline{Z}_2};V_1,V_2) & \,= I (  M_{1\overline{Z}_1};V_1,V_2) + I (  M_{2\overline{Z}_2};V_1,V_2|M_{1\overline{Z}_1})\notag\\
    & \,= I (  M_{1\overline{Z}_1};V_1,V_2) + I (  M_{2\overline{Z}_2};M_{1\overline{Z}_1}, V_1,V_2)\notag\\
    & \overset{(a)}{=} \sum_{i=[1:2]}\lim_{n\rightarrow\infty}I (  M_{i\overline{Z}_i};V_i,V_{\overline{i}})\\
    & \,= 0,\notag
\end{align}
where $(a)$ follows from the independence of $(M_{1\overline{Z}_1},V_1)$ from $M_{2\overline{Z}_2}$ given $V_2$.

\parbreak This means that both unselected messages should be hidden from both receivers separately. Consider the smooth version of $\lim_{n\rightarrow\infty}I (  M_{i\overline{Z}_i};V_i,V_{\overline{i}}) = 0$ as:

\begin{equation}\label{smoothed1}
    I (  M_{i\overline{Z}_i};V_i,V_{\overline{i}})_{i \in \{1,2\}} = I (  M_{i\overline{z}_i};V_i,V_{\overline{i}}|Z_i = z_i)_{i \in \{1,2\}} = o(n).
\end{equation}


Now consider Eq. \eqref{goals: BC-Coll-3}:
\begin{align}\label{smoothed2}
    \lim_{n\rightarrow\infty}I (Z_i; U, V_{\overline{i}})_{i\in\{1,2\}} & = \lim_{n\rightarrow\infty}I (Z_i; M_{i0}, M_{i1}, R_{A}, X^n , \mathbf{C}_i, V_{\overline{i}})_{i\in\{1,2\}}\notag\\
    & = \lim_{n\rightarrow\infty}I (Z_i=z_i; M_{i\overline{z}_i}, R_{A}, X^n , \mathbf{C}_i, V_{\overline{i}}|Z_i=z_i)_{i\in\{1,2\}}\\
    & = 0.\notag
\end{align}
By applying Lemma \ref{lemma: Rudolf} to the smooth version of Eq. \eqref{smoothed2} without considering the random variable representing local randomness, we have:

\begin{equation}\label{smoothed: applied lemma}
    H(M_{i\overline{z}_i}| X^n, \mathbf{C}_i, V_{\overline{i}},Z_i = \overline{z}_i)_{i\in\{1,2\}} - H(M_{i\overline{z}_i}| X^n, \mathbf{C}_i, V_{\overline{i}}, Z_i = z_i)_{i\in\{1,2\}} = o(n).
\end{equation}
Since $H(M_{i\overline{z}_i}|Z_i=z_i, V_{\overline{i}})_{i\in\{1,2\}} = H(M_{i\overline{z}_i}|Z_i=\overline{z}_i, V_{\overline{i}})_{i\in\{1,2\}}$, then Eq. \eqref{smoothed: applied lemma} can be written as follows:

\begin{equation}
    I(M_{i\overline{z}_i};\mathbf{C}_i, V_{\overline{i}}|Z_i = \overline{z}_i)_{i\in\{1,2\}} = I(M_{i\overline{z}_i};\mathbf{C}_i, V_{\overline{i}}|Z_i = z_i)_{i\in\{1,2\}} + o(n).
\end{equation}
Then, from Eq. \eqref{smoothed1}, we can conclude: 

\begin{equation}\label{final: BC-coll}
    I(M_{i{z}_i};\mathbf{C}_i, V_{\overline{i}}|Z_i = {z}_i)_{i\in\{1,2\}}  = I(M_{i\overline{z}_i};\mathbf{C}_i, V_{\overline{i}}|Z_i = \overline{z}_i)_{i\in\{1,2\}}       = o(n).
\end{equation}
We can show that:
\begin{align}\label{final: BC-coll-2}
    I(M_{i{z}_i};\mathbf{C}_i, V_{\overline{i}})_{i\in\{1,2\}} & = I(M_{i\overline{z}_i};\mathbf{C}_i, V_{\overline{i}})_{i\in\{1,2\}} \notag\\
    & = \sum_{z_i\in\{0,1\}} \text{Pr}[{Z_i = z_i}]\,.\, I(M_{i\overline{z}_i};\mathbf{C}_i, V_{\overline{i}}|Z_i = \overline{z}_i)_{i\in\{1,2\}}\notag\\
    & = o(n).
\end{align}
Also, another form of \eqref{final: BC-coll} could be:
\begin{align}
    I(M_{i{z}_i};\mathbf{C}_i, V_{\overline{i}}|Z_i = {z}_i)_{i\in\{1,2\}} & = I(M_{i{z}_i};\mathbf{C}_i, Z_{\overline{i}}, R_{B_i}, Y_{\overline{i}}^n|Z_i = {z}_i)_{i\in\{1,2\}} \notag\\
    & \leq I(M_{i{z}_i};\mathbf{C}_i, Y_{\overline{i}}^n|Z_i = {z}_i)_{i\in\{1,2\}} \notag\\
    & = o(n).
\end{align}
Then, by the same reasoning as \eqref{final: BC-coll-2}, we have:
\begin{equation}\label{final: BC-coll-3}
     I(M_{i{z}_i};\mathbf{C}_i, Y_{\overline{i}}^n)_{i\in\{1,2\}}  = o(n).
\end{equation}

First, we want to prove for every OT protocol that follows the general two party protocol described before, we have,
\begin{align}
    \label{eq. SKA-main1}
    k_i &= H(M_{iz_i}|Z_i=\overline{z}_i) \notag\\
    & = H(M_{i\overline{z}_i}|Z_i=\overline{z}_i)\notag\\
    & \leq H(M_{iz_i})\notag\\
    & \leq I(X; Y_i | Y_{\overline{i}})+ I(M_{iz_i}; \mathbf{C}_i, Y_{\overline{i}}) + H(M_{iz_i} | \hat{M}_{iz_i}).
\end{align}
In particular, if we consider $Y_{\overline{i}}$ is constant random variable for $X$ and $Y_i, i\in\{1,2\}$ then:
\begin{equation}\label{eq. SKA-main2}
    H(M_{iz_i}) \leq I(X; Y_i) + I(M_{iz_i}; \mathbf{C}_i) + H(M_{iz_i} | \hat{M}_{iz_i}).
\end{equation}
We have,
\begin{equation}\label{eq. SKA1}
    H(M_{iz_i}) = I(M_{iz_i}; \mathbf{C}_i ,Y_{\overline{i}}) + H(M_{iz_i} | \mathbf{C}_i , Y_{\overline{i}}), 
\end{equation}
where $H(M_{iz_i} | \mathbf{C}_i, Y_{\overline{i}})$ can be bounded from above as follows:
\begin{align}\label{eq. SKA2}
H(M_{iz_i} | \mathbf{C}_i, Y_{\overline{i}}) &= H(M_{iz_i},X | \mathbf{C}_i, Y_{\overline{i}}) - H(X | \mathbf{C}_i, M_{iz_i},Y_{\overline{i}})\notag \\
&= H(X | \mathbf{C}_i, Y_{\overline{i}}) + H(M_{iz_i} | \mathbf{C}_i, X,Y_{\overline{i}}) - H(X | \mathbf{C}_i, M_{iz_i},Y_{\overline{i}}) \notag\\
&\stackrel{(a)}{\leq} H(X | \mathbf{C}_i, Y_{\overline{i}}) - H(X  | \mathbf{C}_i, M_{iz_i},Y_i,Y_{\overline{i}})\notag\\
& = H(X | \mathbf{C}_i, Y_{\overline{i}}) - H(X,M_{iz_i} | \mathbf{C}_i, Y_i,Y_{\overline{i}}) + H(M_{iz_i} | \mathbf{C}_i, Y_i,Y_{\overline{i}})\notag \\
&\stackrel{(b)}{=} H(X | \mathbf{C}_i, Y_{\overline{i}}) - H(X | \mathbf{C}_i, Y_i,Y_{\overline{i}}) + H(M_{iz_i} | \mathbf{C}_i, Y_i,Y_{\overline{i}}) \notag\\
&\stackrel{(c)}{\leq} I(X; Y_i | \mathbf{C}_i, Y_{\overline{i}}) + H(M_{iz_i} | \hat{M}_{iz_i}), 
\end{align}
for $i\in\{1,2\}$,  where $(a)$ and $(b)$ follow immediately from the fact that\linebreak $H(M_{iz_i}|\mathbf{C}_i, X) = 0$ and the fact that conditioning does not increase entropy. $(c)$ follows from the fact that $H(\hat{M}_{iz_i}|\mathbf{C}_i, Y_i) = 0$ that implies $H(M_{iz_i} | \mathbf{C}_i ,Y_i,Y_{\overline{i}}) = H(M_{iz_i} | \hat{M}_{iz_i},\mathbf{C}_i, Y_i,Y_{\overline{i}}) \leq H(M_{iz_i}|\hat{M}_{iz_i})$. 
\begin{align}
    H(M_{iz_i} | \mathbf{C}_i, Y_i,Y_{\overline{i}}) & = H(S | \mathbf{C}_i, \hat{M}_{iz_i}, Y_i,Y_{\overline{i}})\notag\\
    & \leq H(M_{iz_i} | \hat{M}_{iz_i}).
\end{align}

Now we show that using the public noiseless channel cannot increase the mutual information shared by Alice and Bob-$i$ when given Bob-$\overline{i}$’s total information consisting of $\mathbf{C}_i$ and $Y_{\overline{i}}$. Without loss of generality, assume that $n$ is odd, i.e., that the last public message $\mathbf{C}_i$ is sent by Alice and thus $H(\mathbf{C}_i | C_i^{n-1}, X) = 0, i\in\{1,2\}$. The proof for even $n$ is analogous.
\begin{align}
I(X; Y_i | \mathbf{C}_i, Y_{\overline{i}}) &= H(Y_i | \mathbf{C}_i, Y_{\overline{i}}) - H(Y_i | \mathbf{C}_i, X,Y_{\overline{i}}) \notag\\
& \stackrel{(a)}{=} H(Y_i | \mathbf{C}_i ,Y_{\overline{i}}) - H(Y_i | C_i^{n-1} ,X,Y_{\overline{i}})\notag \\
&\leq H(Y_i | C_i^{n-1} ,Y_{\overline{i}}) - H(Y_i | C_i^{n-1} ,X,Y_{\overline{i}})\notag \\\label{eq. SKA3'}
& = I(X; Y_i | C_i^{n-1} ,Y_{\overline{i}})\\\label{eq. SKA3}
& \stackrel{(b)}{\leq} I(X;Y_i|Y_{\overline{i}}),
\end{align}
where $(a)$ follows from the fact that $H(C_{i,k}|C_i^{k-1}, X) = 0$ for odd $k$ ($C_{i,k}$ is a function of $U, R_A, \text{and}\,\{C_j, j<k\}$) and $H(C_{i,k}|C_i^{k-1}, Y) = 0$ for even $k$ ($C_{i,k}$ is a function of $V_i, R_{B_i}, \text{and}\,\{C_j, j<k\}$), and $(b)$ is due to repeating the above argument $n$ times. Putting \eqref{eq. SKA1}, \eqref{eq. SKA2}, and \eqref{eq. SKA3} together, proves \eqref{eq. SKA-main1}.

\parbreak Now, consider the last term of \eqref{eq. SKA-main1}. Applying Fano's lemma to \eqref{goals: BC-Coll-1}, we have:
\begin{equation}\label{eq. SKA4}
    H(M_{iz_i} | \hat{M}_{iz_i})\leq h(\varepsilon) + \log (|\mathcal{M}_{iz_i}|-1).
\end{equation}

Let $|\mathcal{M}_{iz_i}|$ denote the number of distinct values that the random variable $M_{iz_i}$ can take with nonzero probability. Note that $H(M_{iz_i} \mid \hat{M}_{iz_i}) \to 0$ as $\varepsilon \to 0$. If we require that $\Pr[M_{iz_i} \neq \hat{M}_{iz_i}] = 0$, and $I(M_{iz_i}; C_i^n) = 0$, it may seem intuitive—though not immediately obvious—that $I(X; Y_i)$ serves as an upper bound on $H(M_{iz_i})$. Similarly, it appears intuitive that $H(\hat{M}_{iz_i}) \leq I(X; Y_i \mid Y_{\overline{i}}) = I(X,Y_{\overline{i}}; Y_i,Y_{\overline{i}}) - H(Y_{\overline{i}})$ since even if Alice and Bob-$i$ were to learn $Y_{\overline{i}}$, the mutual information they could share remains bounded by the information they can hold in common, minus the uncertainty due to $Y_{\overline{i}}$.

\parbreak Now combining \eqref{eq. SKA4}, \eqref{eq. SKA-main1}, and \eqref{eq. SKA-main2}, we have:
\begin{align}
    k_i \leq H(M_{iz_i}) \leq \min \Big\{ I(X; Y_i | Y_{\overline{i}}), I(X; Y_i)\Big\} &+ I(M_{i{z}_i};\mathbf{C}_i, Y_{\overline{i}}^n)\notag\\
    & + h(\varepsilon) + \log (|\mathcal{M}_{iz_i}|-1),
\end{align}
Now, we can calculate an upper bound on OT capacity:
\begin{align}
    R_i = \lim_{n\to\infty} \frac{k_i}{n} &\leq \min \Big\{ I(X; Y_i | Y_{\overline{i}}), I(X; Y_i)\Big\}\notag\\
    & \leq \min \Big\{ \max_{P_X} I(X; Y_i | Y_{\overline{i}}), \max_{P_X} I(X; Y_i)\Big\},
\end{align}
for $i\in\{1,2\}$. The first inequality follows from \eqref{final: BC-coll-3} and the fact that $H(M_{iz_i} \mid \hat{M}_{iz_i}) \to 0$ as $\varepsilon \to 0$, and the last inequality follows from the following argument:

In proving the upper bound, our goal is to show:
\begin{equation}\label{(19)}
R_i = \frac{1}{n} H(M_{iZ_i}) \leq \max_{P_X} I(X;Y_i|Y_{\overline{i}}) + \frac{1}{n} H(M_{iZ_i}|\hat{M}_{iZ_i}) 
+ \frac{1}{n} I(M_{iZ_i}; C_i^n ,Y_{\overline{i}}^n). 
\end{equation}

According to \eqref{final: BC-coll-3}, \eqref{eq. SKA4} and the definition of secrecy capacity the last two terms go to zero as $n \to \infty$.  
This means that  $H(M_{iZ_i})/n \leq \max_{P_X} I(X;Y_i|Y_{\overline{i}})$,and by choosing $Y_{\overline{i}}$ as a constant random variable would be \[
H(M_{iZ_i})/n \leq \max_{P_X} I(X;Y_i).
\]

\parbreak In order to prove \eqref{(19)}, note that $H(M_{iZ_i}) \leq I(\mathbf{X}; \mathbf{Y}_i | \mathbf{C}_i ,\mathbf{Y}_{\overline{i}}) +  H(M_{iZ_i}|\hat{M}_{iZ_i})
+ I(M_{iZ_i}; \mathbf{C}_i ,\mathbf{Y}_{\overline{i}})$
can be obtained in a manner entirely analogous to the derivations of \eqref{eq. SKA1} and \eqref{eq. SKA2}. What is left is to demonstrate that
\begin{equation}\label{main}
I(\mathbf{X}; \mathbf{Y}_i | \mathbf{C}_i, \mathbf{Y}_{\overline{i}}) \leq n \cdot \max_{P_X} I(X; Y_i | Y_{\overline{i}}). 
\end{equation}
Let $l\in\{1,\ldots, n\}$.  
Alice’s choice of $P_{X_l}$ depends only on $C_i^{l-1}$ and $X^{l-1}$, but not further on $Y_i^{l-1}$ and $Y_{\overline{i}}^{l-1}$, i.e.,
\begin{equation}\label{(21)}
H(X_l | C_i^{l-1}, X^{l-1}, Y_i^{l-1}, Y_{\overline{i}}^{l-1}) = H(X_l | C_i^{l-1}, X^{l-1}, Y_{\overline{i}}^{l-1}) 
= H(X_l | C_i^{l-1}, X^{l-1}). 
\end{equation}

Likewise, the $l$-th broadcast channel output $[Y_{i,l}, Y_{\overline{i},l}]$ is influenced by $C_i^{l-1}$, $X^{l-1}$, $Y_i^{l-1}$, and $Y_{\overline{i}}^{l-1}$ solely through its dependence on $X_i$, which can be expressed as
\begin{equation}\label{(22)}
H(Y_{i,l}, Y_{\overline{i},l} | C_i^{l-1}, X^{l-1}, Y_i^{l-1}, Y_{\overline{i}}^{l-1}) 
= H(Y_{i,l}, Y_{\overline{i},l} | X_l). 
\end{equation}

One similarly has
\begin{equation}\label{(23)}
H(Y_{\overline{i},l} | C_i^{l-1}, X^{l-1}, Y_i^{l-1}, Y_{\overline{i}}^{l-1})  = H(Y_{\overline{i},l} | C_i^{l-1}, X^{l-1}, Y_{\overline{i}}^{l-1}) 
= H(Y_{\overline{i},l} | X_l), 
\end{equation}

when only the $Y_{\overline{i}}$-output of the channel is considered.  

\parbreak We now derive two equalities that will be used later.  
First, using \eqref{(22)} and \eqref{(23)} one obtains
\begin{align}\label{(24)}
H(Y_{i,l} | C_i^{l-1}, X^{l-1}, Y_i^{l-1}, Y_{\overline{i}}^l) 
& = H(Y_{i,l}, Y_{\overline{i},l} | C_i^{l-1}, X^{l-1}, Y_i^{l-1}, Y_{\overline{i}}^{l-1}) \notag \\
&\quad - H(Y_{\overline{i},l} | C_i^{l-1}, X^{l-1}, Y_i^{l-1}, Y_{\overline{i}}^{l-1}) \notag \\
& = H(Y_{i,l}, Y_{\overline{i},l} | Y_{\overline{i},l}) - H(Y_{\overline{i},l}| X_l) \notag \\
& = H(Y_{i,l} | X_l, Y_{\overline{i},l}).
\end{align}

Second, expanding the following conditional entropy in two different ways,
\begin{align}\label{mid1}
H(X_l, Y_i^{l-1}, Y_{\overline{i},l} | C_i^{l-1}, X^{l-1}, Y_{\overline{i}}^{l-1})
& \,= H(X_l, Y_{\overline{i},l} | C_i^{l-1}, X^{l-1}, Y_{\overline{i}}^{l-1}) \notag\\
& \quad+ H(Y_i^{l-1} |C_i^{l-1}, X^{l}, Y_{\overline{i}}^l)\\
& \,= H(X_l, Y_{\overline{i},l} | C_i^{l-1}, X^{l-1}, Y_i^{l-1}, Y_{\overline{i}}^{l-1})\notag\\
& \quad + H(Y_i^{l-1} | C_i^{l-1}, X^{l-1}, Y_{\overline{i}}^{l-1})\notag\\
& \,= H(X_l | C_i^{l-1}, X^{l-1}, Y_i^{l-1}, Y_{\overline{i}}^{l-1})\notag\\
& \quad+ H(Y_{\overline{i},l} | C_i^{l-1}, X^{l}, Y_i^{l-1}, Y_{\overline{i}}^{l-1})\notag\\
&\quad + H(Y_i^{l-1} | C_i^{l-1}, X^{l-1}, Y_{\overline{i}}^{l-1})\notag\\
& \overset{(a)}{=} H(X_l | C_i^{l-1}, X^{l-1}, Y_{\overline{i}}^{l-1})\notag\\
& \quad+ H(Y_{\overline{i},l} | C_i^{l-1}, X^{l}, Y_{\overline{i}}^{l-1})\notag\\
&\quad + H(Y_i^{l-1} | C_i^{l-1}, X^{l-1}, Y_{\overline{i}}^{l-1})\notag\\\label{mid2}
& \,= H(X_l, Y_{\overline{i},l}|C_i^{l-1}, X^{l-1}, Y_{\overline{i}}^{l-1})\notag\\
&\quad + H(Y_i^{l-1} | C_i^{l-1}, X^{l-1}, Y_{\overline{i}}^{l-1}),
\end{align}
where $(a)$ in due to \eqref{(21)} and \eqref{(23)}. Now, by a simple comparison between \eqref{mid1} and \eqref{mid2}, we can find:
\begin{equation}\label{(25)}
     H(Y_i^{l-1} | C_i^{l-1}, X^{l}, Y_{\overline{i}}^{l}) = H(Y_i^{l-1} | C_i^{l-1}, X^{l-1}, Y_{\overline{i}}^{l-1}), 
\end{equation}
for $i \in\{1,2\}$. Repeating the argument similar to \eqref{eq. SKA3'} for every public message of step $l$, we have:
\begin{align}\label{(26')}
I(X^{l}; Y_i^{l} | C_i^{l}, Y_{\overline{i}}^{l}) &\, \leq I(X^{l}; Y_i^{l} | C_i^{l-1}, Y_{\overline{i}}^{l})\notag\\
& \,= H(Y_i^{l} | C_i^{l-1}, Y_{\overline{i}}^{l}) - H(Y_i^{l} | C_i^{l-1}, X^{l}, Y_{\overline{i}}^{l}) \notag \\
& \,= H(Y_i^{l-1} | C_i^{l-1}, Y_{\overline{i}}^{l}) 
  + H(Y_{i,l} | C_i^{l-1}, Y_i^{l-1}, Y_{\overline{i}}^{l}) \notag \\
&\quad - H(Y_i^{l-1} | C_i^{l-1}, X^{l}, Y_{\overline{i}}^{l})
  - H(Y_{i,l} | C_i^{l-1} X^{l}, Y_i^{l-1}, Y_{\overline{i}}^{l}) \notag \\
&\overset{(a)}{\leq} H(Y_i^{l-1} | C_i^{l-1}, Y_{\overline{i}}^{l-1}) + H(Y_{i,l} | Y_{\overline{i},l}) \notag \\
&\quad - H(Y_i^{l-1} | C_i^{l-1}, X^{l-1}, Y_{\overline{i}}^{l-1})
  - H(Y_{i,l} | X_l, Y_{\overline{i},l}) \notag \\
& \,= I(X^{l-1}; Y_{\overline{i}}^{l-1} | C_i^{l-1}, Y_{\overline{i}}^{l-1}) 
  + I(X_l; Y_{i,l} | Y_{\overline{i},l}),
\end{align}
where $(a)$ follows from \eqref{(24)} and \eqref{(25)}, and the trivial inequalities\linebreak $H(Y_i^{l-1} | C_i^{l-1}, Y_{\overline{i}}^{l}) 
\leq H(Y_i^{l-1} | C_i^{l-1}, Y_{\overline{i}}^{l-1})$, and\linebreak $H(Y_{i,l} | C_i^{l-1}, Y_i^{l-1}, Y_{\overline{i}}^{l}) 
\leq H(Y_{i,l} | Y_{\overline{i},l})$. Then applying \eqref{main} to the above term for $l = n, n-1, \cdots, 1$, we have:
\begin{align*}
    I(X_l; Y_{i,l} | Y_{\overline{i},l})&\leq n.\max_{P_X}I(X; Y_i | Y_{\overline{i}})-(n-1).\max_{P_X} I(X; Y_i | Y_{\overline{i}})\\
    & = \max_{P_X} I(X; Y_i | Y_{\overline{i}}).
\end{align*}
If we take $Y_{\overline{i}}$ to be constant, then:\linebreak $I(X_l;Y_{i,l})\leq\max_{P_X} I(X;Y_i)$.

\parbreak The whole of the above calculation can be repeated for the sum rate, then we have: 
\[ R_1 + R_2 \leq \lim_{n\to\infty} \frac{k_1+k_2}{n}\leq\max_{P_X} I(X;Y_1,Y_2).\]

\parbreak To derive another upper bound on the OT capacity, we have the following chain of equalities and inequalities:
\begin{lemma}\label{lemma: Rudolf-app}
For any $z_i\in \{0, 1\}$ and $i\in\{1,2\}$, we have: 
\[
I(M_{iz_i}; Y_i^n, R_{B_i} | X^n, \mathbf{C}_i,V_{\overline{i}}, Z_i = z_i) = 0,
\]
\end{lemma}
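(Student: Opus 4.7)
The plan is to apply the chain rule of mutual information to split the claim into two simpler independence statements, each of which follows from the basic independence structure of the channel-model protocol described in Section~\ref{noisy protocol:BC}. Concretely, I would first write
\begin{align*}
I(M_{iz_i}; Y_i^n, R_{B_i} | X^n, \mathbf{C}_i, V_{\overline{i}}, Z_i = z_i) &= I(M_{iz_i}; R_{B_i} | X^n, \mathbf{C}_i, V_{\overline{i}}, Z_i = z_i) \\
&\quad + I(M_{iz_i}; Y_i^n | X^n, \mathbf{C}_i, V_{\overline{i}}, R_{B_i}, Z_i = z_i),
\end{align*}
and argue that both terms are identically zero.

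For the first term, I would use the fact that $R_{B_i}$ is produced by Bob-$i$ from an independent random experiment at the start of the protocol, and is therefore mutually independent of $(M_{i0},M_{i1},R_A,R_{B_{\overline{i}}},Z_i,Z_{\overline{i}})$ and of the broadcast-channel noise. Hence $M_{iz_i}\perp R_{B_i}$ unconditionally, and adding the conditioning on $(X^n,\mathbf{C}_i,V_{\overline{i}},Z_i)$ cannot create a new dependency: the Alice-side contributions to these variables are mediated by $(M_{i0},M_{i1},R_A,X^n)$, while the Bob-side contributions are functions of $(R_{B_i},R_{B_{\overline{i}}},Z_i,Z_{\overline{i}},Y_i^n,Y_{\overline{i}}^n)$, so the conditional law of $R_{B_i}$ does not shift when $M_{iz_i}$ is fixed.

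For the second term, I would invoke the Markov chain $M_{iz_i}-X^n-Y_i^n$, which is an immediate consequence of the memoryless DM-BC: $P(Y_i^n|X^n)=\prod_{\ell=1}^{n} W(y_{i,\ell}|x_\ell)$ depends only on $X^n$ since the noise is generated independently of the messages. Extending the conditioning by $(\mathbf{C}_i,V_{\overline{i}},R_{B_i},Z_i)$ does not break this chain, because these variables are derived from the same jointly independent sources and, given $X^n$, carry no additional information about $M_{iz_i}$ that has not already been summarized by the channel input.

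The main obstacle is ruling out hidden ``collider'' paths that could be activated by conditioning on the interactive public transcript $\mathbf{C}_i$: since $\mathbf{C}_i$ depends recursively on Alice's private $(M_{i0},M_{i1},R_A)$ and Bob-$i$'s $(Y_i^n,R_{B_i})$, such conditioning could in principle couple the two sides. To resolve this, I would appeal to the explicit recursive definition of $\mathbf{C}_i$ in Section~\ref{noisy protocol:BC}: Alice's round-$\ell$ message is a function only of $(M_{i0},M_{i1},R_A,\mathbf{C}^{\ell-1})$, while Bob-$i$'s is a function only of $(Z_i,R_{B_i},Y_i^{\ell},\mathbf{C}^{\ell-1})$. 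Since the conditioning fixes $X^n$ (which summarizes Alice's channel contribution) together with the entire transcript $\mathbf{C}_i$ and Bob-$\overline{i}$'s view $V_{\overline{i}}$, the only remaining freedom on Alice's side is $R_A$, which is unconditionally independent of Bob-$i$'s private variables. Consequently, no open path connects $M_{iz_i}$ to $(Y_i^n,R_{B_i})$ in the induced dependency graph, and both conditional mutual informations vanish, which proves the lemma.
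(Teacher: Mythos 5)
There is a genuine gap, and it sits exactly at the point you flag as the ``main obstacle.'' Your second-term argument rests on the claim that $P(Y_i^n\mid X^n)=\prod_{\ell}W(y_{i,\ell}\mid x_\ell)$, so that $M_{iz_i}-X^n-Y_i^n$ is a Markov chain and the extra conditioning is harmless. In the interactive channel model of Section \ref{noisy protocol:BC} this is false: the inputs $X_\ell$ are chosen \emph{adaptively}, as functions of Alice's view and the public transcript, and the transcript feeds information about $Y_i^{\ell-1}$ back to Alice before she picks $X_\ell$. Conditioning on the whole sequence $X^n$ then acts as a selection on the past outputs and can couple $M_{iz_i}$ with $Y_i^n$. (A one-line counterexample: over a BEC, let Bob publicly announce whether $Y_1$ was erased and let Alice set $X_2=M$ if it was erased and a fresh coin otherwise; then $I(M;Y_1\mid X_1,X_2)>0$ even though the channel is memoryless.) So the statement only holds because the transcript $\mathbf{C}_i$ is also in the conditioning, and that is not a cosmetic addition but the crux. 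The same objection applies to your first term: ``adding the conditioning cannot create a new dependency'' is exactly the assertion to be proved, not a principle you may invoke; conditioning on variables (transcript messages) that are functions of both parties' data is precisely the situation in which independence can be destroyed.

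Your attempted resolution via the ``induced dependency graph'' gestures at the right structural fact --- every transcript message has, as parents, earlier messages plus the private variables of only one party, and every channel output depends on the rest only through the current input --- but you never define the graph, never run the blocking argument, and the supporting claim that ``the only remaining freedom on Alice's side is $R_A$'' is not accurate ($M_{i0},M_{i1}$, in particular $M_{iz_i}$ itself, remain free). The paper closes exactly this gap by a round-by-round induction: it shows $I(M_{i0},M_{i1},R_A;\,Y_i^l,R_{B_i}\mid X^l,C_i^{l},Z_i,V_{\overline{i}})=0$ for every $l$, peeling off one public message or one channel use at a time, using (i) that Bob-$i$'s messages are functions of $(Z_i,R_{B_i},Y_i^l,C_i^{l-1},V_{\overline{i}})$ and can be absorbed into the conditioning, (ii) that Alice's messages and the input $X_l$ are functions of $(M_{i0},M_{i1},R_A,C_i^{l-1})$, and (iii) the single-letter Markov property of $Y_{i,l}$ given $X_l$. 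To make your proposal a proof you would need to carry out this induction (or an equally explicit rectangle/d-separation argument over a fully specified DAG that includes the adaptive inputs, both channel outputs, and both transcripts); as written, the two key independence claims are asserted rather than established.
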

\begin{proof}
    Define for $i \in \{1,2\}$, $l \in [1, n]$, $t \in [1, r_l]$, $C_{1, l, 1:j} \triangleq (C_{0, i,l}(j), C_{i, l}(j))_{j \in [1, t]}$ as the messages exchanged between Alice and Bob-$i$ between the first and the $j$-th communication rounds occurring after the $t$-th channel usage. Let $C_i^l \triangleq (C_{i, l, 1:r})_{j \in [1, l]}$ represent all messages exchanged between Alice and Bob-$i$ before the $l+1$-th channel use. Let $l \in [1, n]$ and $j \in [1, r_l]$. Then, we have
\begin{align}\label{eq 1}
    I(M_{i0}&, M_{i1},  R_A; Y_i^l, R_{B_i} | X^l, C_i^{l-1},C_{i,l,1:j}, Z_i, V_{\overline{i}})\notag\\
    & \overset{(a)}{\leq} I(M_{i0}, M_{i1}, R_A; Y_i^l, R_{B_i}, C_{0,i,l}(j) | X^l, C_i^{l-1},C_{i,l,1:(j-1)}, C_{i,l}(j), Z_i, V_{\overline{i}})\\ 
    & \overset{(b)}{=} I(M_{i0}, M_{i1}, R_A; Y_i^l, R_{B_i} | X^l, C_i^{l-1},C_{i,l,1:(j-1)}, C_{i,l}(j), Z_i, V_{\overline{i}})\notag\\
    & \leq I(M_{i0}, M_{i1}, R_A, C_{i,l}(j); Y_i^l, R_{B_i} | X^l, C_i^{l-1},C_{i,l,1:(j-1)},  Z_i ,V_{\overline{i}})\notag\\
    &\overset{(c)}{=} I(M_{i0}, M_{i1}, R_A; Y_i^l, R_{B_i} | X^l, C_i^{l-1},C_{i,l,1:(j-1)}, Z_i, V_{\overline{i}})\label{eq 2} \\
    &\overset{(d)}{\leq} I(M_{i0}, M_{i1}, R_A; Y_i^{l}, R_{B_i} | X^l, C_i^{l-1}, Z_i, V_{\overline{i}}) \notag\\
    &\overset{(e)}{=} I(M_{i0}, M_{i1}, R_A ; Y_i^{l-1}, R_{B_i} | X^l, C_i^{l-1}, Z_i, V_{\overline{i}}) \notag\\
    &\leq I(M_{i0}, M_{i1}, R_A, X_{l}; Y_i^{l-1}, R_{B_i} | X^{l-1}, C_i^{l-1}, Z_i, V_{\overline{i}}) \notag\\
    &\overset{(f)}{=} I(M_{i0}, M_{i1}, R_A; Y_i^{l-1}, R_{B_i} | X^{l-1}, C_i^{l-1}, Z_i, V_{\overline{i}})\label{eq 3} = 0.    
\end{align}

The steps are justified as follows:
\begin{itemize}
    \item $(a)$ follows by the definition of $C_{i, l, 1:(j)} = (C_{i, l, 1:(j-1)}, C_{0, i,l}(j), C_{1, l}(j))$ and by the chain rule.
    \item $(b)$ follows by the chain rule and because $C_{0, i,l}(j)$ depends on\linebreak $(C_{i, l, 1:(j-1)}, Z_i, V_{\overline{i}}, R_{B_i}, Y_i^l, C_i^{l-1}, C_{i,l}(j))$.
    \item $(c)$ follows by the chain rule and because $C_{i,l}(j)$ depends on\linebreak $(M_{i0}, M_{i1}, R_A, C_{i, l, 1:(j-1)}, C_i^{l-1})$.
    \item $(d)$ follows by previous iterations $(j-1)$ of \eqref{eq 1} to \eqref{eq 2}.
    \item $(e)$ follows from the Markov chain:\linebreak $(M_{i0},M_{i1},R_A)-(Y_i^{l-1}, R_{B_i}, X^l, C_i^{l-1}, Z_i, X^l, V_{\overline{i}})-Y_{i,l}$.
    \item $(f)$ follows from the chain rule and the fact that $X_{l}$ is a function of\linebreak $(M_{i0}, M_{i1}, R_A, C_i^{l-1})$.
\end{itemize}

A straightforward calculation gives, for any $l \in [1, n]$, we have

\[
I(M_{i0}, M_{i1}, R_A; Y_i^l, R_{B_i} | X^l, C_i^{l}, Z_i, V_{\overline{i}}) = 0.
\]

This completes the proof.
\end{proof}
\begin{lemma}\label{lemma: smooth}
For any $z_i\in \{0, 1\}$, we have:
    \begin{equation*}
        H(M_{i\overline{Z}_i}|X^n, \mathbf{C}_i, V_{\overline{i}},Z_i = z_i) = o(n).
    \end{equation*}
\end{lemma}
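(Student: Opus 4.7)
The plan is to combine Fano's inequality with the Markov chain established in Lemma \ref{lemma: Rudolf-app}, and then use Lemma \ref{lemma: Rudolf} to transfer between the conditioning events $\{Z_i=\overline{z}_i\}$ and $\{Z_i=z_i\}$; the mutual-information cost of this transfer will be controlled by the secrecy condition \eqref{goals: BC-Coll-3}.

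First I would establish the companion bound $H(M_{i\overline{z}_i}\mid X^n,\mathbf{C}_i,V_{\overline{i}},Z_i=\overline{z}_i)=o(n)$. The computation in the proof of Lemma \ref{lemma: Rudolf-app} actually establishes the stronger Markov chain $(M_{i0},M_{i1},R_A)\to(X^n,\mathbf{C}_i,V_{\overline{i}},Z_i)\to(Y_i^n,R_{B_i})$, which by data processing extends to $M_{i\overline{z}_i}\to(X^n,\mathbf{C}_i,V_{\overline{i}},Z_i)\to(Y_i^n,R_{B_i})$. Hence inserting $(Y_i^n,R_{B_i})$ into the conditioning leaves the entropy unchanged, and subsequently dropping $X^n$ and $V_{\overline{i}}$ can only increase it, giving $H(M_{i\overline{z}_i}\mid X^n,\mathbf{C}_i,V_{\overline{i}},Z_i=\overline{z}_i)\leq H(M_{i\overline{z}_i}\mid V_i,Z_i=\overline{z}_i)$. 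When $Z_i=\overline{z}_i$, Bob-$i$'s decoder targets exactly $M_{iZ_i}=M_{i\overline{z}_i}$, so Fano's inequality applied to the correctness condition \eqref{goals: BC-Coll-1}, together with the uniformity of $Z_i$ (which costs at most a factor of two), bounds the right-hand side by $h(2\epsilon_n)+2\epsilon_n k_i=o(n)$, where $\epsilon_n\to 0$ is the total decoding error.

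Next I would apply Lemma \ref{lemma: Rudolf} with $X=M_{i\overline{z}_i}$, $Y=(X^n,\mathbf{C}_i,V_{\overline{i}})$, and $Z=Z_i$. Since $\Pr[Z_i=0]=\Pr[Z_i=1]=1/2$, the lemma gives $\lvert H(M_{i\overline{z}_i}\mid Y,Z_i=z_i)-H(M_{i\overline{z}_i}\mid Y,Z_i=\overline{z}_i)\rvert\leq 1+3k_i\sqrt{2\ln 2\cdot I(M_{i\overline{z}_i},Y;Z_i)}$. Because $(M_{i\overline{z}_i},X^n,\mathbf{C}_i,V_{\overline{i}})$ is a deterministic function of $(U,V_{\overline{i}})$, data processing yields $I(M_{i\overline{z}_i},Y;Z_i)\leq I(U,V_{\overline{i}};Z_i)$, which tends to zero by \eqref{goals: BC-Coll-3}. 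Combined with the companion bound of the previous paragraph, this delivers $H(M_{i\overline{z}_i}\mid X^n,\mathbf{C}_i,V_{\overline{i}},Z_i=z_i)=o(n)$.

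The only delicate point, and the one I would take care to write out cleanly, is the arithmetic in the last step: the alphabet factor $\log\lvert\mathcal{M}_{i\overline{z}_i}\rvert=k_i=\Theta(n)$ enters linearly on the right-hand side of Lemma \ref{lemma: Rudolf}, so a crude bound of the form $I(U,V_{\overline{i}};Z_i)=o(n)$ would be useless. The rescue is that \eqref{goals: BC-Coll-3} is the stronger statement $I(U,V_{\overline{i}};Z_i)=o(1)$, which yields $k_i\sqrt{I}=k_i\cdot o(1)=o(n)$. All remaining manipulations are standard chain-rule and monotonicity arguments for conditional entropy.
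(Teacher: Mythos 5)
Your proposal is correct and follows essentially the same route as the paper: the companion bound at $Z_i=\overline{z}_i$ via the Markov chain of Lemma \ref{lemma: Rudolf-app} plus Fano's inequality and \eqref{goals: BC-Coll-1}, and the transfer between the conditioning events $Z_i=\overline{z}_i$ and $Z_i=z_i$ via Lemma \ref{lemma: Rudolf} with the mutual-information term controlled by \eqref{goals: BC-Coll-3} (the paper performs these two steps in the opposite order, through Eq. \eqref{smoothed: applied lemma}). Your explicit remark that $I(U,V_{\overline{i}};Z_i)=o(1)$ is what keeps the $k_i\sqrt{\cdot}$ term at $o(n)$ is a point the paper leaves implicit, and it is handled correctly.
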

\begin{proof}
    \begin{align*}
        H(M_{i\overline{z}_i}|X^n, \mathbf{C}_i, V_{\overline{i}},Z_i = z_i) & \stackrel{(a)}{\leq} H(M_{i\overline{z}_i}|X^n, \mathbf{C}_i, V_{\overline{i}},Z_i = \overline{z}_i) + o(n)\\
        & \stackrel{(b)}{=} H(M_{i\overline{z}_i}|Y_i^n, R_{B_i} , X^n, \mathbf{C}_i, V_{\overline{i}},Z_i = \overline{z}_i) + o(n)\\
        & \stackrel{(c)}{\leq} H(M_{i\overline{z}_i}|Y_i^n, R_{B_i} , \hat{M}_{i\overline{z}_i}, \mathbf{C}_i, V_{\overline{i}},Z_i = \overline{z}_i) + o(n)\\
        & \stackrel{(d)}{\leq} o(n).
    \end{align*}
    Finally, 
    \begin{align*}
        H(M_{i\overline{Z}_i}|X^n, \mathbf{C}_i, V_{\overline{i}},Z_i = z_i) & = \sum_{z_i} \mathbb{P} (Z_i = z_i) \times H(M_{i\overline{z}_i}|X^n, \mathbf{C}_i, V_{\overline{i}},Z_i = z_i)\\
        & = o(n),
    \end{align*}
    where $(a)$ follows from Eq. \eqref{smoothed: applied lemma}, $(b)$ follows from Lemma \ref{lemma: Rudolf-app}, $(c)$ holds because $\hat{M}_{i\overline{z}_i}$ is a function of $(Y_i^n, R_{B_i}, \mathbf{C}_i)$, and $(d)$ holds by Fano's inequality and Eq. \eqref{goals: BC-Coll-1}.
\end{proof}
\parbreak Using Eq. \eqref{smoothed1} and Lemma \ref{lemma: smooth}, we have:
\begin{align}
           k_i = H(M_{i\overline{z}_i}|Z_i=z_i) & \,= H(M_{i\overline{z}_i}| R_{B_i}, Y_i^n, \mathbf{C}_i, V_{\overline{i}},Z_i = z_i) + o(n)\notag\\
           & \,= H(M_{i\overline{z}_i}| R_{B_i}, Y_i^n, \mathbf{C}_i, V_{\overline{i}},Z_i = z_i)\notag\\
           & \quad+ H(X^n|M_{i\overline{z}_i},R_{B_i}, Y_i^n, \mathbf{C}_i, V_{\overline{i}},Z_i = z_i)\notag\\
           & \quad - H(X^n|M_{i\overline{z}_i},R_{B_i}, Y_i^n, \mathbf{C}_i, V_{\overline{i}},Z_i = z_i) + o(n)\notag\\
           & \,= H(M_{i\overline{z}_i},X^n|R_{B_i}, Y_i^n, \mathbf{C}_i, V_{\overline{i}},Z_i = z_i)\notag\\
           & \quad- H(X^n|M_{i\overline{z}_i}R_{B_i}, Y_i^n, \mathbf{C}_i, V_{\overline{i}},Z_i = z_i) + o(n)\notag\\
           & \,\leq H(M_{i\overline{z}_i},X^n|R_{B_i}, Y_i^n, \mathbf{C}_i, V_{\overline{i}},Z_i = z_i) + o(n)\notag\\
           & \,= H(X^n|R_{B_i}, Y_i^n, \mathbf{C}_i, V_{\overline{i}},Z_i = z_i)\notag\\
           & \quad + H(M_{i\overline{z}_i}|X^n, R_{B_i}, Y_i^n, \mathbf{C}_i, V_{\overline{i}},Z_i = z_i) + o(n)\notag\\
           & \,\leq H(X^n|R_{B_i}, Y_i^n, \mathbf{C}_i, V_{\overline{i}},Z_i = z_i)\notag\\
           & \quad + H(M_{i\overline{z}_i}|X^n, \mathbf{C}_i, V_{\overline{i}},Z_i = z_i) + o(n)\notag\\ 
           & \stackrel{(a)}{=}  H(X^n|R_{B_i}, Y_i^n, \mathbf{C}_i, V_{\overline{i}},Z_i = z_i) + o(n)\notag\\
           & \,= H(X^n|R_{B_i}, Y_i^n, \mathbf{C}_i, R_{B_{\overline{i}}}, Y_{\overline{i}}^n, \mathbf{C}_{\overline{i}},Z_i = z_i, Z_{\overline{i}} = z_{\overline{i}}) + o(n)\notag\\
           & \stackrel{(b)}{\leq} H(X^n|Y_i^n,Y_{\overline{i}}^n,Z_i=\overline{z}_i, Z_{\overline{i}} = z_{\overline{i}}) + o(n)\notag\\
           & \,\leq \sum_{l=[1:n]} H(X_{l}|Y_{i,l}, Y_{\overline{i},l}, Z_i=\overline{z}_i, Z_{\overline{i}} = z_{\overline{i}}) + o(n)\notag\\
           & \stackrel{(c)}{\leq} \sum_{l=1}^{n} H(X_{l}|Y_{i,l}, Y_{\overline{i},l}) + o(n), 
       \end{align}
       where $(a)$ follows from Lemma \ref{lemma: smooth}, $(b)$ is because conditioning does not increase the entropy, and $(c)$ follows from the arguments presented in \cite{Rudolf1}.

\section{Proof of Lemma \ref{lem: cri. nColl-BC}}\label{app: proof of lem: cri. nColl-BC}
Conditions \eqref{goals: BC-nColl-1}, \eqref{goals: BC-nColl-2}, and \eqref{goals: BC-nColl-3} can be proved similar to those of conditions \eqref{goals: BC-Coll-1}, \eqref{goals: BC-Coll-2}, and \eqref{goals: BC-Coll-4}. See \ref{app: proof of lem: cri. Coll-BC}.

\parbreak As previously noted, the key distinction between Protocol~1 and Protocol~2 lies in the fact that the latter is not a pairwise protocol. Specifically, an OT protocol is first executed with one of the receivers; if $p_i > \tfrac{1}{2}$, a subsequent OT protocol is then executed with the other receiver. In this setting, $\mathbf{C}_1$ and $\mathbf{C}_2$ carry independent information. This corresponds to the following Markov chain:
\[
Z_i \;-\; M_{i0}, M_{i1}, X^n, Y_{\overline{i}}, \mathbf{C}_i \;-\; Z_{\overline{i}}, M_{\overline{i}0}, M_{\overline{i}1}, \mathbf{C}_{\overline{i}} \, .
\]

Now, let us examine the security of Protocol~1 in the presence of colluding parties. Clearly, conditions~\eqref{goals: BC-Coll-3} and~\eqref{goals: BC-Coll-5} must also be satisfied. If collusion is permitted, then Bob-$i$ can disclose his sets of erased and non-erased indices to Bob-$\overline{i}$. This exchange constitutes the primary source of insecurity in Protocol~1 under collusion. Formally, we obtain:
\begin{align}\label{eq: dep}
    I (Z_i; U, V_{\overline{i}}) & = I(Z_i;M_{i0},M_{i1},M_{\overline{i}0},M_{\overline{i}1}, Z_{\overline{i}}, X^n, Y_{\overline{i}}^n, \mathbf{C})\notag\\
    & = I(Z_i;M_{i0},M_{i1},M_{\overline{i}0},M_{\overline{i}1}, Z_{\overline{i}}, X^n, Y_{\overline{i}}^n, \mathbf{C}_i,\mathbf{C}_{\overline{i}})\notag\\
    & = I(Z_i;M_{i0},M_{i1},X^n, Y_{\overline{i}}^n, \mathbf{C}_i)\notag\\
           & \quad + I(Z_i;Z_{\overline{i}},M_{\overline{i}0},M_{\overline{i}1},\mathbf{C}_{\overline{i}}|M_{i0},M_{i1},X^n,Y_{\overline{i}},\mathbf{C}_i).
    \end{align}
The first term equal to zero due to \eqref{eq: perfect}. But the second term is not equal to zero because of the structure of Protocol 1 since $\{S_{i0},S_{i1}\}\subset\mathbf{C}_i$ and both of the receivers have access to $S_{i0},S_{i1}, i\in\{1,2\}$. The same situation holds for \eqref{goals: BC-Coll-5}:
\begin{align}
        I (M_{i0},M_{i1},M_{\overline{i}\overline{Z}_{\overline{i}}}, Z_i; V_{\overline{i}})_{i\in\{1,2\}}&\,= I (M_{i0},M_{i1}, Z_i; V_{\overline{i}})\notag\\
        & \quad + I (M_{\overline{i}\overline{Z}_{\overline{i}}};V_{\overline{i}}|M_{i0},M_{i1}, Z_i)\notag\\
        &\,= I (M_{i0},M_{i1}, Z_i; Z_{\overline{i}},Y^n_{\overline{i}},\mathbf{C})\notag\\
        & \quad + I (M_{\overline{i}\overline{Z}_{\overline{i}}};V_{\overline{i}},M_{i1},M_{i2}, Z_i)\notag\\
        &\,= I (M_{i0},M_{i1}, Z_i; Z_{\overline{i}},Y^n_{\overline{i}},\mathbf{C}_i)\notag\\
        & \quad + I (M_{\overline{i}\overline{Z}_{\overline{i}}};V_{\overline{i}},M_{i0},M_{i1}, Z_i )\notag\\
        & \quad+ I (M_{i0},M_{i1}, Z_i;\mathbf{C}_{\overline{i}}| Z_{\overline{i}},Y^n_{\overline{i}},\mathbf{C}_i).
\end{align}
The first term vanishes as $n \to \infty$, while the second term is equal to zero, as shown in~\eqref{eq: dep2}. However, the last term does not vanish, for the same reason discussed above.
  
\section{Proof of Theorem \ref{thm: OT capacity-non-col}}\label{app: thm: OT capacity-non-col}
  We must bound three conditional min-entropies to get three bounds on the lower bound:
        \begin{equation}\label{eq: first}
            H^\varepsilon_{\infty}(\mathbf{X}|_{S_{1\overline{Z}_1}}|h_{10}(R_{10}, \mathbf{X}|_{S_{10}}), h_{11}(R_{11}, \mathbf{X}|_{S_{11}}), Y_1^n, R^{(1)}, T^{(1)}),
        \end{equation}
        \begin{equation}\label{eq: second}
            H^\varepsilon_{\infty}(\mathbf{X}|_{S_{2\overline{Z}_2}}|h_{20}(R_{20}, \mathbf{X}|_{S_{20}}), h_{21}(R_{21}, \mathbf{X}|_{S_{21}}), Y_2^n, R^{(2)}, T^{(2)}),
        \end{equation}
        \begin{align}\label{eq: third}
            H^\varepsilon_{\infty}(\mathbf{X}|_{S_{1\overline{Z}_1},S_{2\overline{Z}_2}}|&h_{10}(R_{10}, \mathbf{X}|_{S_{10}}), h_{11}(R_{11}, \mathbf{X}|_{S_{11}}),\notag\\ & h_{20}(R_{20}, \mathbf{X}|_{S_{20}}), h_{21}(R_{21}, \mathbf{X}|_{S_{21}}), Y_1^n,Y_2^n, R).
        \end{align}
From \cite{Winter1} we know that the first two expressions above smooth min-entropies can be bounded from below as:
\begin{equation}
    H^\varepsilon_{\infty}(\mathbf{X}|_{S_{i\overline{Z}_i}}|h_{i0}(R_{i0}, \mathbf{X}|_{S_{i0}}), h_{i1}(R_{i1}, \mathbf{X}|_{S_{i1}}), Y_i^n, R^{(i)}, T^{(i)})\geq pn H(X) - s_i n -\delta n,
\end{equation}
for $i\in\{1,2\}$, where $\varepsilon = 2^{-\alpha n}$ and  $\varepsilon' = 2^{-\alpha' n}$ ($\varepsilon$ and $\varepsilon'$ are negligible in $n$) and $\delta \geq (\alpha+\alpha' + 2\eta +4\sqrt{\alpha})> 0$.

\parbreak 
Now consider \eqref{eq: third}. From the i.i.d. property of the channel, we have:
  \begin{align*}
            H^\varepsilon_{\infty}&(\mathbf{X}|_{S_{1\overline{Z}_1},S_{2\overline{Z}_2}}|h_{10}(R_{10}, \mathbf{X}|_{S_{10}}), h_{11}(R_{11}, \mathbf{X}|_{S_{11}}),\\
            & \qquad\qquad\qquad\qquad\qquad\qquad\qquad h_{20}(R_{20}, \mathbf{X}|_{S_{20}}), h_{21}(R_{21}, \mathbf{X}|_{S_{21}}), Y_1^n,Y_2^n, R)\\
            & = H^\varepsilon_{\infty}(\mathbf{X}|_{S_{1\overline{Z}_1},S_{2\overline{Z}_2}}| h_{1j}(R_{1j},\mathbf{X}|_{S_{1\overline{Z}_1}}),   h_{2j}(R_{2j},\mathbf{X}|_{S_{2\overline{Z}_2}}),\mathbf{Y}_1|_{S_{1\overline{Z}_1}},\mathbf{Y}_2|_{S_{2\overline{Z}_2}}, R).
        \end{align*}
        
        Applying \eqref{eq: holenstein-1} and \eqref{eq: holenstein-1 + holenstein-1} for $\varepsilon, \varepsilon'>0$, we have:
        \begin{align}\label{eq: after applying holenstein-1}
        &H^{\varepsilon+\varepsilon'}_{\infty}(\mathbf{X}|_{S_{1\overline{Z}_1},S_{2\overline{Z}_2}}| h_{1j}(R_{1j},\mathbf{X}|_{S_{1\overline{Z}_1}}),   h_{2j}(R_{2j},\mathbf{X}|_{S_{2\overline{Z}_2}}),\mathbf{Y}_1|_{S_{1\overline{Z}_1}},\mathbf{Y}_2|_{S_{2\overline{Z}_2}}, R)\\
        & \geq H_\infty^\varepsilon (\mathbf{X}|_{S_{1\overline{Z}_1},S_{2\overline{Z}_2}},h_{1j}(R_{1j}, \mathbf{X}|_{S_{1\overline{Z}_1}}),h_{2j}(R_{2j}, \mathbf{X}|_{S_{2\overline{Z}_2}})|\mathbf{Y}_1|_{S_{1\overline{Z}_1}},\mathbf{Y}_2|_{S_{2\overline{Z}_2}}, R)\notag\\
        & \quad- H_0(h_{1j}(R_{1j}, \mathbf{X}|_{S_{1\overline{Z}_1}}),h_{2j}(R_{2j}, \mathbf{X}|_{S_{2\overline{Z}_2}})|\mathbf{Y}_1|_{S_{1\overline{Z}_1}},\mathbf{Y}_2|_{S_{2\overline{Z}_2}}, R) - \log(\frac{1}{\varepsilon'}).
        \end{align}
Note that $H_0(h_{1j}(R_{1j}, \mathbf{X}|_{S_{1\overline{Z}_1}}),h_{2j}(R_{2j}, \mathbf{X}|_{S_{2\overline{Z}_2}})|\mathbf{Y}_1|_{S_{1\overline{Z}_1}},\mathbf{Y}_2|_{S_{2\overline{Z}_2}}, R)$ limits the number of distinct possible outputs, restricting the amount of information Alice can gain about receivers' choices: 
        \[
        H_0(h_{1j}(R_{1j}, \mathbf{X}|_{S_{1\overline{Z}_1}}),h_{2j}(R_{2j}, \mathbf{X}|_{S_{2\overline{Z}_2}})|\mathbf{Y}_1|_{S_{1\overline{Z}_1}},\mathbf{Y}_2|_{S_{2\overline{Z}_2}}, R)\leq s_1 n + s_2 n.
        \]
        
Equation \eqref{eq: after applying holenstein-1} simplifies to:
\begin{align}\label{eq: middle}
    H&^{\varepsilon+\varepsilon'}_{\infty}(\mathbf{X}|_{S_{1\overline{Z}_1},S_{2\overline{Z}_2}}| h_{1j}(R_{1j},\mathbf{X}|_{S_{1\overline{Z}_1}}),   h_{2j}(R_{2j},\mathbf{X}|_{S_{2\overline{Z}_2}}),\mathbf{Y}_1|_{S_{1\overline{Z}_1}},\mathbf{Y}_2|_{S_{2\overline{Z}_2}}, R)\notag\\
    & \geq H_\infty^\varepsilon (\mathbf{X}|_{S_{1\overline{Z}_1},S_{2\overline{Z}_2}},h_{1j}(R_{1j}, \mathbf{X}|_{S_{1\overline{Z}_1}}),h_{2j}(R_{2j}, \mathbf{X}|_{S_{2\overline{Z}_2}})|\mathbf{Y}_1|_{S_{1\overline{Z}_1}},\mathbf{Y}_2|_{S_{2\overline{Z}_2}}, R)\notag\\
    & \qquad\qquad\qquad\qquad\qquad\qquad\qquad\qquad\qquad\qquad\qquad\qquad\quad- \log(\frac{1}{\varepsilon'}),
\end{align}
Consider the following quantity. Putting $U\triangleq (\mathbf{X}|_{S_{1\overline{Z}_1},S_{2\overline{Z}_2}})$,\linebreak $ V \triangleq (h_{1j}(R_{1j}, \mathbf{X}|_{S_{1\overline{Z}_1}}), h_{2j}(R_{2j}, \mathbf{X}|_{S_{2\overline{Z}_2}}))$, and $W \triangleq (\mathbf{Y}_1|_{S_{1\overline{Z}_1}},\mathbf{Y}_2|_{S_{2\overline{Z}_2}}, R)$, from \eqref{eqeq}, we have ($\varepsilon, \varepsilon'>0, \varepsilon = \varepsilon + \varepsilon' + 2\varepsilon''$):
\begin{align}\label{epsilon'}
     H&^{\varepsilon+\varepsilon'+2\varepsilon''}_{\infty}(\mathbf{X}|_{S_{1\overline{Z}_1},S_{2\overline{Z}_2}}, h_{1j}(R_{1j}, \mathbf{X}|_{S_{1\overline{Z}_1}}), h_{2j}(R_{2j}, \mathbf{X}|_{S_{2\overline{Z}_2}})| \mathbf{Y}_1|_{S_{1\overline{Z}_1}},\mathbf{Y}_2|_{S_{2\overline{Z}_2}}, R)\notag\\
        & \geq H^{\varepsilon'}_{\infty}(\mathbf{X}|_{S_{1\overline{Z}_1},S_{2\overline{Z}_2}}| \mathbf{Y}_1|_{S_{1\overline{Z}_1}},\mathbf{Y}_2|_{S_{2\overline{Z}_2}}, R)\notag\\
            & \quad + H^{\varepsilon''}_{\infty}(h_{1j}(R_{1j}, \mathbf{X}|_{S_{1\overline{Z}_1}}), h_{2j}(R_{2j}, \mathbf{X}|_{S_{2\overline{Z}_2}})|\mathbf{X}|_{S_{1\overline{Z}_1},S_{2\overline{Z}_2}}, \mathbf{Y}_1|_{S_{1\overline{Z}_1}},\mathbf{Y}_2|_{S_{2\overline{Z}_2}}, R)\notag\\
            & \quad-\log(\frac{1}{\gamma})\notag\\
            & = H^{\varepsilon'}_{\infty}(\mathbf{X}|_{S_{1\overline{Z}_1},S_{2\overline{Z}_2}}| \mathbf{Y}_1|_{S_{1\overline{Z}_1}},\mathbf{Y}_2|_{S_{2\overline{Z}_2}}, R) - \log(\frac{1}{\gamma}),
        \end{align}
where $\gamma = \varepsilon - \varepsilon' - 2\varepsilon''$. Recall \eqref{eq: middle}. Putting $\varepsilon = \varepsilon + \varepsilon' + 2\varepsilon''$ and Substituting \eqref{epsilon'} to it yields:
\begin{align}\label{epsilon''}
    H&^{\varepsilon+2\varepsilon'+2\varepsilon''}_{\infty}(\mathbf{X}|_{S_{1\overline{Z}_1},S_{2\overline{Z}_2}}|h_{1j}(R_{1j}, \mathbf{X}|_{S_{1\overline{Z}_1}}), h_{2j}(R_{2j}, \mathbf{X}|_{S_{2\overline{Z}_2}}), \mathbf{Y}_1|_{S_{1\overline{Z}_1}},\mathbf{Y}_2|_{S_{2\overline{Z}_2}}, R)\notag\\
    & \geq H^{\varepsilon'}_{\infty}(\mathbf{X}|_{S_{1\overline{Z}_1},S_{2\overline{Z}_2}}| \mathbf{Y}_1|_{S_{1\overline{Z}_1}},\mathbf{Y}_2|_{S_{2\overline{Z}_2}}, R) -(s_1+s_2)n - \log(\frac{1}{\lambda}) - \log(\frac{1}{\gamma}),
\end{align}

where $\lambda\triangleq \varepsilon + \varepsilon' + 2\varepsilon'', \varepsilon' = \varepsilon + \varepsilon' + 2\varepsilon'', \gamma = \varepsilon - \varepsilon' - 2\varepsilon''$. Let $V_i, i\in\{1,2\}$ be an i.i.d. random variable so that $V_i = e$ (erasure) with probability $\frac12-\eta$ and $V_i = Q_i$ (the output of channel $W$ on input $X$). With negligible error probability and $W$ being i.i.d., for $S_{1\overline{Z}_1}$ and $S_{2\overline{Z}_2}$, we have:
\begin{align}\label{eq: middle+1}
    H_\infty^{\varepsilon+\varepsilon'}&(\mathbf{X}|_{ S_{1\overline{Z}_1},S_{2\overline{Z}_2}}|\mathbf{Y}_1|_{S_{1\overline{Z}_1}},\mathbf{Y}_2|_{S_{2\overline{Z}_2}})\notag\\
    &\,\geq H_\infty^{\varepsilon}(\mathbf{X}|_{\lvert S_{1\overline{Z}_1},S_{2\overline{Z}_2}\lvert}|\mathbf{V}_1|_{\lvert S_{1\overline{Z}_1}\lvert},\mathbf{V}_2|_{\lvert S_{2\overline{Z}_2}\lvert})\notag\\
    & \overset{(a)}{\geq} \lvert S_{1\overline{Z}_1},S_{2\overline{Z}_2}\rvert H(X|V_1,V_2)-4\sqrt{\lvert S_{1\overline{Z}_1},S_{2\overline{Z}_2}\rvert\log(1/\varepsilon)}\log \lvert \mathcal{X} \rvert\notag\\
    & \,\geq (p-\eta)n H(X|V_1,V_2)-4\sqrt{((1-p_1)(1-p_2)-\eta)n\log(1/\varepsilon)}\notag\notag\\
    & \,\geq pn H(X|V_1,V_2)-\eta n H(X|V_1,V_2)-4\sqrt{((1-p_1)(1-p_2)-\eta)n\log(1/\varepsilon)}\notag\notag\\
    & \,\geq pn H(X|V_1,V_2)-\eta n -4\sqrt{(p-\eta)n\log(1/\varepsilon)}\notag\notag\\
    & \overset{(b)}{\geq} (1-p_1)(1-p_2)n (1-2\eta)H(X)+2\eta n H(X|V_1,V_2)-\eta n\notag\\
    & \qquad\qquad\qquad\qquad\qquad\qquad\qquad\quad\,-4\sqrt{((1-p_1)(1-p_2)-\eta)n\log(1/\varepsilon)}\notag\notag\\
    & \,\geq (1-p_1)(1-p_2)n H(X)-2\eta n-4\sqrt{n\log(1/\varepsilon)},
\end{align}
where $(a)$ follows from Lemma \ref{lemm: Holenstein}, $\lvert\mathcal{X}\rvert = 2$, and $(b)$ follows from the fact that honest Bob-$i$ doesn't split the erasures received from Alice between $S_{i0}$ and $S_{i1}$, with probability exponentially close to one, the total number of non-erased symbols Bob-$i$ receives from the sender will not exceed $(1-p_i + \eta)n$, so the number of non-erasures in $\mathbf{Y}_i|_{S_{i\overline{Z}_i}}$ is at most $\lvert(1-p_i-\eta)n - (1-p_i + \eta)n\rvert = 2n\eta$.

\parbreak Putting $\varepsilon = \varepsilon + \varepsilon'+2\varepsilon''$ in \eqref{eq: middle}, then putting \eqref{eq: middle+1} to \eqref{eq: middle}, we have: 
\begin{align}
    H&^{\varepsilon+2\varepsilon'+2\varepsilon''}_{\infty}(\mathbf{X}|_{S_{1\overline{Z}_1},S_{2\overline{Z}_2}}| h_{1j}(R_{1j},\mathbf{X}|_{S_{1\overline{Z}_1}}),   h_{2j}(R_{2j},\mathbf{X}|_{S_{2\overline{Z}_2}}),\mathbf{Y}_1|_{S_{1\overline{Z}_1}},\mathbf{Y}_2|_{S_{2\overline{Z}_2}}, R)\notag\\
    &\,\geq (1-p_1)(1-p_2)n H(X)-2\eta n-4\sqrt{n\log(1/\varepsilon)}-s_1 n - s_2 n - \log(\frac{1}{\lambda})\notag\\
    & \qquad\qquad\qquad\qquad\qquad\qquad\qquad\qquad\qquad\qquad\qquad\qquad\qquad\,\,\,\,\,- \log(\frac{1}{\gamma})\notag\\
    & \overset{(a)}{=} (1-p_1)(1-p_2)n H(X) - s_1 n -s_2 n -2\eta n-4\sqrt{n\alpha''}-n (\alpha+\alpha'),
\end{align}
for any $\delta \geq (2\alpha + \alpha'+ 2\eta +4\sqrt{\alpha''})> 0$,  $(a)$ follows from setting $\gamma = 2^{-\alpha n}, \varepsilon'' = 2^{-n\alpha''}$ and  $\lambda = 2^{-\alpha' n}$.

\parbreak Due to the Chernoff bound, we know that the probability that Bob aborts the protocol in step $(2)$ tends to zero as $n\rightarrow \infty$. Protocol 1 fails in step $(5)$, if there is more than one, such as $(\hat{\mathbf{x}}|_{S_{iZ_i}})$ where $h_i(\mathbf{X}|_{S_{iZ_i}}) = h_i(\hat{\mathbf{X}}|_{S_{iZ_i}})$. We know that if all players are honest, then $(\mathbf{Q}_1|_{S_{1Z_1}},\mathbf{Q}_2|_{S_{2Z_2}}) =( \mathbf{Y}_1|_{S_{1Z_1}},\mathbf{Y}_2|_{S_{2Z_2}})$ with probability exponentially close to one and the number of sequences $\hat{\mathbf{x}}|_{S_{1Z_1},S_{2Z_2}}$ jointly typical with $(\mathbf{q}_1|_{S_{1Z_1}},\mathbf{q}_2|_{S_{2Z_2}})$ can be bounded from above so that if the sequence $\hat{\mathbf{x}}|_{S_{1Z_1},S_{2Z_2}}$ is not typical with$(\mathbf{q}_1|_{S_{1Z_1}},\mathbf{q}_2|_{S_{2Z_2}})$:\linebreak  $2^{\lvert S_{1Z_1} \rvert\lvert S_{2Z_2} \rvert(H(X,Q_1,Q_2)- H(Q_1,Q_2)+ \delta'}) =  2^{(1-p_1)(1-p_2)n(H(X,Q_1,Q_2)- H(Q_1,Q_2)+ \delta')}$,\linebreak $\delta'>0$. From \eqref{eq: hashs-double}, we know that\linebreak $p\leq 2^{-(s_1+s_2)n}2^{(1-p_1)(1-p_2)n(H(X,Q_1,Q_2)- H(Q_1,Q_2)+ \delta')}$, then $s_1+s_2 > (1-p_1)(1-p_2) (H(X,Q_1,Q_2)- H(Q_1,Q_2)+ \delta')$.

\parbreak The final lower bound is as follows:
\begin{align*}
    r_1 &<  \max_{P_{X}} I(X;Y_1),\\
    r_2 &<  \max_{P_{X}} I(X;Y_2),\\
    r_1+r_2&<\max_{P_{X}} I(X;Y_1,Y_2) .
\end{align*}
The lower bound coincides with the upper bound presented in Theorem \ref{thm:BC-ncoll}. This completes the proof.

\section{Proof of Lemma \ref{lem: cri. Coll-BC}}\label{app: proof of lem: cri. Coll-BC}

When $p_i\leq \frac12$, then the proof is the same as \cite{Rudolf1}, because if $p_i\leq\frac12$, then there are not enough erased bits to build up secret keys for the second OT link from the same source $X$. 
    Now consider the case $p_i>\frac12$. Let $(\mathcal{P}_n)_{n\in\mathbb{N}}$ be a sequence of Protocol 1 instances: 
\begin{itemize}
    \item Due to the Chernoff bound, we know that the probability that Bob aborts the protocol-$i$ in step 2 tends to zero as $n\rightarrow
    \infty$. When $|\overline{E}|<r_in$ and $|E|<r_in$, then Bob-$i$ knows $\mathbf{X}|_{S_{Z_i}}$. Since Bob-$i$ also knows $\kappa_{Z_i}$, Bob-$i$ can compute the key $\kappa_{Z_i}(\mathbf{X}|_{S_{Z_i}})$. Then Bob-$i$ can recover $M_{iZ_i}$ from $M_{iZ_i}\oplus\kappa_{Z_i}(\mathbf{X}|_{S_{Z_i}})$ sent by Alice. Then, \[
    \lim_{n\rightarrow\infty}\text{Pr}\,\left[(\hat{M}_{iZ_i})\neq (M_{iZ_i})\right]_{i\in\{1,2\}}= 0\\
    \]

    \item Since the probability that Bob aborts the protocol in step 2 tends to zero as $n \rightarrow \infty$, then we have:
\begin{align}\label{eq: mid}
      I (  M_{1\overline{Z}_1}&,M_{2\overline{Z}_2};V_1,V_2) \notag\\
      & = I (  M_{1\overline{Z}_1},M_{2\overline{Z}_2};Z_1, Z_2, Y_1^n, Y_2^n, \mathbf{C})\notag\\
      & = I (  M_{1\overline{Z}_1},M_{2\overline{Z}_2};Z_1, Z_2, Y_1^n, Y_2^n, \mathbf{C}_1, \mathbf{C}_2)\notag\\
      & = I (  M_{1\overline{Z}_1};Z_1, Y_1^n, Y_2^n,  \mathbf{C}_1) + I (  M_{2\overline{Z}_2};Z_1, Z_2, Y_1^n, Y_2^n,  \mathbf{C}_1, \mathbf{C}_2|M_{1\overline{Z}_1})\notag\\
      & = I (  M_{1\overline{Z}_1};Z_1, Y_1^n, Y_2^n,  \mathbf{C}_1) + I ( M_{2\overline{Z}_2};M_{1\overline{Z}_1}, Z_1, Z_2, Y_1^n, Y_2^n, \mathbf{C}_1,\mathbf{C}_2)\notag\\
      & \overset{(a)}{=} I (  M_{1\overline{Z}_1};Z_1, Y_1^n, Y_2^n,  \mathbf{C}_1) + I ( M_{2\overline{Z}_2}; Z_2, Y_1^n, Y_2^n, \mathbf{C}_2)\notag\\
      & \overset{(b)}{=} I (  M_{1\overline{Z}_1};Z_1, Y_1^n, Y_2^n,  \mathbf{C}_1) + I (  M_{2\overline{Z}_2};Z_2, Y_2^n|_{S'}, S', \mathbf{C}_2),
    \end{align}
    where $(a)$ follows from the independence of $(M_{1\overline{Z}_1},Z_1, \mathbf{C}_1)$ from $M_{2\overline{Z}_2}$ given $(Z_2, Y_1^n,Y_2^n, S', \mathbf{C}_2)$, and $(b)$ follows immediately from the second phase of Protocol \ref{protocol: BC-Coll}. 
    
    \parbreak Now, it suffices to prove that both terms tend to zero as $n\to\infty$. Consider the first term of \eqref{eq: mid}:
\begin{align}\label{final2: BC-nColl}
        I &(  M_{1\overline{Z}_1};Z_1, Y_1^n, Y_2^n,  \mathbf{C}_1)\notag\\
        & \overset{(a)}{=} I (  M_{1\overline{Z}_1};Z_1, Y_1^n, Y_2^n,  S_0,S_1, \kappa_{10}, \kappa_{11}, M_{10}\oplus\kappa_{10} (\mathbf{X}|_{S_0}), M_{11}\oplus\kappa_{11}(\mathbf{X}|_{S_1})\notag\\
        & \, = I (  M_{1\overline{Z}_1};Z_1, Y_1^n, Y_2^n,  S_{Z_1},S_{\overline{Z}_1}, \kappa_{1Z_1}, \kappa_{1\overline{Z}_1}, M_{1Z_1}\oplus\kappa_{1Z_1} (\mathbf{X}|_{S_{Z_1}}),\notag\\
        & \qquad \qquad \qquad \qquad \qquad \qquad \qquad \qquad \qquad \qquad \qquad M_{1\overline{Z}_1}\oplus\kappa_{1\overline{Z}_1}(\mathbf{X}|_{S_{\overline{Z}_1}})\notag\\
        & \, = I (  M_{1\overline{Z}_1};Z_1, Y_1^n, Y_2^n,  S_{Z_1},S_{\overline{Z}_1}, \kappa_{1Z_1}, \kappa_{1\overline{Z}_1}, M_{1Z_1}\oplus\kappa_{1Z_1} (\mathbf{X}|_{S_{Z_1}}))\notag\\
        & \quad + I (  M_{1\overline{Z}_1};M_{1\overline{Z}_1}\oplus\kappa_{1\overline{Z}_1}(\mathbf{X}|_{S_{\overline{Z}_1}})|Z_1, Y_1^n, Y_2^n,  S_{Z_1},S_{\overline{Z}_1}, \kappa_{1Z_1}, \kappa_{1\overline{Z}_1},\notag\\
        & \qquad \qquad \qquad \qquad \qquad \qquad \qquad \qquad \qquad \qquad \qquad M_{1Z_1}\oplus\kappa_{1Z_1} (\mathbf{X}|_{S_{Z_1}}))\notag\\
        & \overset{(b)}{=}I (  M_{1\overline{Z}_1};M_{1\overline{Z}_1}\oplus\kappa_{1\overline{Z}_1}(\mathbf{X}|_{S_{\overline{Z}_1}})|Z_1, Y_1^n, Y_2^n,  S_{Z_1},S_{\overline{Z}_1}, \kappa_{1Z_1}, \kappa_{1\overline{Z}_1},\notag\\
        & \qquad \qquad \qquad \qquad \qquad \qquad \qquad \qquad \qquad \qquad \qquad M_{1Z_1}\oplus\kappa_{1Z_1} (\mathbf{X}|_{S_{Z_1}}))\notag\\
        &\, = H (  M_{1\overline{Z}_1}\oplus\kappa_{1\overline{Z}_1}(\mathbf{X}|_{S_{\overline{Z}_1}})|Z_1, Y_1^n, Y_2^n,  S_{Z_1},S_{\overline{Z}_1}, \kappa_{1Z_1}, \kappa_{1\overline{Z}_1}, \notag\\
        & \qquad \qquad \qquad \qquad \qquad \qquad \qquad \qquad \qquad \qquad \qquad M_{1Z_1}\oplus\kappa_{1Z_1} (\mathbf{X}|_{S_{Z_1}}))\notag\\
        & \quad - H (  M_{1\overline{Z}_1}\oplus\kappa_{1\overline{Z}_1}(\mathbf{X}|_{S_{\overline{Z}_1}})|M_{1\overline{Z}_1},Z_1, Y_1^n, Y_2^n,  S_{Z_1},S_{\overline{Z}_1}, \kappa_{1Z_1}, \kappa_{1\overline{Z}_1}, \notag\\
        & \qquad \qquad \qquad \qquad \qquad \qquad \qquad \qquad \qquad \qquad \qquad M_{1Z_1}\oplus\kappa_{1Z_1} (\mathbf{X}|_{S_{Z_1}}))\notag\\
        &\, = H (  M_{1\overline{Z}_1}\oplus\kappa_{1\overline{Z}_1}(\mathbf{X}|_{S_{\overline{Z}_1}})|Z_1, Y_1^n, Y_2^n,  S_{Z_1},S_{\overline{Z}_1}, \kappa_{1Z_1}, \kappa_{1\overline{Z}_1},\notag\\
        & \qquad \qquad \qquad \qquad \qquad \qquad \qquad \qquad \qquad \qquad \qquad M_{1Z_1}\oplus\kappa_{1Z_1} (\mathbf{X}|_{S_{Z_1}}))\notag\\
        & \quad - H (\kappa_{1\overline{Z}_1}(\mathbf{X}|_{S_{\overline{Z}_1}})|M_{1\overline{Z}_1},Z_1, Y_1^n, Y_2^n,  S_{Z_1},S_{\overline{Z}_1}, \kappa_{1Z_1}, \kappa_{1\overline{Z}_1},\notag\\
        & \qquad \qquad \qquad \qquad \qquad \qquad \qquad \qquad \qquad \qquad \qquad M_{1Z_1}\oplus\kappa_{1Z_1} (\mathbf{X}|_{S_{Z_1}}))\notag\\
        & \overset{(c)}{\leq} n (r_1-\lambda')\notag\\
        & \quad -  H (\kappa_{1\overline{Z}_1}(\mathbf{X}|_{S_{\overline{Z}_1}})|M_{1\overline{Z}_1},Z_1, Y_1^n, Y_2^n,  S_{Z_1},S_{\overline{Z}_1}, \kappa_{1Z_1}, \kappa_{1\overline{Z}_1}, \notag\\
        & \qquad \qquad \qquad \qquad \qquad \qquad \qquad \qquad \qquad \qquad \qquad M_{1Z_1}\oplus\kappa_{1Z_1} (\mathbf{X}|_{S_{Z_1}}))\notag\\
        & \overset{(d)}{=} n (r_1-\lambda') -  H (\kappa_{1\overline{Z}_1}(\mathbf{X}|_{S_{\overline{Z}_1}})|\kappa_{1\overline{Z}_1}, Y_1^n|_{S_{\overline{Z}_1}})\notag\\
        & \overset{(e)}{\leq} n (r_1-\lambda') - n(r_1-\lambda') + \frac{2^{-n\lambda'}}{\ln 2}\notag\\
        & \, = \frac{2^{-n\lambda'}}{\ln 2},
    \end{align}
    
    where $(a)$ is due to $\mathbf{C}_1 = (S_0,S_1, \kappa_{10}, \kappa_{11}, M_{10}\oplus\kappa_{10} (\mathbf{X}|_{S_0}), M_{11}\oplus\kappa_{11}(\mathbf{X}|_{S_1}))$, $(b)$ follows from the fact that $M_{1\overline{Z}_1} \perp Z_1, Y_1^n, Y_2^n,  S_{Z_1},S_{\overline{Z}_1}, \kappa_{1Z_1}, \kappa_{1\overline{Z}_1}, M_{1Z_1}\oplus\kappa_{1Z_1} (\mathbf{X}|_{S_{Z_1}})$, $(c)$ follows since $H (  M_{1\overline{Z}_1}\oplus\kappa_{1\overline{Z}_1}(\mathbf{X}|_{S_{\overline{Z}_1}}))$ is $n(r_1-\lambda')$ bits long, $(d)$ follows from the independence of $\kappa_{1\overline{Z}_1}(\mathbf{X}|_{S_{\overline{Z}_1}})$ and\linebreak $M_{1\overline{Z}_1},Z_1, Y_1^n, Y_2^n,  S_{Z_1},S_{\overline{Z}_1}, \kappa_{1Z_1}, M_{1Z_1}\oplus\kappa_{1Z_1} (\mathbf{X}|_{S_{Z_1}})$ given $\kappa_{1\overline{Z}_1},Y_1^n|_{S_{\overline{Z}_1}}$,\linebreak and in $(e)$ we directly applied Lemma \ref{entropy hash}, so that $H_2(\mathbf{X}|_{S_{\overline{Z}_1}}|Y_1^n|_{S_{\overline{Z}_1}}=y_1^n|_{s_{\overline{z}_1}}) = \Delta(y_1^n|_{s_{\overline{z}_1}}) \geq n r_1$ since the size of the set $S_{\overline{Z}_1}$ is at least $n r_1$. Also, $l$ in Lemma \ref{entropy hash} is: $l = H (  M_{1\overline{Z}_1}\oplus\kappa_{1\overline{Z}_1}(\mathbf{X}|_{S_{\overline{Z}_1}})) = n(r_1-\lambda')$. 
    
    \parbreak Then, Eq. \eqref{final2: BC-nColl} tends to: 
    \begin{align*}
        \lim_{n\rightarrow\infty} I (  M_{1\overline{Z}_1};Z_1,Y_1^n, Y_2^n, \mathbf{C}_1)& \leq \lim_{n\rightarrow\infty} \frac{2^{-n\lambda'}}{\ln 2} = 0.
    \end{align*}
    The second term of \eqref{eq: mid} can be proved in the same fashion knowing that, $\mathbf{C}_2 = (S'\triangleq (S'_0,S'_1), \kappa_{20}, \kappa_{21}, M_{20}\oplus\kappa_{20} ((\mathbf{X}|_{S'})|_{S'_0}), M_{21}\oplus\kappa_{21}((\mathbf{X}|_{S'})|_{S'_1}))$. Then, \eqref{eq: mid} tends to zero as $n\to\infty$. 
    \item \begin{align}\label{eq: ind}
    I (Z_i; U, V_{\overline{i}}) & = I(Z_i;M_{i0},M_{i1},M_{\overline{i}0},M_{\overline{i}1}, Z_{\overline{i}}, X^n, Y_{\overline{i}}^n, \mathbf{C})\notag\\
    & = I(Z_i;M_{i0},M_{i1},M_{\overline{i}0},M_{\overline{i}1}, Z_{\overline{i}}, X^n, Y_{\overline{i}}^n, \mathbf{C}_i,\mathbf{C}_{\overline{i}})\notag\\
    & = I(Z_i;M_{i0},M_{i1},X^n, Y_{\overline{i}}^n, \mathbf{C}_i)\notag\\
    & \quad + I(Z_i;Z_{\overline{i}},M_{\overline{i}0},M_{\overline{i}1},\mathbf{C}_{\overline{i}}|M_{i0},M_{i1},X^n,Y_{\overline{i}},\mathbf{C}_i)\notag\\
    & = I(Z_i;M_{i0},M_{i1},X^n, Y_{\overline{i}}^n, \mathbf{C}_i).
    \end{align}
    Without loss of generality, consider the case of $i=1$:
\begin{align}\label{eq: perfect}
        I(Z_1;U,V_2)& = I(Z_1;M_{10},M_{11},X^n, Y_{2}^n, S_0,S_1, \kappa_{10}, \kappa_{11},M_{10}\oplus\kappa_{10} (\mathbf{X}|_{S_0}),\notag\\
        & \qquad\qquad\qquad\qquad\qquad\qquad\qquad\qquad\qquad M_{11}\oplus\kappa_{11}(\mathbf{X}|_{S_1}))\notag\\
        & = I(Z_1;M_{10},M_{11},X^n, Y_{2}^n, S_0,S_1, \kappa_{10}, \kappa_{11}, \kappa_{10} (\mathbf{X}|_{S_0}), \kappa_{11}(\mathbf{X}|_{S_1}))\notag\\
        & = I(Z_1;X^n, Y_2^n, S_0,S_1, \kappa_{10}, \kappa_{11}, \kappa_{10} (\mathbf{X}|_{S_0}), \kappa_{11}(\mathbf{X}|_{S_1}))\notag\\
        & \quad + I(Z_1;M_{10},M_{11}|X^n, Y_2^n, S_0,S_1, \kappa_{10}, \kappa_{11}, \kappa_{10} (\mathbf{X}|_{S_0}), \kappa_{11}(\mathbf{X}|_{S_1})) \notag\\
        & = I(Z_1;X^n, Y_2^n, S_0,S_1, \kappa_{10}, \kappa_{11}, \kappa_{10} (\mathbf{X}|_{S_0}), \kappa_{11}(\mathbf{X}|_{S_1}))\notag\\
        & = I(Z_1;X^n, Y_2^n, S_0,S_1)\notag\\
        & \quad + I(Z_1; \kappa_{10}, \kappa_{11}, \kappa_{10} (\mathbf{X}|_{S_0}), \kappa_{11}(\mathbf{X}|_{S_1})|X^n, Y_2^n, S_0,S_1)\notag\\
        & = I(Z_1;X^n, Y_2^n, S_0,S_1)\notag\\
        & = I(Z_1; S_0,S_1)\notag\\
        & = 0,
    \end{align}
    where the last equality holds since both sub-BECs act independently on each input bit and $|S_{0}|=|S_{1}|$. Similarly, for the case of $i=2$, we have:
    \begin{align*}
        I(Z_2;U,V_1)&=  I(Z_2;M_{20},M_{21}, X^n, Y_1^n, S'_0,S'_1, \kappa_{20},\kappa_{21},M_{20}\oplus\kappa_{20} ((\mathbf{X}|_{S'})|_{S'_0}),\notag\\
        & \qquad\qquad\qquad\qquad\qquad\qquad\qquad\qquad\qquad M_{21}\oplus\kappa_{21}((\mathbf{X}|_{S'})|_{S'_1}))\\
        & = I(Z_2;S'_0,S'_1)\\
        & = 0,
    \end{align*}
    where it follows from the same steps as the case of $i=1$. The perfect secrecy demonstrated above conforms entirely to the statements in Remark \ref{rem} and Corollary \ref{cor}.
    
    \item \begin{align*}
        I(Z_1, Z_2 ; U) &= I(Z_1;U) + I(Z_2;U|Z_1)\\
        & = I(Z_1;U) + I(Z_2;U,Z_1)\\
        & = I(Z_1;U) + I(Z_2;U)\\
        & = I(Z_1 ; M_{10}, M_{11}, M_{20}, M_{21}, X^n, \mathbf{C}) \\
        & \quad + I(Z_2 ; M_{10}, M_{11}, M_{20}, M_{21}, X^n, \mathbf{C})\\
        & \overset{(a)}{=} I(Z_1 ; M_{10}, M_{11}, X^n, \mathbf{C}_1) \\
        & \quad + I(Z_2 ; M_{20}, M_{21}, X^n, \mathbf{C}_2)\\
        & = I(Z_1 ; M_{10}, M_{11}, X^n, S_0, S_1, S', \kappa_0, \kappa_1, M_{10}\oplus\kappa_0(\mathbf{X}|_{S_0}),\notag\\
        & \qquad\qquad\qquad\qquad\qquad\qquad\qquad\qquad\qquad M_{11}\oplus\kappa_1(\mathbf{X}|_{S_1})) \\
        & \quad + I(Z_2 ; M_{20}, M_{21}, X^n, \mathbf{C}_2)\\
        &= I(Z_1 ; M_{10}, M_{11}, X^n, S_0, S_1, S', \kappa_0, \kappa_1, \kappa_0(\mathbf{X}|_{S_0}),\kappa_1(\mathbf{X}|_{S_1}))\\
        & \quad + I(Z_2 ; M_{20}, M_{21}, X^n, \mathbf{C}_2)\\
        & = I(Z_1 ; M_{10}, M_{11}, X^n, S_0, S_1, S', \kappa_0, \kappa_1) \\
        & \quad + I(Z_2 ; M_{20}, M_{21}, X^n, \mathbf{C}_2)\\
        & \overset{(b)}{=} I(Z_1 ; X^n, S_0, S_1, S') \\
        & \quad + I(Z_2 ; M_{20}, M_{21}, X^n, \mathbf{C}_2)\\
        &\overset{(c)}{=} I(Z_1 ; S_0, S_1, S') \\
        & \quad + I(Z_2 ; M_{20}, M_{21}, X^n, \mathbf{C}_2)\\
        &\overset{(d)}{=} 0,
    \end{align*}
    where $(a)$ is due to the fact that $Z_i$ is independent of $M_{\overline{i}0}, M_{\overline{i}1}, \mathbf{C}_{\overline{i}}$, $(b)$ follows since $M_{10}, M_{11}, \kappa_0, \kappa_1 \perp (Z_1, X^n, S_0, S_1, S')$, $(c)$ follows since $X^n \perp (Z_1, S_0, S_1, S')$, and $(d)$ follows since the channel acts independently on each input bit and $|S_0| = |S_1|$ for the first term, and the second term can be similarly proved to be zero.
    \item \begin{align}\label{eq: dep2}
        I &(M_{i0},M_{i1},M_{\overline{i}\overline{Z}_{\overline{i}}}, Z_i; V_{\overline{i}})_{i\in\{1,2\}}\notag\\
        &\,= I (M_{i0},M_{i1}, Z_i; V_{\overline{i}}) + I (M_{\overline{i}\overline{Z}_{\overline{i}}};V_{\overline{i}}|M_{i0},M_{i1}, Z_i)\notag\\
        &\,= I (M_{i0},M_{i1}, Z_i; Z_{\overline{i}},Y^n_{\overline{i}},\mathbf{C}) + I (M_{\overline{i}\overline{Z}_{\overline{i}}};V_{\overline{i}},M_{i1},M_{i2}, Z_i)\notag\\
        &\,= I (M_{i0},M_{i1}, Z_i; Z_{\overline{i}},Y^n_{\overline{i}},\mathbf{C}_i) + I (M_{i0},M_{i1}, Z_i;\mathbf{C}_{\overline{i}}| Z_{\overline{i}},Y^n_{\overline{i}},\mathbf{C}_i)\notag\\
        & \qquad\qquad\qquad\qquad\qquad\qquad\quad+ I (M_{\overline{i}\overline{Z}_{\overline{i}}};V_{\overline{i}},M_{i0},M_{i1}, Z_i )\notag\\
        &\overset{(a)}{=} I (M_{i0},M_{i1}, Z_i; Z_{\overline{i}},Y^n_{\overline{i}},\mathbf{C}_i) + I (M_{\overline{i}\overline{Z}_{\overline{i}}};V_{\overline{i}},M_{i0},M_{i1}, Z_i )\notag\\
        &\overset{(b)}{=} I (M_{i0},M_{i1}, Z_i; Z_{\overline{i}},Y^n_{\overline{i}},\mathbf{C}_i) + I (M_{\overline{i}\overline{Z}_{\overline{i}}};V_{\overline{i}}, Z_i)\notag\\
        &\overset{(c)}{\leq} I (M_{i0},M_{i1}, Z_i; Z_{\overline{i}},Y^n_{\overline{i}},\mathbf{C}_i) + I (M_{\overline{i}\overline{Z}_{\overline{i}}};V_i,V_{\overline{i}})\notag\\
        & \overset{(d)}{=} I (M_{i0},M_{i1}, Z_i; Z_{\overline{i}},Y^n_{\overline{i}},\mathbf{C}_i)\notag\\
        &\, = I(Z_i;Z_{\overline{i}}, Y_{\overline{i}}^n,\mathbf{C}_i) + I (M_{i0},M_{i1}; Z_{\overline{i}},Y^n_{\overline{i}},\mathbf{C}_i|Z_i)\notag\\
        & \overset{(e)}{=} I  (M_{i0},M_{i1}; Z_{\overline{i}},Y^n_{\overline{i}},\mathbf{C}_i|Z_i)\notag\\
        &\overset{(f)}{=} I  (M_{10},M_{11}; Z_{2},Y^n_{2},\mathbf{C}_1|Z_1)\notag\\
        & \,=  I  (M_{10},M_{11}; Z_{2},Y^n_{2},S_0,S_1, \kappa_{10}, \kappa_{11}, M_{10}\oplus\kappa_{10} (\mathbf{X}|_{S_0}),\notag\\
        & \qquad\qquad\qquad\qquad\qquad\qquad\qquad\qquad\qquad M_{11}\oplus\kappa_{11}(\mathbf{X}|_{S_1}))|Z_1)\notag\\
        & \,=  I  (M_{10},M_{11};Y^n_{2},S_0,S_1, \kappa_{10}, \kappa_{11}, M_{10}\oplus\kappa_{10} (\mathbf{X}|_{S_0}), M_{11}\oplus\kappa_{11}(\mathbf{X}|_{S_1}))|Z_1)\notag\\
        & \,=  I  (M_{10},M_{11}; M_{10}\oplus\kappa_{10} (\mathbf{X}|_{S_0}), M_{11}\oplus\kappa_{11}(\mathbf{X}|_{S_1}))|Z_1, Y^n_{2},S_0,S_1, \kappa_{10}, \kappa_{11})\notag\\
        & \,=  H  ( M_{10}\oplus\kappa_{10} (\mathbf{X}|_{S_0}), M_{11}\oplus\kappa_{11}(\mathbf{X}|_{S_1}))|Z_1, Y^n_{2},S_0,S_1, \kappa_{10}, \kappa_{11})\notag\\
        & \quad - H  ( \kappa_{10} (\mathbf{X}|_{S_0}), \kappa_{11}(\mathbf{X}|_{S_1}))|Z_1, M_{10}, M_{11}, Y^n_{2},S_0,S_1, \kappa_{10}, \kappa_{11})\notag\\
        & \overset{(g)}{\leq} 2n(r_1-\lambda')\notag\\
        &\quad - H  ( \kappa_{10} (\mathbf{X}|_{S_0}), \kappa_{11}(\mathbf{X}|_{S_1}))|Z_1, M_{10}, M_{11}, Y^n_{2},S_0,S_1, \kappa_{10}, \kappa_{11})\notag\\
        & \overset{(h)}{\leq}  \frac{2^{-n\lambda'}}{\ln 2},
    \end{align}
    where $(a)$ and $(b)$ follow from the Markov chains $M_{i0},M_{i1},Z_i-Z_{\overline{i}},Y_{\overline{i}}^n,\mathbf{C}_i-\mathbf{C}_{\overline{i}}$, and $M_{\overline{i}\overline{Z}_{\overline{i}}}-V_{\overline{i}},Z_i-M_{i0},M_{i1}$, respectively, $(c)$ follows since the second term asymptotically tends to zero, followed immediately by \eqref{goals: BC-Coll-2}, $(d)$ follows from \eqref{goals: BC-Coll-3}, $(e)$ is due to \eqref{goals: BC-Coll-4}, in $(f)$ we set $i=1$ without loss of generality, $(g)$ is due to this fact that $\kappa_{10} (\mathbf{X}|_{S_0})$ and $ \kappa_{11}(\mathbf{X}|_{S_1})$ are $n(r_i-\lambda')$ bits long, and $(h)$ follows from the same argument as \eqref{final2: BC-nColl}.

    \parbreak
    Now it is straightforward to show that:
    \[
    \lim_{n\to\infty}   I (M_{i0},M_{i1},M_{\overline{i}\overline{Z}_{\overline{i}}}, Z_i; V_{\overline{i}})_{i\in\{1,2\}} = \lim_{n\to\infty} \frac{2^{-n\lambda'}}{\ln 2} = 0.
    \]
    \end{itemize}
    
\section{Proof of Proposition \ref{prop2}\label{app: thm: lower: coll}}
Since Protocol 2 is a successive protocol comprising two sub-OT protocols (Protocol-I, Protocol-II), using a parameter $\gamma \in [0,1]$, 
one can construct a new protocol by \emph{time-sharing}. Specifically, Protocol-I is 
executed over $\gamma n$ channel uses, followed by Protocol-II over the remaining 
$(1-\gamma)n$ channel uses. If Protocol-I and Protocol-II achieve rates $R_1$ and $R_2$, 
respectively, then the resulting protocol achieves the convex combination $R = \gamma R_1 + (1-\gamma) R_2$. 

\parbreak We must bound two conditional smooth min entropy quantities:
\begin{equation}\label{eq: first'}
    H^\varepsilon_{\infty}(\mathbf{X}|_{S_{1\overline{Z}_1}}|h_{10}(R_{10}, \mathbf{X}|_{S_{10}}), h_{11}(R_{11}, \mathbf{X}|_{S_{11}}), Y_1^n, R^{(1)}, T^{(1)}),
\end{equation}
\begin{equation}\label{eq: second'}
    H^\varepsilon_{\infty}(\mathbf{X}|_{S_{2\overline{Z}_2}}|h_{20}(R_{20}, \mathbf{X}|_{S_{20}}), h_{21}(R_{21}, \mathbf{X}|_{S_{21}}), Y_2^n, R^{(2)}, T^{(2)}),
\end{equation}
Based on \cite{Winter1}, we have (Protocol-I $\to$ Protocol-II): 
\begin{align*}
    H^\varepsilon_{\infty}(\mathbf{X}|_{S_{1\overline{Z}_1}}|h_{10}(R_{10}, \mathbf{X}|_{S_{10}}), h_{11}(R_{11}, \mathbf{X}|_{S_{11}}), Y_1^n, R^{(1)}, T^{(1)}) &\geq (1-p_1)nH(X)\notag\\
    & \quad- s_1 n - \delta n,\\
    H^\varepsilon_{\infty}(\mathbf{X}|_{S_{2\overline{Z}_2}}|h_{20}(R_{20}, \mathbf{X}|_{S_{20}}), h_{21}(R_{21}, \mathbf{X}|_{S_{21}}), Y_2^n, R^{(2)}, T^{(2)}) &\geq (1-p_2)nH(X)\notag\\
    & \quad - s_2 n - \delta n,
\end{align*}
for any $\delta \geq (\alpha+\alpha' + 2\eta +4\sqrt{\alpha})> 0$. Due to the Chernoff bound, we know that the probability that Bob aborts the protocol in step $(2)$ tends to zero as $n\rightarrow \infty$. Protocol 2 fails in step $(5)$, if there is more than one, such as $(\hat{\mathbf{x}}|_{S_{Z_i}})$ where $h_i(\mathbf{X}|_{S_{Z_i}}) = h_i(\hat{\mathbf{X}}|_{S_{Z_i}})$. We know that if all players are honest, then $(\mathbf{Q}_1|_{S_{Z_1}},\mathbf{Q}_2|_{S'_{Z_2}}) =( \mathbf{Y}_1|_{S_{Z_1}},\mathbf{Y}_2|_{S'_{Z_2}})$ with probability exponentially close to one and the number of sequences $\hat{\mathbf{x}}|_{S_{1Z_1},S_{2Z_2}}$ jointly typical with $(\mathbf{q}_1|_{S_{Z_1}},\mathbf{q}_2|_{S'_{Z_2}})$ can be bounded from above so that if one of the sequences $\hat{\mathbf{x}}|_{S_{Z_1}}$ is not typical with $(\mathbf{y}_1|_{S_{Z_{1}}}, \mathbf{y}_2|_{S'_{Z_2} })$: $2^{\lvert S_{Z_1} \rvert(H(X,Y_1,Y_2)- H({Y_1,Y_2})+ \delta'}) =  2^{n(1-p_1)(H(X|Y_1,Y_2)+\delta')}$, $\delta'>0$. From \eqref{eq: hashs}, we know that $p\leq 2^{-s_1n}2^{(1-p_1)n(H(X|Y_1,Y_2)+\delta')}$, then $s_1 > (1-p_1) H(X|Y_1,Y_2)$. Similarly, for Protocol-II, we have: $s_2 > (1-p_2)(2p_1-1) H(X|Y_1,Y_2)$ since the unused segment
of Alice’s transmissions is completely erased for Bob-1 and is $\lvert S'\rvert\approx n-2(1-p_1)n = (2p_1-1)n$ bits long. Finally, by \eqref{Lower}, proved by Ahlswede and Csiszár, and  for the order of Protocol-I $\to$ Protocol-II, the rate region is:
\begin{align*}
\mathcal{R}_{\text{I$\to$II}} = \left\{\ 
\begin{aligned}
R_1 &\leq p_2.\min\{p_1,1 - p_1\} \\
R_2 &\leq (2p_1-1) .\min\{p_2,1 - p_2\}
\end{aligned}
\right\},
\end{align*}
and for the opposite order: 
\begin{align*}
\mathcal{R}_{\text{II$\to$I}} = \left\{\ 
\begin{aligned}
R_1 &\leq (2p_2-1).\min\{ p_1,1 - p_1\} \\
R_2 &\leq p_1.\min \{p_2, 1 - p_2\}
\end{aligned}
\right\}.
\end{align*}
Finally, the rate region for Protocol 2 can be obtained as follows:
single-user OT for Bob-1 can achieve rate $R_1\leq p_2.\min\{p_1,1-p_1\}$ and consumes exactly a $\min\{p_1,1-p_1\}$-fraction of indices that are erased at Bob-1 (as masks). Symmetrically, a single-user OT for Bob-2 can achieve $R_2\leq p_1.\min\{p_2,1-p_2\}$ and consumes a $\min\{p_2,1-p_2\}$-fraction of indices that are erased at Bob-2. Consider the two-phase time-sharing Protocol 2; the order matters because the second phase can only use the leftover erasures of the other receiver.
\begin{itemize}
    \item Protocol-I $\to$ Protocol-II: Phase 1 gives $R_1 = p_2.\min\{p_1,1-p_1\}$ and uses $1-p_1$ of Bob-1’s erased indices. Leftover Bob-erased fraction is $p_1-(1-p_1) = (2p_1-1)$, so Bob-2’s Phase 2 rate is at most $(2p_1-1).\min\{p_2,1-p_2\}$. Thus 
    \[
    R_1 + R_2 \leq p_2.\min\{p_1,1-p_1\} + (2p_1-1).\min\{p_2,1-p_2\}. 
    \]
    \item  Protocol-II $\to$ Protocol-I: Phase 1 gives $R_2 = p_1.\min\{p_2,1-p_2\}$ and uses $\min\{p_2,1-p_2\}$ of Bob-erased indices. Leftover Bob-2 erased fraction is $p_2-(1-p_2) = (2p_2-1)$, so Bob-1’s Phase 2 rate is at most $(2p_2-1).\min\{p_1,1-p_1\}$. Thus 
    \[
    R_1 + R_2 \leq p_1.\min\{p_2,1-p_2\} + (2p_2-1).\min\{p_1,1-p_1\}.
    \]
\end{itemize}
Finally, we get: 
\[
R_1 + R_2 \leq p_1.\min\{p_2,1-p_2\} + p_2.\min\{p_1,1-p_1\} - \min\{p_1,1-p_1\}.\min\{p_2,1-p_2\}.
\]
This completes the proof. 

\bibliographystyle{IEEEtran}
\bibliography{references}

@article{Korner,
  author  = {Csisz{\'a}r, Imre and K{\"o}rner, J{\'a}nos},
  title   = {Broadcast Channels with Confidential Messages},
  journal = {IEEE Transactions on Information Theory},
  volume  = {24},
  number  = {3},
  pages   = {339--348},
  month   = may,
  year    = {1978},
  doi     = {10.1109/TIT.1978.1055892}
}

@book{Tomamichel1,
  author    = {Marco Tomamichel},
  title     = {Quantum Information Processing with Finite Resources: Mathematical Foundations},
  series    = {SpringerBriefs in Mathematical Physics},
  volume    = {5},
  publisher = {Springer},
  year      = {2016}
}

@article{Tomamichel2,
  author  = {Alexander Vitanov and Fr{\'e}d{\'e}ric Dupuis and Marco Tomamichel and Renato Renner},
  title   = {Chain Rules for Smooth Min- and Max-Entropies},
  journal = {IEEE Transactions on Information Theory},
  volume  = {59},
  number  = {5},
  pages   = {2603--2612},
  year    = {2013},
  doi     = {10.1109/TIT.2013.2238656}
}

@techreport{Rabin1,
  author      = {Rabin, Michael O.},
  title       = {How to Exchange Secrets by Oblivious Transfer},
  institution = {Aiken Computation Laboratory, Harvard University},
  address     = {Cambridge, MA, USA},
  number      = {TR-81},
  year        = {1981}
}

@inproceedings{Kilian1,
  author    = {Kilian, Joe},
  title     = {Founding Cryptography on Oblivious Transfer},
  booktitle = {Proceedings of the 20th Annual ACM Symposium on Theory of Computing (STOC)},
  address   = {Chicago, IL, USA},
  pages     = {20--31},
  year      = {1988}
}

@article{EGL,
  author  = {Even, Shimon and Goldreich, Oded and Lempel, Abraham},
  title   = {A Randomized Protocol for Signing Contracts},
  journal = {Communications of the ACM},
  volume  = {28},
  number  = {6},
  pages   = {637--647},
  month   = jun,
  year    = {1985}
}

@inproceedings{Joe,
  author    = {Cr{\'e}peau, Claude and Kilian, Joe},
  title     = {Achieving Oblivious Transfer Using Weakened Security Assumptions},
  booktitle = {Proceedings of the 29th IEEE Symposium on Foundations of Computer Science (FOCS)},
  address   = {White Plains, NY, USA},
  pages     = {42--52},
  year      = {1988},
  doi       = {10.1109/SFCS.1988.21920}
}

@article{Shannon,
  author  = {Shannon, Claude E.},
  title   = {Communication Theory of Secrecy Systems},
  journal = {Bell System Technical Journal},
  volume  = {28},
  number  = {4},
  pages   = {656--715},
  month   = oct,
  year    = {1949}
}

@article{Wyner,
  author  = {Wyner, Aaron D.},
  title   = {The Wire-Tap Channel},
  journal = {Bell System Technical Journal},
  volume  = {54},
  number  = {8},
  pages   = {1355--1387},
  month   = oct,
  year    = {1975},
  doi     = {10.1002/j.1538-7305.1975.tb02040.x}
}

@inproceedings{Hadi-limit,
  author    = {Aghaee, Hadi and Akhbari, Bahareh and Deppe, Christian},
  title     = {On Information Theoretical Limits of Oblivious Transfer},
  booktitle = {2025 IEEE Globecom Workshops (GLOBECOM Workshops)},
  publisher = {IEEE},
  year      = {2025},
  note      = {Long version of the ICC 2026 paper}
}

@inproceedings{Hadi-ICC26,
  author    = {Aghaee, Hadi and Deppe, Christian and Boche, Holger},
  title     = {Oblivious Transfer over Discrete Memoryless Broadcast Channels},
  booktitle = {Proceedings of the IEEE International Conference on Communications (ICC)},
  publisher = {IEEE},
  year      = {2026}
}

@inproceedings{Hadi-ISIT26,
  author    = {Aghaee, Hadi and Deppe, Christian and Boche, Holger},
  title     = {On (Im)possibility of Oblivious Transfer via Noisy Multiple Access Channels and Non-Signaling Correlations},
  booktitle = {Proceedings of the IEEE International Symposium on Information Theory (ISIT)},
  publisher = {IEEE},
  year      = {2026}
}

@inproceedings{Hadi-EW26,
  author    = {Aghaee, Hadi and Trushechkin, Anton and Ferrara, Roberto and Deppe, Christian and Boche, Holger},
  title     = {On Oblivious Transfer via Noisy Multiple Access Channels with Colluding Parties},
  booktitle = {European Wireless 2026},
  year      = {2026}
}

@inproceedings{Dodis,
  author    = {Dodis, Yevgeniy and Ong, Shien Jin and Prabhakaran, Manoj and Sahai, Amit},
  title     = {On the (Im)possibility of Cryptography with Imperfect Randomness},
  booktitle = {Proceedings of the 45th Annual IEEE Symposium on Foundations of Computer Science (FOCS)},
  pages     = {196--205},
  year      = {2004},
  doi       = {10.1109/FOCS.2004.44}
}

@article{SV,
  author  = {Santha, Mikl{\'o}s and Vazirani, Umesh V.},
  title   = {Generating Quasi-Random Sequences from Semi-Random Sources},
  journal = {Journal of Computer and System Sciences},
  volume  = {33},
  number  = {1},
  pages   = {75--87},
  year    = {1986}
}

@article{Winter1,
  author  = {Nascimento, Anderson C. A. and Winter, Andreas},
  title   = {On the Oblivious Transfer Capacity of Noisy Resources},
  journal = {IEEE Transactions on Information Theory},
  volume  = {54},
  number  = {6},
  pages   = {2572--2581},
  month   = jun,
  year    = {2008}
}

@inproceedings{Winter3,
  author    = {Nascimento, Anderson C. A. and Winter, Andreas},
  title     = {On the Oblivious Transfer Capacity of Noisy Correlations},
  booktitle = {Proceedings of the IEEE International Symposium on Information Theory (ISIT)},
  address   = {Seattle, WA, USA},
  pages     = {1871--1875},
  year      = {2006}
}

@inproceedings{Rudolf1,
  author    = {Ahlswede, Rudolf and Csisz{\'a}r, Imre},
  title     = {On Oblivious Transfer Capacity},
  booktitle = {Proceedings of the IEEE Information Theory Workshop on Networking and Information Theory (ITW)},
  address   = {Volos, Greece},
  pages     = {1--3},
  year      = {2009}
}

@inproceedings{Crepeau3,
  author    = {Cr{\'e}peau, Claude and Morozov, Kirill and Wolf, Stefan},
  title     = {Efficient Unconditional Oblivious Transfer from Almost Any Noisy Channel},
  booktitle = {Proceedings of SCN},
  series    = {Lecture Notes in Computer Science},
  volume    = {3352},
  address   = {Berlin, Germany},
  publisher = {Springer-Verlag},
  pages     = {47--59},
  year      = {2005}
}

@inproceedings{Imai,
  author    = {Imai, Hideki and Morozov, Kirill and Nascimento, Anderson C. A.},
  title     = {On the Oblivious Transfer Capacity of the Erasure Channel},
  booktitle = {Proceedings of the IEEE International Symposium on Information Theory (ISIT)},
  address   = {Seattle, WA, USA},
  pages     = {1428--1431},
  year      = {2006}
}

@inproceedings{Watanabe2,
  author    = {Suda, Shun and Watanabe, Shun and Yamaguchi, Hayato},
  title     = {An Improved Lower Bound on Oblivious Transfer Capacity via Interactive Erasure Emulation},
  booktitle = {Proceedings of the IEEE International Symposium on Information Theory (ISIT)},
  pages     = {1872--1877},
  year      = {2024},
  doi       = {10.1109/ISIT57864.2024.10619607}
}

@article{Watanabe3,
  author  = {Suda, Shun and Watanabe, Shun},
  title   = {An Improved Lower Bound on Oblivious Transfer Capacity Using Polarization and Interaction},
  journal = {arXiv preprint arXiv:2501.11883},
  year    = {2025},
  note    = {[cs.IT]}
}

@inproceedings{Hadi2,
  author    = {Aghaee, Hadi and Deppe, Christian},
  title     = {On Oblivious Transfer Capacity of Noisy Multiple Access Channel},
  booktitle = {Proceedings of the IEEE International Symposium on Information Theory (ISIT 2025)},
  year      = {2025},
  note      = {To be published}
}

@article{Hadi1,
  author  = {Aghaee, Hadi and Deppe, Christian},
  title   = {Network Oblivious Transfer: Limits and Capacities},
  journal = {arXiv preprint arXiv:2501.17021},
  year    = {2025},
  note    = {[cs.IT]}
}

@article{Bennet,
  author  = {Bennett, Charles H. and Brassard, Gilles and Cr{\'e}peau, Claude and Maurer, Ueli M.},
  title   = {Generalized Privacy Amplification},
  journal = {IEEE Transactions on Information Theory},
  volume  = {41},
  number  = {6},
  pages   = {1915--1923},
  month   = nov,
  year    = {1995}
}

@phdthesis{Holenstein,
  author  = {Holenstein, Thomas},
  title   = {Strengthening Key Agreement Using Hard-Core Sets},
  school  = {Swiss Federal Institute of Technology (ETH Zurich)},
  address = {Zurich, Switzerland},
  year    = {2006}
}

@article{Carter,
  author  = {Carter, J. Lawrence and Wegman, Mark N.},
  title   = {Universal Classes of Hash Functions},
  journal = {Journal of Computer and System Sciences},
  volume  = {18},
  number  = {2},
  pages   = {143--154},
  month   = apr,
  year    = {1979}
}

@incollection{Jurg,
  author    = {Wullschleger, J{\"u}rg},
  title     = {Oblivious-Transfer Amplification},
  booktitle = {Lecture Notes in Computer Science},
  address   = {Berlin, Germany},
  publisher = {Springer-Verlag},
  year      = {2007}
}

@article{mishra,
  author  = {Mishra, Manoj and Dey, Bikash Kumar and Prabhakaran, Vinod M. and Diggavi, Suhas N.},
  title   = {Wiretapped Oblivious Transfer},
  journal = {IEEE Transactions on Information Theory},
  volume  = {63},
  number  = {4},
  pages   = {2560--2595},
  month   = apr,
  year    = {2017},
  doi     = {10.1109/TIT.2017.2653180}
}

@inproceedings{Mishra2,
  author    = {Mishra, Manoj and Dey, Bikash Kumar and Prabhakaran, Vinod M. and Diggavi, Suhas N.},
  title     = {On the Oblivious Transfer Capacity Region of the Binary Erasure Broadcast Channel},
  booktitle = {Proceedings of the IEEE Information Theory Workshop (ITW)},
  address   = {Hobart, TAS, Australia},
  pages     = {237--241},
  year      = {2014},
  doi       = {10.1109/ITW.2014.6970828}
}
\end{document}